\theoremstyle{plain}
\newtheorem{theorem}{Theorem}[section]
\newtheorem{prop}[theorem]{Proposition}
\newtheorem{lemma}[theorem]{Lemma}
\newtheorem{corollary}[theorem]{Corollary}
\theoremstyle{definition}
\newtheorem{assu}[theorem]{Assumption}
\newtheorem{claim}[theorem]{Claim}
\newtheorem{fact}[theorem]{Fact}
\newtheoremstyle{boldremark}
    {\dimexpr\topsep/2\relax} 
    {\dimexpr\topsep/2\relax} 
    {}          
    {}          
    {\bfseries} 
    {.}         
    {.5em}      
    {}          
\theoremstyle{boldremark}
\newtheorem{brem}{Remark} 
\DeclareMathAlphabet{\pazocal}{OMS}{zplm}{m}{n}
\newcommand\scalemath[2]{\scalebox{#1}{\mbox{\ensuremath{\displaystyle #2}}}}
\renewcommand\thetheorem{\arabic{section}.\arabic{theorem}}
\DeclareMathAlphabet{\pazocal}{OMS}{zplm}{m}{n}
\newcommand{\boundellipse}[3]
{(#1) ellipse (#2 and #3)
}
\newcommand{\beq}{\begin{equation}} \newcommand{\eeq}{\end{equation}}
\renewcommand{\subsubsection}[1]{\noindent {\bf #1. }}  
\newcommand{\x}{\theta}
\newcommand{\g}{\bm{g}}
\newcommand{\G}{\bm{G}}
\newcommand{\X}{\Theta}
\renewcommand{\S}{\mathcal{S}}
\newcommand{\dg}{{k}_g}
\newcommand{\e}{\bm{e}}
\newcommand{\y}{\bm{y}}
\newcommand{\w}{\bm{w}}
\newcommand{\U}{{\bm{U}}}
\newcommand{\R}{{\bm{R}}}
\newcommand{\M}{\bm{M}}
\renewcommand{\a}{\bm{x}}
\newcommand{\ik}{{ti}}
\newcommand{\I}{\bm{I}}
\newcommand{\Z}{{\bm{{Z}}}}
\newcommand{\D}{{\bm{D}}}
\newcommand{\DD}{{\bm{F}}}
\newcommand{\Q}{{\bm{Q}}}
\newcommand{\SE}{\mathrm{SD}}  
\newcommand{\A}{\bm{X}}
\newcommand{\sg}{{\scalebox{.6}{(g)}}}
\newcommand{\ssg}{{\scalebox{.6}{g}}}
\newcommand{\sgp}{{{{\scalebox{.6}{{(}}}}{{\scalebox{.6}{{g}}}}'{{\scalebox{.6}{{)}}}}}}
\newcommand{\ssgp}{{{{\scalebox{.6}{{}}}}{{\scalebox{.6}{{g}}}}'{{\scalebox{.6}{{}}}}}}
\newcommand{\sone}{{\scalebox{.6}{{(1)}}}}
\newcommand{\ssone}{{\scalebox{.6}{{1}}}}
\newcommand{\sstwo}{{\scalebox{.6}{{2}}}}
\newcommand{\ssL}{{\scalebox{.6}{{L}}}}
\newcommand{\con}{{\footnotesize\mathrm{con}}}
\newcommand{\gd}{{\footnotesize\mathrm{gd}}}
\newcommand{\epsfin}{\eps}
\newcommand{\sj}{{\scalebox{.6}{{(j)}}}}
\newcommand{\gradU}{\mathrm{gradU}}
\newcommand{\hatgradU}{\widehat{\mathrm{gradU}}}
\newcommand{\conserr}{\mathrm{ConsErr}}
\newcommand{\err}{\mathrm{Err}}
\newcommand{\Uerr}{\mathrm{UconsErr}}
\newcommand{\svdeq}{\overset{\mathrm{SVD}}=} 
\newcommand{\qreq}{\overset{\mathrm{QR}}=} 
\newcommand{\bi}{\begin{itemize}} \newcommand{\ei}{\end{itemize}}
\newcommand{\ben}{\begin{enumerate}} \newcommand{\een}{\end{enumerate}}
\newcommand{\cblue}{\color{black}}
\renewcommand{\a}{\bm{x}}
\renewcommand{\b}{\bm{b}}
\newcommand{\B}{\bm{B}}
\newcommand{\V}{{\bm{V}}}
\newcommand{\W}{{\bm{W}}}
\newcommand{\z}{\bm{z}}
\newcommand{\indic}{\mathbbm{1}}
\newcommand{\C}{\bm{C}}
\newcommand{\tC}{\tilde{C}}
\newcommand{\Bstar}{{\B^\star}}   
\newcommand{\bstar}{\b^\star}             
\newcommand{\tb}{\rho}
\newcommand{\Xhat}{\hat\X}
\newcommand{\bhat}{\hat\b}
\newcommand{\Utilde}{\widetilde\U}
\newcommand{\init}{{\mathrm{init}}}
\newcommand{\Ustar}{\U^\star{}}
\newcommand{\Xstar}{\X^\star{}}
\newcommand{\xstar}{\x^\star}
\newcommand{\Vstar}{\V^\star{}}
\newcommand{\bSigma}{{\bm\Sigma^\star}}
\newcommand{\sigmin}{{\sigma_{\min}^\star}}
\newcommand{\sigmax}{{\sigma_{\max}^\star}}
\renewcommand{\P}{\bm{P}_{\footnotesize{\Ustar_\perp}}}
\newcommand{\E}{\mathbb{E}}
\newcommand{\norm}[1]{\left\|#1\right\|}
\newcommand{\pX}{\pazocal{X}}
\newcommand{\pY}{\pazocal{Y}}
\renewcommand{\V}{\bm{V}}
\newcommand{\bea}{\begin{eqnarray}}
\newcommand{\eea}{\end{eqnarray}}
\newcommand{\nn}{\nonumber}
\renewcommand{\forall}{\footnotesize\text{ for all }}
\newcommand{\eps}{\epsilon}
\newcommand{\ev}{\mathcal{E}}
\renewcommand{\bhat}{\b}  
  \renewcommand{\Xhat}{\X}
\newcommand{\sigmamin}{\sigma_{\min}}
\newcommand{\inperr}{\mathrm{InpErr}}
\newcommand{\epsconscalar}{\eps_{\con,sc}}
\newcommand{\trnc}{\mathrm{trnc}}
\newcommand{\inp}{\mathrm{in}}
\newcommand{\calG}{\mathcal{G}}
\newcommand{\calV}{\mathcal{V}}
\newcommand{\calE}{\mathcal{E}}
\newcommand{\N}{\mathcal{N}}
\newcommand{\out}{{\text{out}}}
\newcommand{\true}{\text{true}}
\newcommand{\bphi}{\bm{\Phi}}
\newcommand{\Upm}{\U}
\newcommand{\Rpm}{\R}
\newcommand{\Urand}{\tilde{\Upm}_\init}
\newcommand{\Urandorth}{\Upm_\init}
\renewcommand{\t}{{k}}
\newcommand{\Tconexp}{{ \frac{1}{\log(1/\gamma(\W))}}}
\begin{document}

\title{Decentralized  Communication-Efficient  Multi-Task Representation Learning}

\author{Shana Moothedath,~\IEEEmembership{Member,~IEEE}, Namrata Vaswani,~\IEEEmembership{Fellow,~IEEE,}
\thanks{S. Moothedath and N. Vaswani are with Electrical and Computer Engineering, Iowa State University. Email: $\{$mshana, namrata$\}$@iastate.edu.}
\thanks{This work is supported by NSF grant 2213069.}}

\maketitle

\begin{abstract}
\cblue
Representation learning enables effective learning in data-scarce settings by extracting shared features from related tasks. While well-studied in the centralized setting, decentralized representation learning remains underexplored. In decentralized settings, data is distributed across nodes that must collaboratively learn without a central coordinator.
In this work, we study decentralized multi-task learning under a shared low-dimensional representation.  We consider $T$ source tasks each with $n$ data points in $\mathbb{R}^d$. The goal is to recover the matrix $\Xstar:= [\xstar_1, \xstar_2, \ldots, \xstar_T] \in \mathbb{R}^{d \times T}$, which has rank $r \ll \min\{d,T\}$, from observations following the model $\y_\ik:= \a_\ik^\top \xstar_t,$ for $t=1,2,...T,$ and $i=1,2,\ldots, n.$
We present a novel alternating projected gradient descent and minimization algorithm for recovering a low-rank feature matrix in a decentralized fashion.
%
 We obtain constructive provable guarantees that provide a lower bound on the required sample complexity and an upper bound on the iteration complexity (total number of iterations needed to achieve a certain error level) of our proposed algorithm. This latter bound allows us to analyze the time and communication complexity of our algorithm and show that it is fast and communication-efficient.
%
 We performed numerical simulations to validate the performance of our algorithm and compared it with benchmarks.

\end{abstract}
\begin{IEEEkeywords}
Compressive sensing, multi-task learning, decentralized learning.
\end{IEEEkeywords}

\section{Introduction}
{\cblue Multi-task representation learning aims to learn a shared representation across related tasks, enabling models to leverage common structure and improve overall performance \cite{wang2016distributed}.}
By sharing knowledge across tasks, multi-task learning can lead to more effective models, especially when data is limited or expensive. 
{\cblue This paradigm has demonstrated remarkable success in various applications, such as natural language processing (GPT-2, GPT-3, Bert), vision domains, and robotics \cite{caruana1997multitask,bengio2013representation}, though its theoretical foundations remain underexplored.
Specifically, decentralized methods with provable guarantees remain largely underexplored.}
A prevalent assumption in the literature is the presence of a shared common representation among different tasks \cite{du2020few, tripuraneni2021provable, collins2021exploiting}. 
In this paper, we study decentralized multi-task learning when multiple tasks share a common set of low-dimensional features. In the decentralized setting, there is no central coordinating node and each task can only exchange information with a subset of tasks defined by a communication network.  Our aim is the following: 
{\em Given a set of diverse samples from $T$ different tasks that are distributed in a decentralized manner, how can we efficiently (and optimally) learn a common feature representation?}

Decentralized multi-task learning proves particularly useful in many applications where diverse tasks can be simultaneously addressed across decentralized entities in a privacy-preserved manner.
Decentralized learning provides a much faster and more scalable learning paradigm than its traditional counterpart (all nodes communicating with the center) in settings where the nodes are geographically distant from the center and the number of tasks is large. 
 Additionally, relying on a central server introduces the risk of a potential single point of failure.
In this work, we develop and analyze a decentralized gradient descent (GD) based algorithm for solving the multi-task problem.

%
%
\subsection{Problem Setting and Notations}
\subsubsection{Problem setting}
Consider $T$ tasks. Each task $t \in [T]$ is associated with a distribution $\mu_t$ over a joint data space $\pX \times \pY$, where $\pX$ is the input space and $\pY$ is the output space. We consider $\pX \subseteq \mathbb{R}^d$ and $\pY \subseteq \mathbb{R}$. For each task $t \in [T]$ we have access to $n$ i.i.d samples $(\a_\ik, \y_\ik)$ from $\mu_t$,
where the data follows the statistical model
\[\y_\ik:= \a_\ik^\top \xstar_t, \quad t=1,2,...T,\quad i=1,2,\ldots, n.\]
We represent the $n$ samples as an input matrix $\A_t \in \mathbb{R}^{n \times d}$ and an output vector $\y_t \in \mathbb{R}^n$.
Multi-task learning aims to simultaneously learn prediction functions for all $T$ tasks, to uncover an underlying shared property among these tasks, referred to as {\em representation}. Specifically, we consider a low-dimensional representation, where for each task $t \in [T]$, there is an unknown vector $\xstar_t \in \mathbb{R}^d$ and
\beq
\y_t:= \A_t \xstar_t, ~t=1,2,\ldots, T.
\label{ykvec}
\eeq
The goal is to recover a set of $T$ $d$-dimensional feature vectors, $\xstar_1, \xstar_2, \ldots, \xstar_T$ that are such that the $d \times T$ matrix  $\Xstar:= [\xstar_1, \xstar_2, \ldots, \xstar_T]$ has rank $r \ll \min\{d,T\}$.  
The regime of interest is $n < d$ since we consider data-scarcity, and the goal is to have to use as few samples $n$ as possible. 

In this paper, we consider a decentralized setting: there are a total of $L \le T$ distributed nodes/agents, and each node has access to a different subset of the task data $(\A_t, \y_t)$s.
We denote the set of indices of the $\y_t$s available at node $g$  by $\S_g$. The sets $\S_g$ are mutually disjoint and $\cup_{g=1}^L \S_g = [T]$. Here $[T]:=\{1,2,\dots, T\}$.
The $L$ nodes are connected by a communication network whose topology (connectivity) is specified by an undirected graph $\calG= (\calV, \calE)$, where $\calV$ is the set of $L$ nodes ($|\calV|=L$) and $\calE$ denotes the set of edges. The neighbor set of the $g$-th node is denoted by $\N_g (\calG)$, i.e., $\N_g (\calG) := \{j: (g, j) \in \calE\}$. 
Since there is no central coordinating node, each node can only exchange information with its neighbors.
No node can recover the entire matrix $\Xstar$. Node $g$ only recovers the feature vectors $\xstar_t$, for $t \in \S_g$, i.e., the sub-matrix $\Xstar^\sg:= [\xstar_t, \ t \in \S_g]$.

\subsubsection{Preliminaries}
Let us denote its reduced (rank $r$) SVD of $\Xstar$ , $\Xstar \svdeq \Ustar {\bSigma \V^{\star}} := \Ustar  \B^\star$,
where $\Ustar$ and ${\V^\star}^\top$ are matrices with orthonormal columns {\em (basis matrices)}, $\Ustar$ is  $d \times r$, $\V^\star$ is $r \times T$, and $\bSigma$ is an $r \times r$ diagonal matrix with non-negative entries (singular values). We let  $\B^\star:= \bSigma \V^{\star}=[\bstar_1,\ldots, \bstar_T]$.
We use $\sigmax$ and $\sigmin$ to denote the maximum and minimum singular values of $\bSigma$ and we define its condition number  as
 $\kappa:= \sigmax/\sigmin.$
%
%
%
%
Note that $\y_t$s are not global functions of $\Xstar$: no $\y_\ik$ is a function of the entire matrix $\Xstar$. 
We thus need an assumption that enables correct interpolation across the different columns. The following incoherence (w.r.t. the canonical basis) assumption on the right singular vectors suffices for this purpose.
 Such an assumption on both left and right singular vectors was first introduced in \cite{matcomp_candes} for making the low-rank matrix completion problem well-posed and used in recent works on multi-task representation learning \cite{tripuraneni2021provable, collins2021exploiting}. 

\begin{assu}[Incoherence of right singular vectors] \label{right_incoh2}
Assume that $\|\bstar_t\|^2 \le \mu^2 r \sigmax^2 / T$, $t\in [T],$ for a numerical constant $\mu$ that is greater than one. Since $\|\xstar_t\|^2 = \|\bstar_t\|^2$, this also implies  that $\|\xstar_t\|^2 \le \mu^2 r \sigmax^2 / T$.
Here, and below, $\|.\|$ without a subscript denotes the $\ell_2$ norm.
\end{assu}
We also have the two commonly used assumptions below. 
\begin{assu}
The matrices $\A_t$ are independent and identically distributed (i.i.d.) random Gaussian (each matrix entry is i.i.d. standard Gaussian).
\label{iidAk}
\end{assu}
\begin{assu}
The graph, $\calG$, specifying the network topology {\cblue is undirected and} connected. {\cblue That is,  there exists a path between any two nodes in the graph}. 
\label{connectedcalG}
\end{assu}
{\cblue Assumption~\ref{connectedcalG} is standard in decentralized control and learning literature \cite{olshevsky2009convergence, nedic2009distributed, xin2021fast}. The Gaussian model in Assumption~\ref{iidAk} is commonly assumed in most works on MTRL with theoretical guarantees \cite{tripuraneni2021provable, collins2021exploiting, OurICML}, and potential extensions beyond this assumption are part of our future work.}

\subsubsection{Notations}
We use $[T]:=\{1,2, \dots, T\}$.
We denote the Frobenius norm as $\norm{\cdot}_F$, and the induced $\ell_2$ norm as $\norm{\cdot}$ without any subscript. 
Further, ${}^\top$ denotes matrix or vector transpose. For a tall matrix $\M$, $\M^\dagger:= (\M^\top \M)^{-1} \M^\top$ (pseudo-inverse).
%
%
At various places in the paper, for tall $d \times r$ matrices, we are only interested in the orthonormal basis spanned by the matrix. We compute this basis using QR decomposition, i.e. $\tilde\U \qreq \U \R$. We refer to the matrix $\U$ as the {\em basis matrix}.
%
For two $d \times r$ basis matrices $\U_1, \U_2$ (corresponding to two subspaces defined by their column spans), the measure of Subspace Distance (SD) used is
$
\SE_2(\U_1, \U_2) : = \|(\I  - \U_1 \U_1^\top) \U_2\|.
$
%
 We reuse $c,C$ to denote different constants in each use with $c < 1$ and $C \ge  1$ (these do not depend on the model parameters $d,T, r$). This is a standard approach in literature when only the sample order or time complexity is of interest.
 The notation $a \gtrsim b$ means that $a \ge C b$.
We will prove our results to hold with probability (w.p.) at least $1- 1/n$ or $1-C/n$ for a small constant $C$. We say a bound holds with high probability (whp) if it holds with at least this much probability.
%
%
%
\subsection{Contributions}
 We develop a decentralized alternating GD and minimization-based approach for solving the decentralized multi-task representation learning (Dec-MTRL) problem.  We provide a {\em non-asymptotic theoretical guarantee} for our proposed algorithm.
 Our theoretical guarantee provides an order-wise lower bound on the required sample complexity and an upper bound on the required number of iterations to achieve a certain accuracy level (iteration complexity). This upper bound enables us to also obtain time and communication complexity guarantees, and argue that our approach is both fast and communication-efficient. 
{\cblue Low-dimensional representation learning has been widely studied in recent works \cite{du2020few, tripuraneni2021provable, OurICML}, but decentralized approaches with provable guarantees remain largely underexplored. To the best of our knowledge, no existing work provides a provably accurate algorithm for solving the Dec-MTRL problem. Although there is a substantial body of work on decentralized optimization, most focus on convex problems or offer only asymptotic guarantees. Some recent efforts address non-convex settings, but they rely on strong additional assumptions. As a result, existing algorithmic or proof techniques cannot be directly applied to our setting.} 
 Simulation experiments are used to corroborate our results and compare them with other benchmark approaches.
\vspace{-3 mm}
\section{Related Work}\label{sec:rel}
\subsubsection{Multi-task representation learning}
Representation learning has shown its great power in various
domains  \cite{bengio2013representation}. 
The first theoretical analysis with sample complexity bounds was provided in \cite{baxter2000model}.
\cite{maurer2016benefit} and follow-up works considered the setting where all tasks are i.i.d sampled from a certain distribution and analyzed the benefits of representation learning for reducing the sample complexity of the target task.
Building upon their findings, \cite{du2020few} delved into few-shot learning through representation learning.
\cite{tripuraneni2021provable} specifically addressed the challenge of multi-task linear regression with low-rank representation, presenting algorithms with robust statistical rates.
 \cite{tripuraneni2021provable}  also gives a computationally efficient algorithm for standard Gaussian inputs and a lower bound for subspace recovery in the low-dimensional linear setting. 
 \cite{collins2021exploiting} proposed an alternating minimization and gradient decent approach.
 All of these works considered a centralized setting, where a central server orchestrates the joint learning.
 
 On the other hand, somewhat similar problems have received some attention in the recent low-rank matrix recovery literature motivated by applications such as federated sketching and dynamic MRI \cite{lrpr_gdmin,lrpr_gdmin_2}.
A convex relaxation (mixed norm minimization) approach in the centralized
setting was proposed in \cite{lee2019neurips}.
  A fast GD centralized solution was studied in \cite{collins2021exploiting, lrpr_gdmin,lrpr_gdmin_2}.
Recently, \cite{moothedath2022fully,moothedath2022dec, moothedath2023comparing, moothedath2024decentralized} studied the decentralized low-rank matrix recovery problem. 
\cite{moothedath2022fully, moothedath2022dec} proposed a high-level algorithmic approach.
However, the algorithm required significant modifications to achieve provable guarantees. 
Later the conference versions of this paper \cite{moothedath2023comparing, moothedath2024decentralized} proposed an updated algorithm; however,  no theoretical proofs were provided.
Building on \cite{moothedath2023comparing, moothedath2024decentralized}, this paper establishes theoretical guarantees and provides rigorous proofs for the decentralized MTRL problem.
\subsubsection{Consensus-based (gossip) algorithms}
 Since the communication model we consider in this work is decentralized,  henceforth we focus on decentralized algorithms. 
There is a large amount of literature on using multiple consensus iterations to approximate the gradient (or related quantities such as the Hessian weighted gradient used in Newton Raphson) \cite{li2013convergence, wai2017decentralized} for both unconstrained and constrained optimization problems. However, there are a few differences. {\cblue Most of these works study a general convex optimization problem and hence can only prove convergence to {\em a} minimizer. When non-convex problems are studied, then only convergence to a stationary point can be proved.} Also, almost all guarantees provided are asymptotic and require many assumptions. 

A second difference is that all works consider decentralized versions of optimization problems of the form
$
\min_\theta  \mathcal{C} \sum_{g=1}^L f_g(\theta).
$
In contrast, our work requires solving the problem:
$
\min_{\U,\B: \U^\top \U= \I} \sum_{g=1}^L  [ \sum_{t \in \S_g} ||\y_t - \A_t \U \b_t||_2^2].$ 
In our problem, $\U$ is a global variable, while $\B$ is not: $\B = [\B_1, \B_2, \dots, \B_g, \dots, \B_L]$ and its sub-matrices $\B_g$ only occur in $f_g(.)$. Thus these can be updated locally at respective nodes $g$. This simplifies our algorithm (only gradients w.r.t. $\U$ need to be shared), but this also means that the proof requires more work. 
Also, our goal is to recover the unknown LR matrix $\Xstar$; consequently we need our developed algorithm to converge to this specific minimizer. For this, (i) our algorithm needs a careful decentralized initialization; and (ii) our guarantee needs a lower bound on the required sample complexity. 

Lastly, even though our optimization problem is an instance of nonlinear LS (NLLS), it is not easy for us to use the standard optimization approach for NLLS: this involves obtaining a closed form expression for $\B$s: $\b_t = (\A_t \U)^\dag \y_t$ and substituting this into the cost function to get a new cost function that depends only on $\U$. One then develops a GD or Newton Raphson algorithm to minimize the new cost which in our case will be
$
\min_{\U,\B: \U^\top \U= \I} \sum_{g=1}^L [ \sum_{t \in \S_g} ||[\I - \A_t \U (\A_t \U)^\dag ] \y_t||_2^2 ]
$
over only $\U$. However, it is not clear how to compute the gradient w.r.t $\U$ of the above in closed form. Even if one could, analyzing the resulting algorithm would be extremely difficult. Consequently we cannot borrow the decentralizing approach developed for NLLS in \cite{li2013convergence}.

\subsubsection{Decentralized non-convex optimization}
\cite{xin2021fast} studied a randomized incremental gradient method for decentralized unconstrained minimization of a non-convex cost function that is smooth and that satisfies the Polyak \L ojasiewicz (PL) condition, which is a generalization of strong convexity to non-convex functions (if a function satisfies PL condition, then all its stationary points are global minimizers). 
The result in \cite{xin2021fast} also shows that this number grows linearly with $1/\eps$.
%
\section{Decentralized Representation Learning}\label{sec:alg}
For the $T$ tasks, the joint optimization problem that needs to be solved  can be stated as

$
\min_{\X: \X \text{ is rank } r} \tilde{f}(\X): = \sum_{t=1}^T \|\y_t - \A_t \x_t\|^2.
$

We rewrite the unknown matrix $\X$ as $\X= \U \B$, where $\U$ is $d \times r$ and $\B$ is $r \times T$, and consider the cost function
\begin{align}
\scalemath{0.93}{f(\U, \B)=\hspace{-2 mm}\sum_{g=1}^L f_g(\U, \B)=\hspace{-2 mm}\sum_{g=1}^L \sum_{t \in \S_g} \|\y_t - \A_t \U \b_t\|^2.}\label{dec_cost_fn}
\end{align}

Notice that Eq.~\eqref{dec_cost_fn} is non-convex. {\cblue Unlike convex
problems, where zero or random initialization may suffice, non-convex optimization requires more deliberate strategies
for initialization to ensure convergence. Hence we require careful initialization.} Our proposed Decentralized Low-Rank Representation Learning (Dec-MTRL) algorithm consists of three stages: (i)~A careful initialization-spectral initialization, (ii)~Minimization over $\B$, and~(iii)~Projected-GD (ProjGD) for $\U$. 
%
%
Due to the non-convexity of the cost function $f(\U,\B)$ and the constraint on $\U$, we are unable to borrow decentralized optimization and related ideas. We use a scalar average-consensus algorithm on each entry of the gradient at each node to approximate its sum. 
%
\subsection{Average Consensus (AvgCons) algorithm}
\label{cons_algo_sec}
\begin{algorithm}[t]
\caption{AvgCons: Average consensus} 
\label{alg:consensus}
\label{avgcons_algo}
\begin{algorithmic}
\STATE \textit {\bf Input:} $\Z_\inp^\sg$, for all $g \in [L]$  ($\Z_\inp^\sg$ is a matrix), weight matrix $\W \in \mathbb{R}^{L \times L}$, $T_\con$, $\calG$ (graph connectivity)
\end{algorithmic}
\begin{algorithmic}[1]
\STATE Initialize $\Z_0^\sg \leftarrow \Z_\inp^\sg$, for all $g \in [L]$
\FOR{$\tau=1$ to $T_\con$}
\STATE $\Z^\sg_{\tau+1}  \leftarrow \Z^\sg_\tau +  \displaystyle\sum_{j \in \N_g (\calG)} \W_{gj} \Big( \Z^\sj_\tau- \Z^\sg_\tau \Big)$, for $g \in [L]$\label{step:avg}
\ENDFOR

\STATE \textit {\bf Output:}  $\Z_\out^\sg \leftarrow L \cdot \Z^\sg_{T_\con}$
\end{algorithmic}
\end{algorithm}
We summarize the average consensus algorithm in Algorithm \ref{alg:consensus}.
In each of its iterations, nodes update their values by taking a weighted sum of their own and their neighbors' partial sums. With enough iterations, if the graph is connected, one can approximate the true sum at each node \cite{olshevsky2009convergence}.
The weight matrix $\W$ is symmetric and doubly stochastic\footnote{Since $\W$ is doubly stochastic, its largest eigenvalue is $1$ and all eigenvalues are between $1$ and $-1$.} 
We use $\lambda_k(\W)$ to refer to the $k$-th largest eigenvalue of $\W \in \mathbb{R}^{L \times L}$. Then the second largest magnitude eigenvalue of $\W$ is defined as
$\gamma(\W) := \max(|\lambda_2(\W)|,|\lambda_L(\W)|).$
The result below presents a convergence result for Algorithm \ref{alg:consensus}.
%
%
%
\begin{prop}[\cite{olshevsky2009convergence}.] \label{avgcons_prop_scalar} 
Consider the average consensus algorithm (AvgCons) given in Algorithm \ref{avgcons_algo} with a doubly stochastic weights' matrix $\W$.
Let $z_{true}:=\sum_{g=1}^L z_{\inp}^\sg$ be the true sum that we want to compute in a decentralized fashion.
Pick an $\eps_\con < 1$. If the graph of the network is connected, and if $T_\con \ge  \frac{1}{\log(1/\gamma(\W))} \log(L/\eps_\con)$, then
\[
\max_g | z_{out}^\sg - z_{true} | \le  \eps_\con \max_g | z_{\inp}^\sg - z_{true} |.
\]
\end{prop}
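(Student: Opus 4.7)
The plan is to vectorize Algorithm \ref{alg:consensus} and analyze it via the spectral decomposition of $\W$. Concretely, stack the per-node values into $\mathbf{z}_t := (z_t^\sone, \ldots, z_t^\sL)^\top \in \mathbb{R}^L$. The equal-neighbor update prescribed by the weights in \eqref{eq:weight} is linear, so one step reads $\mathbf{z}_{t+1} = \W \mathbf{z}_t$, and after $T_\con$ rounds $\mathbf{z}_{T_\con} = \W^{T_\con}\mathbf{z}_0$ with $\mathbf{z}_0 = (z_{\inp}^\sone, \ldots, z_{\inp}^\sL)^\top$.

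Next I would exploit the structure of $\W$. Since $\W$ is symmetric and doubly stochastic, it has an eigendecomposition $\W = \sum_{k=1}^L \lambda_k(\W)\, \uu_k \uu_k^\top$ with $\lambda_1(\W)=1$ and (by connectedness of $\calG$, Assumption \ref{connectedcalG}) this top eigenvalue is simple with eigenvector $\uu_1 = \mathbf{1}/\sqrt{L}$. Let $P_{\mathbf{1}} := \mathbf{1}\mathbf{1}^\top/L$ be the projector onto $\mathrm{span}(\mathbf{1})$. Then $\W^{T_\con} = P_{\mathbf{1}} + \W^{T_\con}(I - P_{\mathbf{1}})$, and by definition of $\gamma(\W)$ in \eqref{eq:gamma},
\[
\|\W^{T_\con}(I-P_{\mathbf{1}})\| \;\le\; \gamma(\W)^{T_\con}.
\]
Interpreting $z_{true}$ as the consensus value (the average, up to the standard rescaling by $L$ that one applies to recover the sum), write $\mathbf{e}_0 := \mathbf{z}_0 - z_{true}\mathbf{1}$, so that $P_{\mathbf{1}}\mathbf{e}_0 = 0$. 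Then
\[
\mathbf{z}_{T_\con} - z_{true}\mathbf{1} \;=\; \W^{T_\con}(I-P_{\mathbf{1}})\mathbf{e}_0,
\]
and bounding $\ell_\infty$ by $\ell_2$ and then $\ell_2$ by $\sqrt{L}\,\ell_\infty$ gives
\[
\max_g |z_{T_\con}^\sg - z_{true}| \;\le\; \gamma(\W)^{T_\con}\sqrt{L}\,\max_g |z_{\inp}^\sg - z_{true}|.
\]
Choosing $T_\con \ge \log(\sqrt{L}/\eps_\con)/\log(1/\gamma(\W))$ forces the prefactor to be $\le \eps_\con$, and the stated threshold $\log(L/\eps_\con)/\log(1/\gamma(\W))$ is merely a slightly looser (and cleaner) sufficient condition.

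The only nontrivial step is justifying that the spectral-gap factor $\gamma(\W) < 1$ so that the prefactor $\gamma(\W)^{T_\con}$ actually contracts; this is where Assumption \ref{connectedcalG} enters, since for a connected graph with a doubly stochastic symmetric $\W$ having strictly positive diagonal (or, more generally, any configuration avoiding bipartite-type $-1$ eigenvalues) one has $|\lambda_k(\W)|<1$ for $k\ge 2$. This is the standard Perron–Frobenius / reversible Markov chain fact invoked in \cite{olshevsky2009convergence,boyd2004fastest,lovasz1993random}, so I would simply cite these rather than reprove it; everything else is elementary linear algebra. No delicate calculation is needed beyond the two norm conversions and the geometric decay of $\W^{T_\con}$ on $\mathbf{1}^\perp$.
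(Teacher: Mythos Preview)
The paper does not prove Proposition \ref{avgcons_prop_scalar}; it is quoted from \cite{olshevsky2009convergence,boyd2004fastest,lovasz1993random}. Your spectral argument is the standard one and is morally correct, but the bookkeeping around the sum--versus--average distinction is broken, and the hand-wave about ``rescaling by $L$'' does not patch it.

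You set $\mathbf{e}_0 := \mathbf{z}_0 - z_{true}\mathbf{1}$ and assert $P_{\mathbf{1}}\mathbf{e}_0 = 0$. This is false: $z_{true}=\sum_g z_\inp^\sg$ is the \emph{sum}, whereas $P_{\mathbf{1}}\mathbf{z}_0 = \bar z\,\mathbf{1}$ with $\bar z = z_{true}/L$ the \emph{average}, so $P_{\mathbf{1}}\mathbf{e}_0 = (\bar z - z_{true})\mathbf{1}\neq 0$ for $L>1$. You also never account for the output line of Algorithm \ref{avgcons_algo}, $z_\out^\sg = L\,z_{T_\con}^\sg$, so the quantity you bound, $\max_g|z_{T_\con}^\sg - z_{true}|$, is neither the left-hand side of the proposition nor related to it by a simple rescaling (the right-hand side involves $|z_\inp^\sg - z_{true}|$, not $|z_\inp^\sg - \bar z|$, and the two are not comparable in general).

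The clean fix is shift-invariance of the projector: $(I-P_{\mathbf{1}})\mathbf{z}_0 = (I-P_{\mathbf{1}})(\mathbf{z}_0 - z_{true}\mathbf{1})$ since $(I-P_{\mathbf{1}})\mathbf{1}=0$. Then
\[
\max_g|z_\out^\sg - z_{true}| \;=\; L\,\bigl\|\W^{T_\con}(I-P_{\mathbf{1}})\mathbf{z}_0\bigr\|_\infty \;\le\; L\,\gamma(\W)^{T_\con}\bigl\|\mathbf{z}_0 - z_{true}\mathbf{1}\bigr\|_2 \;\le\; L^{3/2}\gamma(\W)^{T_\con}\max_g|z_\inp^\sg - z_{true}|,
\]
giving the sufficient condition $T_\con \ge \log(L^{3/2}/\eps_\con)/\log(1/\gamma(\W))$. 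This is within a factor $3/2$ of the stated threshold, exactly the constant the paper itself absorbs when it invokes this result (see the first display in the proof of Proposition \ref{avgcons_prop}).
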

%
\subsection{Initialization for Dec-MTRL Estimator}
Since $f(\U,\B)$ is not a convex function of the unknowns $\{\U, \B\}$, a careful initialization is needed. We utilize the spectral initialization approach. 
For initialization, our goal is to compute the top $r$ left singular vectors of
\begin{align*}
\X_0^{central}:= & \left[ \frac{1}{n} (\A_1^\top \y_{1,\trnc}(\alpha)),  \dots,   \frac{1}{n} (\A_T^\top \y_{t,\trnc}(\alpha)) \right]
\label{newinit}
\end{align*}
where $\alpha := \tC \frac{\sum_\ik  (\y_\ik)^2}{nT}$ and $\y_{t,\trnc}(\alpha) := \y_t \circ \indic_{\{|\y_t| \le  \sqrt\alpha\}}$ and $\tC = 9 \kappa^2 \mu^2$. Using $\e_t$'s, this can be expressed as
$\X_0^{central} = (1/n) \sum_{t \in [T]} \sum_{t=1}^T \sum_{i=1}^n \a_\ik \y_\ik \e_t{}^\top \indic_{\left\{ \y_\ik^2 \le \alpha \right\} }$.
Here, $\e_t$ denotes the $t^{\rm th}$ canonical basis vector.
We use $\indic_{\mathrm{\{statement\}}}$ to denote an indicator function that takes the value $1$ if the statement is true and zero otherwise.
For a vector $\w$ and a scalar $\alpha$,  $\indic_{\{\w \le \alpha\}}$ returns a vector of 1s and 0s of the same size with 1s where $(\w(k) \le \alpha)$ and zero everywhere else. Also, $|\w|$ takes the element-wise magnitude of the vector and $\circ$ denotes the Hadamard product. Thus $\y_{t,\trnc}(\alpha) := \y_t \circ \indic_{\{|\y_t| \le  \sqrt\alpha\}}$ zeroes out entries of $\y_t$ with magnitude larger than $\alpha$. 

Here, observe that we are summing $\a_\ik \y_\ik \e_t{}^\top$ over only those values $i,t$ for which $\y_\ik^2$ is not too large (it is not much larger than its empirically computed average value). This {\em truncation} filters out the too large (outlier-like) measurements and sums over the rest. Theoretically, this converts the summands into sub-Gaussian r.v.s which have lighter tails than the un-truncated ones, which are sub-exponential.
The idea of ``truncating'' (zeroing out very large magnitude entries of the observation vector to make it lighter tailed) is borrowed from \cite{twf}.
%
The pseudocode for the the computation of the threshold $\alpha$ is given in Algorithm \ref{altgdmin_dec_init}. 
Each node $g$ truncates its $\y_t$s (zeros out entries with magnitude greater than $\sqrt{\alpha^\sg}$) to obtain $\y_{t,\trnc}$. This is then used to compute the $d \times |\S_g|$ sub-matrix of $\X_0$,
\[
(\X_0)_\ssg = [\frac{1}{n} \A_t^\top \y_{t,\trnc}(\alpha^\sg), \ t \in \S_g].
\]
The full matrix
$
\X_0 : =[(\X_0)_\ssone, (\X_0)_\sstwo, \dots, (\X_0)_\ssL ]
$
is not available at any one node. However, we do not need this full matrix; we only need its top $r$ singular vectors.

We use a consensus-based approximation of the power method (PM) \cite{hardt2014noisy} to approximate the desired singular vectors. 
Using the same random seed, each node $g$ generates the same $\Urandorth$ by generating a $d \times r$ matrix with i.i.d. standard Gaussian entries and orthonormalizing it. Since the same random seed is used, this matrix is the same at each node. This serves as the initialization for the PM, i.e., $\Upm^\sg = \Urandorth$.
This same random seed assumption is made in previous works that study decentralized PM, e.g., \cite{decpm_first, Waheed}.
Next, each node computes
$
 \tilde\Upm_\inp^\sg =  (\X_0)_\ssg  (\X_0)_\ssg{}^\top \Upm^\sg.
$

This is the input to the AvgCons algorithm, which outputs at node $g$,
${\tilde\Upm}^\sg = \text{AvgCons}_g(\tilde\Upm_\inp^\sgp, \ g' \in [L]).$
This is an approximation to $\sum_g (\X_0)_\ssg  (\X_0)_\ssg{}^\top \Upm^\sg $. 
Finally,  node-1 uses QR decomposition on ${\tilde\Upm}^\sone$ to get $\Upm^\sone$, i.e.,
${\tilde\Upm}^\sone \qreq \Upm^\sone \Rpm^\sone.$
Then we run a consensus loop to share $\Upm^{\sone}$ with all the nodes as follows: node-1 transmits $\Upm^{\sone}$, while all other nodes transmit a zeros' matrix. All nodes other than node-1 use the output of this consensus loop as their $\Upm^\sg$. At each iteration $\tau$, this step guarantees consensus of $\Upm^\sg$s to error level $\eps_\con$.
We summarize this in Algorithm \ref{altgdmin_dec_init}. We assume all nodes to know which node is ``node-1'', as in \cite{olshevsky2014linear}.

The final output of the initialization (Algorithm~\ref{altgdmin_dec_init}) serves as the input to Algorithm~\ref{altgdmin_dec}. We specify the complete Dec-MTRL algorithm in Algorithm~\ref{altgdmin_dec}.
This uses sample-splitting which is a commonly used approach in the LR recovery literature \cite{lrpr_gdmin, lowrank_altmin,rmc_gd, fastmc} to simplify the analysis. It helps ensure that the measurement matrices in each iteration for updating each of $\U$ and $\B$ are independent of all previous iterates. This allows one to use concentration bounds for sums of independent random variables to bound the error terms at each iteration.

\begin{algorithm}[t]
\caption{Initialization for Dec-MTRL} 
\label{altgdmin_dec_init}
\begin{algorithmic}[1]
\STATE Let $\y_t \equiv \y_t^{(00)}, \A_t \equiv \A_t^{(00)}$ for all $t \in [T]$.
\STATE $\alpha^\sg_\inp  \leftarrow  9\kappa^2 \mu^2 \frac{1}{nT} \displaystyle\sum_{t \in\S_g}\sum_{i=1}^n \y_{\ik}^2$ \label{step:cons-1} $\forall g \in [L]$
\STATE $\alpha^\sg \leftarrow  \mathrm{AvgCons}_g(\alpha^\sgp_\inp, g' \in [L], \calG, T_\con)$
\STATE Let $\y_t \equiv \y_t^{(0)}, \A_t \equiv \A_t^{(0)}$ 
\STATE $\y_{t,\trnc}:= \y_t \circ \indic\{|\y_\ik| \le \sqrt{\alpha^\sg} \}$ $\forall t \in \S_g$, $ g \in [L]$
\STATE Compute $\scalemath{0.95}{(\X_0)_\ssg =  \left[\frac{1}{n} \A_t^\top \y_{t,\trnc}, \ t \in \S_g \right]$, $\scalemath{0.9}{\forall g \in [L]}}$
\STATE Generate $\Urand$ with each entry i.i.d. standard Gaussian
 (use the same random seed at all nodes) for all $g \in [L]$
\STATE  Orthonormalize $\Urand$ using QR to get $\Urandorth$ for all $g \in [L]$
\STATE $\Upm^\sg \leftarrow \Urandorth$
\FOR{$\tau=1$ {\bfseries to} $T_{pm}$}
\STATE $\tilde\Upm_\inp^\sg \leftarrow ( \X_0)_\ssg  (\X_0)_\ssg{}^\top \Upm^\sg$, for all $g \in [L]$
\STATE ${\tilde\Upm}^\sg \leftarrow  \mathrm{AvgCons}_g( \tilde\Upm_\inp^\sgp, g' \in [L], \calG,  T_\con)$
\STATE QR on ${\tilde\Upm}^\sone \qreq \Upm^\sone \Rpm^\sone $ to get $\Upm^\sone$
\STATE $\Upm_\inp^\sone = \Upm^\sone, \Upm_\inp^\sg = \bm{0}, g \neq 1$
\STATE $\Upm^\sg \leftarrow \mathrm{AvgCons}_g( \Upm_\inp^\sgp, g' \in [L], \calG,  T_\con)$ for $g \neq 1$
\ENDFOR
\STATE Output $\U_0^\sg \leftarrow \Upm^\sg$
\end{algorithmic}
\end{algorithm}
%
\subsection{Minimization over $B$}
Starting with the spectral initialization of $\U$, at each iteration, our proposed Dec-MTRL algorithm alternates between the following two steps: (i)~Minimization over $\B$ and (ii)~Projected-GD (ProjGD) for $\U$. We present the detail of the minimization step below.
For each new estimate of $\U$, we solve for $\B$ by minimizing $f(\U, \B)$ over it while keeping $\U$ fixed at its current value. This is a decoupled column-wise least squares (LS) step.
Since the minimization step decouples for each column $\b_t$, this step is done locally at the node that has the corresponding $\y_t$. Thus, at node $g$, we compute $\b_t = (\A_t \U^\sg)^\dagger \y_t$ for all $t \in \S_g$. Each node thus updates a different $r \times |\S_g|$ sub-matrix $\B_\ssg = [\b_t, \ t \in \S_g]$. 
In this step, the most expensive part is the computation of the matrices $\A_t \U$ for all $t$, this takes time of order $ndr \cdot T$. Each LS computation only needs time of order $nr^2$.
%
\subsection{Decentralized Projected-GD for $U$}
 Recall the cost function 
in Eq.~\eqref{dec_cost_fn}.
The gradient computation for the GD step over $\U$ requires a sum over all $T$ terms.
In a decentralized setting, the estimates of $\U$ are different at the different nodes $g$, i.e., at node $g$, the cost function is $f_g(\U^\sg, \B_\ssg)$. Thus, node $g$ can only compute $\nabla f_g:= \nabla_\U f_g(\U^\sg,\B_\ssg)$.
$\nabla f_g$ is the input to AvgCons given in Algorithm \ref{alg:consensus},  at node $g$. It uses these inputs to compute an approximation to the sum, $\gradU := \sum_g \nabla f_g(\U^\sg, \B_\ssg)$. We denote the computed approximation (output of AvgCons) at node $g$ by $\hatgradU^\sg$. 
%
Each node then uses this in a GD step
\[
\Utilde_+^\sg \leftarrow \U^\sg - \eta \hatgradU^\sg,
\]
followed by QR decomposition to get 
$\Utilde_+^\sg \qreq \U_+^\sg \R_+^\sg.$
This is summarized in Algorithm \ref{altgdmin_dec}. We provide a guarantee for the GDmin steps in Theorem \ref{gd_thm}.
{\em In the rest of this paper, for simplicity, we remove the subscript $\U$ from the gradient notation. All gradients are w.r.t. the first argument of $f(.,.)$.}

\begin{algorithm}[t]
\caption{The complete Dec-MTRL algorithm. 
It calls AvgCons summarized in Algorithm \ref{avgcons_algo}.
}
\label{altgdmin_dec}
\begin{algorithmic}[1]
\STATE \textit {\bf Input:} $\A_t, \y_t$, $t \in \S_g$, $g \in [L]$ 
graph $\calG$

\STATE \textit {\bf Output:} $\U^\sg$, $\B_\ssg$ and $\X_\ssg  = \U^\sg \B_\ssg$.

\STATE \textit {\bf Parameters:} $\eta$, $T_\con$,  $T_\gd$

\STATE {\bfseries Sample-split:} Partition $\A_t, \y_t$ into  $2T_\gd+2$ equal-sized disjoint sets: $ \A_t^{(\ell)}, \y_t^{(\ell)}, \ell=00,0,1,2,\dots 2T_\gd$

\STATE {\bfseries Initialization:}
Run Algorithm~\ref{altgdmin_dec_init} to get $\U_0^\sg$.

\STATE {\bfseries AltGDmin iterations:}
%
\FOR{$k=1$ {\bfseries to} $T_\gd$} 
\STATE $\U^\sg \leftarrow \U_{k-1}^\sg$
\STATE Let $\y_t = \y_t^{(k)}, \A_t = \A_t^{(k)}$ 

\STATE $\b_t  \leftarrow  (\A_t \U^\sg)^\dagger \y_t$ $\forall t \in \S_g$ 

\STATE $\x_t    \leftarrow \U^\sg  \b_t$  $\forall t \in \S_g$

\STATE Let $\y_t = \y_t^{(T_\gd+k)}, \A_t =  \A_t^{(T_\gd+k)}$.

\STATE $\nabla f_g \leftarrow \sum_{t \in \S_g} \A_t^\top (\A_t \U^\sg \b_t - \y_t) \b_t^\top$ 

\STATE $\hatgradU^\sg \leftarrow  \mathrm{AvgCons}_g( \nabla f_{g'}, g' \in [L], \calG, T_\con)$

\STATE  $\Utilde^\sg_+ \leftarrow \U^\sg  - \eta \hatgradU^\sg$   \label{alg:GDstep}

\STATE  QR decompose $\Utilde^\sg_+ \qreq \U^\sg_+ {\R}^\sg_+$ to get $\U^\sg_+$  
\STATE  $\U^\sg_t \leftarrow \U^{\sg}_+$
      \ENDFOR
\STATE Output  $\U^\sg$, $\B_\ssg$ and $\X_\ssg  = \U^\sg \B_\ssg$
\end{algorithmic}
\end{algorithm}
\section{Main Results and Discussion} \label{results}
\subsection{Main Result}
{\cblue Our main result, Theorem~\ref{final_res}, stated next, says the following.
Consider Algorithm \ref{altgdmin_dec} with parameters (step size, $\eta$, number of consensus iterations at each GDmin or PM iteration, $T_\con$, total number of PM iterations for initializing Dec-MTRL, $T_{pm}$, and total number of GDmin iterations, $T_\gd$) set as specified in it.
Assume that the graph of the underlying communication network is connected, the right singular vectors of $\Xstar$ satisfy the incoherence assumption (Assumption \ref{right_incoh2}), and the $\A_t$'s are i.i.d. $n\times d$ random Gaussian matrices. If the number of samples per column, $n$, satisfies the specified lower bounds,  then, with high probability (whp), the subspace estimation error after $T_\gd$ GDmin iterations, $\SE_2(\U_T^\sg, \Ustar)$, is at most $\epsfin$. The same is true for the normalized error in estimating $\xstar_t$'s.}

Below we present our main result. 
\begin{theorem}\label{final_res}
Assume that Assumptions \ref{right_incoh2}, \ref{iidAk}, \ref{connectedcalG} hold. 
Consider Alg.~\ref{altgdmin_dec} initialized using Alg.~\ref{altgdmin_dec_init}.
Let $\eta = 0.4 /n \sigmax^2$, $T_{pm} = C \kappa^2 (\log d + \log \kappa)$, $T_\gd = C \kappa^2 \log(1/\epsfin)$,   $T_\con =C \Tconexp (T_\gd +  \log( 1/\epsfin) + \log d + \log L +\log \kappa)$.
Assume 
\[
 nT \ge C  \kappa^6 \mu^2 (d+T) r \left( \kappa^2 r + \log({1}/{\epsfin}) \right).
\]
Then, w.p. at least $1 - 1/d$,
\[
\SE_2(\U_{T_gd}^\sg, \Ustar) \le \epsfin, \text{ and }  \|\x_t - \xstar_t\| \le \epsfin \|\xstar_t\|,
\]
for all $t \in \S_g,~ g \in [L].$
\end{theorem}


The above result is what is often referred to as a {\em non-asymptotic} or {\em constructive} convergence guarantee: the specified values of  $T_\gd, T_\con$ enable us to also provide a bound on the  time and the communication complexity of Algorithm \ref{altgdmin_dec}. We obtain these below in Section~\ref{complexities}.
%
Consider the sample complexity (the required lower bound on $nT$). To understand it, suppose that $d \approx T$. Ignoring log factors, and treating $\kappa,\mu$ as numerical constants, our result implies that roughly order  $r^2$ samples suffice per task. If the low rank (LR) assumption and our algorithm were not used, we would have to invert each $\A_t$ to recover each $\xstar_t$ from $\y_t$;  this would require  $n \ge d$  samples per task instead of just $r^2$. Under the LR assumption ($r \ll d$, e.g., $r=\log d$), our required sample complexity is much lower than what inverting $\A_t$ would require.

%
\subsection{Time and Communication Complexity} \label{complexities}


The time complexity is $(T_{pm} \cdot T_\con) \cdot \varpi_\init + (T_\gd \cdot T_\con ) \cdot \varpi_\gd$ where $\varpi_\init$ and $\varpi_\gd$ are the time needed by one consensus iteration of the initialization or the GDmin steps.
In each GDmin consensus iteration, we need to (i) compute $\A_t \U$ for all $t \in \S_g$, $g \in [L]$, (ii) solve the LS problem for updating $\b_t$ for all $t \in \S_g$, $g \in [L]$, and (iii) compute the gradient w.r.t. $\U$ of $f_t(\U, \B)$, i.e. compute $\A_t^\top (\A_t \U \b_t - \y_t) \b_t^\top$ for all $t \in \S_g$, $g \in [L]$, and (iv) implement the GD step followed by a QR decomposition at all the nodes. The QR of an $d \times r$ matrix takes time of order $dr^2$ and this needs to be done at all the $L$ nodes. Recall that $\sum_g |\S_g|=T$. Thus, order-wise, the time taken per iteration is $\varpi_\gd = \max(T \cdot nd r , T \cdot nr^2 , T \cdot ndr, L \cdot dr^2) = nT dr $ since $L \le T$.
In the initialization step, the computation cost for computing the threshold $\alpha$ is negligible since it is a scalar. One PM iteration needs time of order $\varpi_\init = nTdr$.
Thus, the time complexity is $T_\con \cdot (T_\gd + T_{pm}) nTdr$ $= C \kappa^4   \Tconexp  \max\{\log^2(1/\epsfin), \log^2 d, \log^2 \kappa, \log^2 L\}nTdr$.

In all consensus iterations, the nodes are exchanging approximations to the gradient of the cost function w.r.t. $\U$. This is a matrix of size $d \times r$. In one such iteration, each node receives $nr$ scalars from its neighbors. Thus the communication complexity per iteration per node is $dr \cdot (\max_g \dg)$ where $(\max_g \dg)$ is the maximum degree of any node. The same is true for the initialization too.
Thus the cost per iteration for all the nodes is $dr \cdot (\max_g \dg) \cdot L$ and so the overall communication cost is $T_\con \cdot (T_\gd+ T_{pm} ) \cdot dr \cdot (\max_g \dg) \cdot L$.
%
\begin{corollary}
The time complexity of Algorithm \ref{altgdmin_dec} is $C \kappa^4   \Tconexp  \max\{\log^2(1/\epsfin), \log^2 d, \log^2 \kappa, \log^2 L\}$ $nTdr.$ Its communication complexity is
$C \kappa^4   \Tconexp$ $\cdot \max \{\log^2(1/\epsfin), \log^2 d, \log^2 \kappa, \log^2 L\}) \cdot dr (\max_g \dg).$
\end{corollary}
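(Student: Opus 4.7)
The proof is essentially a bookkeeping exercise that collects the per-iteration cost estimates (already sketched in the paragraph preceding the corollary) and multiplies by the iteration counts $T$, $T_{pm}$, $T_\con$ whose values are fixed by Theorem~\ref{final_res}. My plan is to split the analysis into three self-contained pieces: (i) cost of one GDmin consensus iteration, (ii) cost of one PM consensus iteration used in initialization, and (iii) the resulting totals after substituting the iteration counts.

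\textbf{Step 1: Per-iteration time for GDmin.} I would enumerate the four operations performed inside the loop of Algorithm~\ref{altgdmin_dec}: (a) forming $\A_k \U^{\sg}$ for each $k\in\S_g$, costing $\mathcal{O}(mnr)$ per column; (b) solving the $r$-dimensional least-squares problem $\b_k = (\A_k \U^\sg)^\dagger \y_k$, costing $\mathcal{O}(mr^2)$ per column; (c) evaluating $\A_k^\top(\A_k \U^\sg \b_k - \y_k)\b_k^\top$, again $\mathcal{O}(mnr)$ per column; (d) a single GD step plus a QR decomposition of an $n\times r$ matrix at each of the $L$ nodes, costing $\mathcal{O}(nr^2)$ per node. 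Summing (a)--(c) over $k\in\S_g$ and then over $g$ (using $\sum_g|\S_g|=q$) and noting $L\le q$ yields the dominating term $\varpi_{\gd}=\mathcal{O}(mqnr)$. A single scalar AvgCons exchange along every edge is $\mathcal{O}(1)$, and since the consensus step works entrywise on the $n\times r$ gradient matrix, the communication load per consensus iteration is dominated by computation.

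\textbf{Step 2: Per-iteration time for PM initialization.} An analogous count on Algorithms~\ref{altgdmin_dec_init_onecons}/\ref{altgdmin_dec_init} shows that computing $(\X_0)_{\ssg}(\X_0)_{\ssg}{}^\top \Upm^\sg$ can be done left-to-right in $\mathcal{O}(mnr)$ per column in $\S_g$ (since $(\X_0)_{\ssg}^\top \Upm^\sg$ is a $|\S_g|\times r$ matrix and the left multiplication costs $\mathcal{O}(mnr)$ per column), giving $\varpi_{\init}=\mathcal{O}(mqnr)$; the threshold-computation and QR are negligible. Thus the per-iteration computation cost matches $\varpi_{\gd}$.

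\textbf{Step 3: Aggregating over iterations.} Each outer GDmin (resp.\ PM) iteration invokes AvgCons with $T_\con$ inner rounds, so the total time is $T_\con\cdot(T\cdot\varpi_{\gd}+T_{pm}\cdot\varpi_{\init}) = T_\con\cdot(T+T_{pm})\cdot mqnr$. Substituting $T= C\kappa^2\log(1/\epsfin)$, $T_{pm}=C\kappa^2(\log n+\log\kappa)$, and $T_\con=C\,\Tconexp\,(T+\log(1/\epsfin)+\log n+\log L+\log\kappa)$ from Theorem~\ref{final_res}, the product $(T+T_{pm})\cdot T_\con$ is bounded above (up to constants) by $\kappa^4\,\Tconexp\,\max\{\log^2(1/\epsfin),\log^2 n,\log^2\kappa,\log^2 L\}$, which gives the stated time complexity. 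For communication, each AvgCons round for the gradient exchanges an $n\times r$ matrix with every neighbor, so a single round costs $nr\cdot(\max_g d_g)$ scalars per node; exchanges for the PM and for the scalar threshold $\alpha$ are dominated by this. Multiplying by $T_\con\cdot(T+T_{pm})$ yields the stated communication bound.

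\textbf{Main obstacle.} There is no real analytical difficulty here; the only care needed is in the asymptotic arithmetic, namely verifying that (i) the least-squares and QR costs really are dominated by $mqnr$ (which uses $L\le q$ and $r\le\min(n,q)$), and (ii) the product $(T+T_{pm})\cdot T_\con$ collapses cleanly into the single $\max\{\log^2(\cdot)\}$ factor after absorbing cross terms into the $\max$. Once these two book-keeping checks are done, the corollary follows immediately from Theorem~\ref{final_res}.
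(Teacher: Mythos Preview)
Your proposal is correct and mirrors the paper's argument exactly: enumerate the matrix operations inside one GDmin/PM consensus round to get $\varpi_{\gd}=\varpi_{\init}=\mathcal{O}(mqnr)$ (using $\sum_g|\S_g|=q$ and $L\le q$), multiply by $T_\con\cdot(T+T_{pm})$, and substitute the iteration counts from Theorem~\ref{final_res} to collapse the product into the stated $\kappa^4\,\Tconexp\,\max\{\log^2(\cdot)\}$ factor. (Your Step~2 explanation for why the $m$ factor appears in $\varpi_{\init}$ is slightly off---the multiplication $(\X_0)_{\ssg}(\X_0)_{\ssg}^\top\Upm^\sg$ itself costs only $\mathcal{O}(nqr)$---but the paper also simply asserts $\varpi_{\init}=mqnr$, and this is in any case a valid upper bound.)
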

\begin{brem}
Assuming that the number of nodes $L$ is small (is a numerical constant, i.e., it does not grow with $d,T,r$), our time complexity is nearly linear in the problem size $nT dr$, thus making the algorithm fast. The communication complexity per node per iteration is order $dr$ which is the size of $\U$ that needs to be shared. Thus, this cannot be improved any further.
\end{brem}
%
\subsection{Proving Theorem~\ref{final_res}}
We can prove the following results for the GD and minimization (GDmin) steps, and the initialization respectively. Combining these proves  Theorem~ \ref{final_res}.
The GDmin result says the following. 
If the number of samples per task, $n$, satisfies the specified lower bounds,  if the initialization error is small enough, and if the initial estimates of $\Ustar$ at the different nodes are also close entry-wise (and hence in Frobenius norm), then, whp,
the subspace estimation error $\SE_2(\U_\t^\sg, \Ustar)$ decays exponentially. The same is true for the normalized error in estimating $\xstar_t$'s.
The initialization result helps prove that the initial estimates do indeed satisfy the required bounds. Combining the two results then proves Theorem~ \ref{final_res}.
%
%
\begin{theorem}[GDmin iterations] \label{gd_thm}
Assume that Assumptions \ref{right_incoh2}, \ref{iidAk}, \ref{connectedcalG} hold. 
Consider Algorithm \ref{altgdmin_dec}.
Pick a final desired error $\epsfin < 1$. Let $\eta = c_\eta /n \sigmax^2$ with $c_\eta=0.4$, $T_\gd = C \kappa^2 \log(1/\epsfin)$, and $T_\con =  C \Tconexp (\log L +  T_\gd +  \log( 1/\epsfin))$.
Assume  that, at each iteration $k$,
\[
nT \gtrsim \kappa^4 \mu^2 dr, \ n \gtrsim \max(\log d, \log T, r).
\]
If $\SE_2(\U_0^\sg, \Ustar) \le \delta_0= \frac{c }{\sqrt{r} \kappa^2}$, and $\max_{g' \neq g} \|\U_0^\sg - \U_0^\sgp\|_F \le \rho_0 =  c^{T_\gd}\epsfin/ \kappa^2$,  then, w.p. at least $1-T  d^{-10}$, at any iteration $k \le T_\gd$,
\[
\SE_2(\U_\t^\sg, \Ustar) \le \delta_\t:=(1 - \frac{c_\eta}{\kappa^2})^\t  \delta_0,  \ \|(\x_t)_\t - \xstar_t\|\le \delta_\t \|\xstar_t\|, 
\]
for all $t \in S_g, g \in [L].$
\end{theorem}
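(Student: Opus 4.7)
The plan is induction on $t$ with two simultaneous invariants: the subspace bound $\SE_2(\U_t^\sg, \Ustar) \le \delta_t := (1-c_\eta/\kappa^2)^t \delta_0$ at every node $g$, and an across-nodes \emph{consensus} bound $\max_{g,g'} \|\U_t^\sg - \U_t^{\sgp}\|_F \le \rho_t$ for a sequence tuned so that $\rho_t \le c^{T-t}\epsfin/\kappa^2$, with base case supplied by the hypothesis. This calibration keeps the consensus error at least a factor $\kappa$ smaller than $\delta_t$ at every iteration, so it can be absorbed without destroying the geometric contraction. Each inductive step consists of (i) a column-wise least-squares (LS) bound on $\B$, (ii) a centralized-style contraction for $\U$ that would hold if all nodes shared a common iterate, and (iii) two perturbation bounds that cover the local-vs-shared iterate discrepancy and the consensus approximation of the gradient.

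For the minimization step, at each node $g$ and each $k \in \S_g$ the update $\b_k = (\A_k \U^\sg)^\dagger \y_k$ is an independent LS problem on fresh Gaussian data (by sample splitting). Standard sub-Gaussian concentration on $(\A_k \U^\sg)^\top (\A_k \U^\sg)/m$ and on $(\A_k \U^\sg)^\top \A_k(\I - \U^\sg\U^\sg{}^\top)\xstar_k/m$, valid once $m \gtrsim r + \log q$, yields $\|\U^\sg \b_k - \xstar_k\| \lesssim \SE_2(\U^\sg,\Ustar)\|\xstar_k\|$ together with the column-norm bound $\|\b_k\|^2 \le C\mu^2 r \sigmax^2 /q$; a union bound over $k\in[q]$ costs only a $\log q$ factor. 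The column-norm bound, inherited from Assumption~\ref{right_incoh2}, is precisely what will permit the operator-norm gradient concentration in the next step.

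The gradient step is where the genuine decentralized work lies. Introduce the (unavailable) ``node-local aggregated'' gradient $\gradU^\sg_{\mathrm{loc}} := \sum_{g'} \nabla f_{g'}(\U^\sg, \B_{g'})$ that pretends every node uses the iterate $\U^\sg$, and decompose
\[
\U^\sg - \eta\hatgradU^\sg - \Ustar \Q^\sg = \bigl(\U^\sg - \eta \gradU^\sg_{\mathrm{loc}} - \Ustar\Q^\sg\bigr) - \eta E_1^\sg - \eta E_2^\sg,
\]
with $\Q^\sg$ the usual orthogonal aligner between $\U^\sg$ and $\Ustar$, $E_1^\sg := \hatgradU^\sg - \sum_{g'}\nabla f_{g'}(\U^{\sgp},\B_{g'})$ the pure consensus error, and $E_2^\sg := \sum_{g'}\bigl(\nabla f_{g'}(\U^{\sgp},\B_{g'}) - \nabla f_{g'}(\U^\sg,\B_{g'})\bigr)$ the local-vs-shared-iterate perturbation. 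The first bracket obeys $\|\cdot\| \le (1 - c_\eta/\kappa^2)\,\SE_2(\U^\sg,\Ustar)\sigmax$ by the centralized AltGDmin analysis of~\cite{lrpr_gdmin}: this uses $\eta = 0.4/m\sigmax^2$, the operator-norm concentration $\|(1/m)\sum_k \A_k^\top \A_k \U^\sg \b_k \b_k^\top - \U^\sg \B \B^\top\| \lesssim \delta_t \sigmax^2$ (valid under $mq \gtrsim \kappa^4 \mu^2 nr$, thanks to the column-norm bound), and the LS error above. The term $E_2^\sg$ is a sum of per-node terms that are Lipschitz in the first argument with constant $\lesssim \sigmax^2$ on the concentration event, so $\|\eta E_2^\sg\| \lesssim \rho_t/\kappa^2$. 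Finally $E_1^\sg$ is controlled by Proposition~\ref{avgcons_prop_scalar} applied entry-wise with $\eps_\con = \gamma(\W)^{T_\con}$; the input gradients have magnitude $\lesssim \sigmax^2 \delta_t$, and the prescribed $T_\con$ is designed to make $\|\eta E_1^\sg\|$ no larger than $c^{T-t}\epsfin/\kappa^2$, i.e.\ on the scale of $\rho_{t+1}$.

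Converting the bound on $\Utilde^\sg_+ - \Ustar\Q^\sg$ into a bound on $\SE_2(\U^\sg_+,\Ustar)$ uses $\sigmin(\Utilde^\sg_+) \ge 1 - \eta\|\gradU^\sg_{\mathrm{loc}}\| \ge 1/2$, which is immediate from the contraction, and produces the subspace invariant at $t+1$; the bound on $\|(\x_k)_{t+1}-\xstar_k\|$ follows by invoking the LS step once more with $\U^\sg_+$ in place of $\U^\sg$. The updated consensus invariant follows from subtracting the updates at two nodes: $\U_{t+1}^\sg - \U_{t+1}^{\sgp}$ is determined by $\U_t^\sg - \U_t^{\sgp}$ (bounded by $\rho_t$) and by $\hatgradU^\sg - \hatgradU^{\sgp}$, both of which approximate the \emph{same} aggregate sum, so one more application of Proposition~\ref{avgcons_prop_scalar} (together with a QR-perturbation lemma for the orthonormalization step) yields $\rho_{t+1}$. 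A union bound over the concentration and consensus events for $t = 0,\ldots,T$ produces the probability $1 - T n^{-10}$. I expect the bookkeeping of the coupling to be the main obstacle: one has to show simultaneously that the consensus error does not accumulate across iterations faster than the $(1-c_\eta/\kappa^2)$-contraction pulls it down, and that the column-norm bound $\|\b_k\|^2 \le C \mu^2 r \sigmax^2/q$ persists at every iteration --- the latter is precisely what keeps the gradient concentration, and hence the contraction, applicable at the next round.
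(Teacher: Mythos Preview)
Your plan is broadly correct and matches the paper's high-level structure: induction on $t$ carrying the two invariants $\delta_t$ and $\rho_t$, a one-step GD contraction lemma combined with a consensus-error lemma, and a QR-perturbation result (Proposition~\ref{pqr_res} in the paper) to propagate $\rho_t$. The ingredients you list (LS bounds on $\B$, operator-norm gradient concentration under $mq\gtrsim\kappa^4\mu^2 nr$, Proposition~\ref{avgcons_prop_scalar} for consensus, union bound over $T$ iterations) are exactly what the paper uses.

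The error decomposition, however, is different. The paper does \emph{not} introduce your fictitious $\gradU^\sg_{\mathrm{loc}}$; it works directly with $\gradU:=\sum_{g'}\nabla f_{g'}(\U^{\sgp},\B_\ssgp)$ and splits into $\conserr^\sg=\gradU-\hatgradU^\sg$, $\err=\E[\gradU]-\gradU$, and the expectation-level term $\Uerr^\sg=\sum_{g'}(\U^\sg-\U^{\sgp})\B_\ssgp\B_\ssgp^\top$. In fact the paper explicitly states that it \emph{cannot} invoke the Hessian/fundamental-theorem-of-calculus argument of \cite{lrpr_gdmin} (because the Hessian of $\sum_{g'}f_{g'}(\U^{\sgp},\B_\ssgp)$ w.r.t.\ $\U^\sg$ is ill-defined) and instead develops a direct contraction proof. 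Your decomposition cleverly sidesteps that obstruction by evaluating all gradients at the single point $\U^\sg$; this is legitimate, but the appeal to \cite{lrpr_gdmin} for the first bracket is not quite a black box: there the $\B$ entering the gradient is the minimizer at the \emph{same} $\U$, whereas here $\B_{g'}$ was computed at $\U^{\sgp}$. You would need to verify that the centralized contraction proof only uses the size bounds on $\b_k$ and $\x_k-\xstar_k$ from the min step (which it essentially does), not the exact-minimizer property.

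One correction to your last paragraph: the consensus invariant $\rho_t$ is not ``pulled down'' by the $(1-c_\eta/\kappa^2)$ contraction. The paper's recursion is $\rho_{t+1}=1.7(\rho_t+3\kappa^2 c_\eta\eps_\con)$, so $\rho_t$ \emph{grows} geometrically (the factor $1.7$ comes from the QR-perturbation step). What closes the argument is that $\rho_0$ and $\eps_\con$ are both taken of order $c^T\epsfin/\kappa^2$, so that even after $T$ growth steps $\rho_T\le 0.1\delta_T/\kappa^2$. Your formula $\rho_t\le c^{T-t}\epsfin/\kappa^2$ reflects this correctly, but the mechanism is ``start tiny and allow growth,'' not ``contraction beats accumulation.''
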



\begin{theorem}[Initialization]\label{init_step_PM_claim}
Assume that Assumptions \ref{right_incoh2}, \ref{iidAk}, \ref{connectedcalG} hold.
Pick a $\delta_0 < 1$ and a $\rho_0 < 1$.
Consider Algorithm \ref{altgdmin_dec_init}.
If $T_\con = C \Tconexp ( \log L + \log d + \log \kappa + \log (1/\delta_0) + \log(1/\rho_0) )$,
$T_{pm} = C \kappa^2 \log(\frac{d}{\delta_0})$, and
$nT \gtrsim \kappa^4 \mu^2 (d+T) \frac{r}{\delta_0^2},$
then, w.p. at least $1 - 1/d$,
$\SE_2(\U^\sg_0, \Ustar) \le \delta_0,  \text{ and } \max_{g' \neq g} \|\U_0^\sg - \U_0^\sgp\|_F \le \rho_0.$
\end{theorem}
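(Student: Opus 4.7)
The plan is to treat the decentralized initialization as the centralized spectral initialization of \cite{lrpr_gdmin} with two sources of controllable perturbation: (i) the AvgCons error at every power-method (PM) multiplication and (ii) the AvgCons error at the broadcast of $\Upm^\sone$ to the other nodes. The key structural simplification of Algorithm \ref{altgdmin_dec_init} is that QR is performed only at node 1 and the result is then disseminated; this lets me track a single sequence $\Upm^\sone$ while controlling $\max_g \|\Upm^\sg - \Upm^\sone\|_F$ separately. Throughout, I invoke Proposition \ref{avgcons_prop_scalar} to convert consensus iteration counts into additive error bounds.

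\emph{Step 1: threshold and moment-matrix concentration.} First, Proposition \ref{avgcons_prop_scalar} applied to the scalar consensus for $\alpha^\sg$, together with a scalar concentration of $(1/mq)\sum_\ik \y_\ik^2$ around $\|\Xstar\|_F^2/q$, shows that $\alpha^\sg$ is within constant factors of the ideal centralized threshold $\alpha = C \kappa^2 \mu^2 \|\Xstar\|_F^2/q$. Invoking the truncated sub-Gaussian concentration of $\X_0^{central}$ from \cite{lrpr_gdmin}, the sample condition $mq \gtrsim \kappa^4 \mu^2 (n+q) r / \delta_0^2$ then implies that the aggregate $\X_0 \X_0^\top$ lies within spectral norm $c \delta_0 \sigmin^2$ of $\E[\X_0 \X_0^\top]$, whose top-$r$ eigenspace is exactly $\Span(\Ustar)$ with eigen-gap $\gtrsim \sigmin^2$.

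\emph{Step 2: noisy PM at node 1 and broadcast.} At each PM round, all nodes begin with approximately equal $\Upm^\sg$ (equal to $\Urandorth$ at $\tau=0$; approximately equal thereafter by the previous round's broadcast). The first AvgCons thus targets the true sum $\X_0 \X_0^\top \Upm^\sone$, and by Proposition \ref{avgcons_prop_scalar} its output at node 1 equals this sum plus a perturbation of Frobenius norm $\lesssim \eps_\con \sigmax^2 \sqrt r$. Combined with Step 1, the matrix QR-ed at node 1 equals $\E[\X_0 \X_0^\top] \Upm^\sone + \tSigma$ with $\|\tSigma\| \lesssim \delta_0 \sigmin^2 + \eps_\con \sigmax^2 \sqrt r$. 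Standard noisy power-method analysis (Hardt--Price style) with eigen-gap $\sigmin^2$ then yields $\SE_2(\Upm^\sone, \Ustar) \le \delta_0/2$ after $T_{pm} = C \kappa^2 \log(n/\delta_0)$ outer iterations, provided $\eps_\con \lesssim \delta_0 / (\kappa^2 \sqrt r)$ (up to $\sigmax^2/\sigmin^2$ factors). The second AvgCons then broadcasts $\Upm^\sone$: since only node 1 has nonzero input, Proposition \ref{avgcons_prop_scalar} gives $\max_g \|\Upm^\sg - \Upm^\sone\|_F \le \eps_\con \sqrt r$. Tightening $\eps_\con \lesssim \min(\delta_0,\rho_0)/\sqrt r$, which dictates $T_\con \gtrsim \Tconexp (\log L + \log n + \log \kappa + \log(1/\delta_0) + \log(1/\rho_0))$, delivers simultaneously $\SE_2(\Upm^\sg, \Ustar) \le \delta_0$ and $\|\Upm^\sg - \Upm^\sgp\|_F \le \rho_0$.

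The main obstacle is closing the induction cleanly: the noisy-PM argument of Step 2 assumes all nodes hold the same iterate $\Upm^\sone$, but after the broadcast they only agree up to $\eps_\con \sqrt r$. This residual feeds an additional $O(\eps_\con \sigmax^2 \sqrt r)$ term into the consensus error at the next multiplication; since this has the same scaling as the intrinsic consensus error, it is absorbed by the stated choice of $T_\con$, and the induction closes. A high-probability union bound over the random $\{\A_k\}$ (for Step 1) and over the deterministic per-iteration consensus errors then completes the proof. Sample-splitting is not needed within the initialization because $\X_0$ is fixed throughout the PM loop, paralleling the centralized analysis of \cite{lrpr_gdmin}.
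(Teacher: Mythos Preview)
Your overall architecture matches the paper's: treat the node-1 sequence as a noisy power method, control the broadcast error via Proposition \ref{avgcons_prop_scalar}, and close an induction on $\max_g\|\Upm^\sg-\Upm^\sone\|_F$. The paper likewise splits the PM noise as $\G_\tau^\sone=\conserr^\sone+\Uerr^\sone$ and bounds each piece exactly as you describe.

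There is, however, a real gap in how you set up the noisy PM. You apply Hardt--Price with target matrix $\E[\X_0\X_0^\top]$ and fold the sampling error $\X_0\X_0^\top-\E[\X_0\X_0^\top]$ into the per-iteration noise $\tSigma$, claiming $\|\tSigma\|\lesssim \delta_0\sigmin^2+\eps_\con\sigmax^2\sqrt r$. The Hardt--Price guarantee (Proposition \ref{npm_res}) with a \emph{random} Gaussian initialization requires not only $\|\G_\tau\|\le \eps_{pm}(\sigma_r-\sigma_{r+1})$ but also the much more stringent second condition $\|\Z_\true^\top\G_\tau\|\lesssim (\sigma_r-\sigma_{r+1})/(\gamma\sqrt{nr})$, which after the paper's choice $\gamma=n^{1/c_1}$ demands $\|\G_\tau\|\lesssim \sigmin^2/(n^{1/c_1}\sqrt r)$. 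The consensus piece can be pushed below this by enlarging $T_\con$ (hence the $\log n$ in the statement), but the sampling piece is fixed at order $\delta_0\sigmin^2$ by the sample-complexity hypothesis and \emph{cannot} be made $O(\sigmin^2/n)$ without requiring $mq\gtrsim \kappa^4\mu^2(n+q)r\cdot n^2$. Your argument as written therefore does not yield the stated sample complexity.

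The paper sidesteps this by running the noisy PM with target $\X_0\X_0^\top$ itself, so that $\G_\tau^\sone$ contains \emph{only} consensus error (controllable via $T_\con$), and then handles the sampling error separately via a one-shot Wedin bound $\SE_2(\Upm_\true,\Ustar)\le\delta_0/2$, where $\Upm_\true$ is the exact top-$r$ eigenspace of $\X_0\X_0^\top$ (Lemma \ref{U0Ustar_bnd}). The triangle inequality $\SE_2(\Upm^\sone_{T_{pm}},\Ustar)\le\SE_2(\Upm^\sone_{T_{pm}},\Upm_\true)+\SE_2(\Upm_\true,\Ustar)$ then finishes. Note that your route \emph{would} work if one additionally assumed $\sigma_{\min}(\Ustar^\top\Urandorth)\ge c_1$, since then the second Hardt--Price condition is not needed; but that is precisely the extra hypothesis of Theorem \ref{init_step_PM_claim2}, not the present theorem.
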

\subsubsection{Proof of Theorem~ \ref{final_res}}
To combine our guarantees for GDmin iterations and initialization,  we set
$\delta_0 = c / \sqrt{r} \kappa^2,  \text{ and } \rho_0 =  c^{T_\gd} \epsfin / \kappa^2.$
This means, for initialization, we need $nT \gtrsim \kappa^8 \mu^2 (d+T) r^2:= n_{\init} T$, $T_\con = C \Tconexp (\log L  + T_\gd +  \log( 1/\epsfin) + \log d +  \log \kappa )$, and $T_{pm}=C \kappa^2 \log(\frac{d}{\delta_0})$  $=C \kappa^2 \log(d \sqrt{r} \kappa^2) = C \kappa^2 (\log d + \log \kappa)$ since $r \le d$.
%
GDmin iterations need $nT\gtrsim \kappa^4 \mu^2 dr:= n_{\gd} q$. 
Thus, the total number of samples $nT$ needed is $ 
nT \gtrsim   n_\init T + T_\gd n_{\gd} T. 
$
By plugging in $T_\gd =  C \kappa^2 \log(1/\epsfin)$, we get the sample complexity.
\qed

\section{Proving Theorem \ref{gd_thm}:GDmin Iterations}\label{sec:proof}
{\cblue 
The proof of Theorem \ref{gd_thm} is derived from the combination of Theorem~\ref{gd_step_claim} and Lemma~\ref{conserr_bnd_lemma}. We prove these results in the next subsections.}
\subsection{Proving Theorem~\ref{gd_step_claim}: GD step and Consensus for $U$}
{\cblue To prove Theorem~\ref{gd_step_claim}, we first prove Theorem~\ref{min_step_claim} for the minimization step and then use this result to analyze the GD step in Theorem~\ref{gd_step_claim}.}
\begin{align*}
\mbox{Define}~\B & = [\B_\ssone, \B_\sstwo, \dots, \B_\ssL], \text{ with } \B_\ssg: = [\b_t, \ t \in \S_g] \\
\X & = [\X_\ssone, \X_\sstwo, \dots, \X_\ssL], \text{ with } \X_\ssg: = \U^\sg \B_\ssg
\end{align*}
and $\x_t = \U^\sg \b_t$ for all $t \in \S_g$.
Define
$
\g_t: = \U^\sg{}^\top \xstar_t,  \ \forall \ t \in \S_g
$
and
\begin{align*}
\G & = [\G_\ssone, \G_\sstwo, \dots, \G_\ssL], \G_\ssg: = [\g_t, \ t \in \S_g] = \U^\sg{}^\top \X_\ssg^\star\\
\DD & = [\DD_\ssone, \DD_\sstwo, \dots, \DD_\ssL],  \DD_\ssg: = \U^\sone{}^\top \X_\ssg^\star.
\end{align*}
Thus,  $\DD = \U^\sone{}^\top \Xstar$.
Notice the difference between $\G$ and $\DD$.  Both will be used in our proof.
%
In this proof, we use the following lemma from \cite{lrpr_gdmin}.

\begin{lemma}[\cite{lrpr_gdmin}, Lemma 3.3, first part]
\label{bhat_lemma_2}
Let 
 $\g_t: = \U^\sg{}^\top \xstar_t.$
Then, w.p. $\ge 1 - T \exp(r - c n)$,
\begin{align}
\|\g_t - \bhat_t \|   & \leq 0.4 \|\left(\I_d-\U^\sg\U^\sg{}^\top \right)\Ustar\bstar_t\|, \mbox{~for~} t \in \S_g. \nn
\end{align}
\end{lemma}
%

%

\begin{theorem}[Min step for $\B$] \label{min_step_claim}
%
Assume that, for all $g$, $\SE_2(\U^\sg,\Ustar) \le \delta_k$, for some $\delta_k \geqslant 0$. 
If $n \gtrsim \max(\log T, \log d, r)$, then, w.p $\geqslant 1-\exp(\log T+ r -cn)$
\ben
\item $\|\b_t - \g_t\| \le 0.4 \delta_\t \|\bstar_t\|$

\item $\|\b_t\| \le \|\g_t\|+0.4 \cdot 0.02 \|\bstar_t\| \le 1.1 \|\bstar_t\|$

\item $\|\x_t - \xstar_t\| \le 1.4  \delta_\t \|\bstar_t\|$

\item $\|\B - \G\|_F \le 0.4 \delta_\t \|\Bstar\|_F \le  0.4 \sqrt{r} \delta_\t \sigmax $

\item $\|\X - \Xstar\|_F \le 1.4 \sqrt{r} \delta_\t \sigmax $
\een
If $\delta_\t \le 0.02 / \sqrt{r} \kappa^2$ and if  $\max_{g' \neq g} \|\U^\sg - \U^\sgp\|_F \le \tb_\t$ with $\tb_\t \le 0.1 \delta_\t$, then the above implies that
\ben
\item[a)] $\sigmamin(\B) \ge  0.9 \sigmin$ and \item[b)] $\sigma_{\max}(\B) \le  1.1 \sigmax$.
\een
\end{theorem}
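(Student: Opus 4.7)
The plan is to establish item 1 via one algebraic identity together with two independent Gaussian concentration facts, deduce items 2--5 as direct algebraic consequences, and then handle the singular-value statements (a), (b) by a perturbation argument that invokes the consensus hypothesis to compare $\G$ against a surrogate that uses a single copy of $\U$. Starting from the closed-form solution $\b_k = (\A_k \U^\sg)^\dagger \y_k = (\A_k \U^\sg)^\dagger \A_k \xstar_k$ and the decomposition $\xstar_k = \U^\sg \g_k + \w_k$ with $\w_k := (\I - \U^\sg \U^\sg{}^\top)\xstar_k$, the first summand collapses under the pseudoinverse, so
\[
\b_k - \g_k \;=\; (\A_k \U^\sg)^\dagger \A_k \w_k,
\]
and the definition of $\SE_2$ together with $\xstar_k = \Ustar \bstar_k$ gives $\|\w_k\| \le \SE_2(\U^\sg,\Ustar)\|\bstar_k\| \le \delta_t \|\bstar_k\|$. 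Since $\U^\sg$ is column-orthonormal, $\A_k \U^\sg$ is $m \times r$ with i.i.d.\ standard Gaussian entries, so the standard tail bound for the smallest singular value of a tall Gaussian matrix yields $\sigmamin(\A_k \U^\sg) \ge c\sqrt{m}$ with probability $\ge 1 - \exp(-cm)$ when $m \gtrsim r$. Because $\U^\sg{}^\top \w_k = 0$, the columns of $\A_k \U^\sg$ are jointly independent of $\A_k \w_k$; each entry of $(\A_k \U^\sg)^\top \A_k \w_k$ is a sum of $m$ independent products of a standard Gaussian and a $\mathcal{N}(0,\|\w_k\|^2)$ variable, hence sub-exponential with variance proxy $m\|\w_k\|^2$, and a Bernstein bound followed by a union bound over the $r$ coordinates gives $\|(\A_k \U^\sg)^\top \A_k \w_k\| \le C\sqrt{rm}\,\|\w_k\|$ with probability $\ge 1 - \exp(r - cm)$. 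Writing the pseudoinverse as $[(\A_k \U^\sg)^\top (\A_k \U^\sg)]^{-1}(\A_k \U^\sg)^\top$ and combining yields $\|\b_k - \g_k\| \le C\sqrt{r/m}\,\delta_t \|\bstar_k\|$, which is at most $0.4 \delta_t \|\bstar_k\|$ once $m \ge C'r$ for a large enough constant. A union bound over $k \in [q]$ gives the probability $\ge 1 - \exp(\log q + r - cm)$ asserted.

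Items 2--5 are then immediate: $\|\b_k\| \le \|\g_k\| + \|\b_k - \g_k\| \le (1 + 0.4\delta_t)\|\bstar_k\|$ since $\|\g_k\| = \|\U^\sg{}^\top \xstar_k\| \le \|\bstar_k\|$; the identity $\x_k - \xstar_k = \U^\sg(\b_k - \g_k) - \w_k$ together with the triangle inequality gives item 3; and items 4, 5 follow by squaring, summing over $k$, and using $\|\Bstar\|_F \le \sqrt{r}\,\sigmax$. For (a) and (b) the subtlety is that the $g$-th block of $\G$ is $\U^\sg{}^\top \Xstar_\ssg$, so one cannot directly invoke standard bounds on $\U^\top \Xstar$ for a single $\U$. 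The fix is to introduce the surrogate $\tilde\G := \U^\sone{}^\top \Xstar$ and use the consensus hypothesis block by block,
\[
\|\G - \tilde\G\|_F^2 \;=\; \sum_{g} \|(\U^\sg - \U^\sone)^\top \Xstar_\ssg\|_F^2 \;\le\; \tb_t^2 \|\Xstar\|_F^2 \;\le\; \tb_t^2 r \sigmax^2,
\]
while $\Xstar = \Ustar \bSigma \V^\star$ with $\V^\star \V^\star{}^\top = \I_r$ yields $\sigmamin(\tilde\G) \ge \sigmamin(\U^\sone{}^\top \Ustar)\,\sigmin \ge \sqrt{1 - \delta_t^2}\,\sigmin$ and $\sigmamax(\tilde\G) \le \sigmax$. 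Two applications of Weyl's inequality then give
\[
\sigmamin(\B) \ge \sigmamin(\tilde\G) - \|\tilde\G - \G\| - \|\B - \G\|, \qquad \sigmamax(\B) \le \sigmamax(\tilde\G) + \|\tilde\G - \G\| + \|\B - \G\|,
\]
and plugging in $\delta_t \le 0.02/(\sqrt{r}\kappa^2)$ together with $\tb_t \le 0.1\delta_t$ renders both perturbation terms of order $\sigmin/\kappa$, delivering (a) $\sigmamin(\B) \ge 0.9\sigmin$ and (b) $\sigmamax(\B) \le 1.1 \sigmax$.

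The main obstacle I anticipate is tracking numerical constants in item 1: the $\sqrt{r/m}$ factor from sub-exponential concentration of the cross term must be driven below the specific target $0.4$ by the sample-complexity hypothesis $m \gtrsim r$, which demands careful constants in both the min-singular-value bound and the Bernstein step. The coupling between the consensus parameter $\tb_t$ and the absolute scale $\sigmin$ in (a), (b) is likewise delicate and is what forces the specific calibration $\tb_t \le 0.1\delta_t$; one must verify that the two perturbation terms in Weyl's inequality end up bounded by $\sigmin$ times a small absolute constant, rather than by some larger multiple of $\sigmin$ that would be too coarse to yield the sharp constants $0.9$ and $1.1$.
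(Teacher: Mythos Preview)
The proposal is correct and follows the same line as the paper's proof (which adapts the centralized argument from \cite{lrpr_gdmin} to the decentralized setting): the orthogonal split $\xstar_k=\U^\sg\g_k+\w_k$, Gaussian concentration on $\sigma_{\min}(\A_k\U^\sg)$ and on the cross term $(\A_k\U^\sg)^\top\A_k\w_k$ via independence of orthogonal Gaussian projections, then direct algebra for items 2--5 and Weyl's inequality for (a), (b). Your introduction of the surrogate $\tilde\G=\U^\sone{}^\top\Xstar$ to mediate between the node-dependent $\G$ and a single-$\U$ object, controlled through the consensus bound $\tb_t$, is exactly the device the decentralized extension requires.
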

\begin{proof}
By Lemma \ref{bhat_lemma_2}, if $n \gtrsim \max(\log T, \log d, r)$, then,  w.p $\geqslant 1-\exp(\log T+ r -cn)$
 \[
\|\b_t{} - \g_t{}\| \le 0.4 \|(\I_d - \U^\sg \U^\sg{}^\top) \Ustar \bstar_t\|.
\]
Under the assumption that $\SE_2(\U^\sg,\Ustar) \le \delta_\t$ with $\delta_\t < 0.02$, this directly implies the following 
\ben
\item $\|\b_t - \g_t\| \le 0.4 \delta_\t \|\bstar_t\|$

\item $\|\b_t\| \le \|\g_t\|+0.4 \cdot 0.02 \|\bstar_t\| \le 1.1 \|\bstar_t\|$

\item $\|\x_t - \xstar_t\| \le 1.4  \delta_\t \|\bstar_t\|$ \\ (follows by adding/subtracting $\U^\sg \g_t$ and using triangle inequality).
\een
Using above,
\[
\|\B - \G\|_F \le 0.4 \delta_\t \sqrt{\sum_t \|\bstar_t\|^2}  \le 0.4 \delta_\t \sqrt{r} \sigmax
\]
and similarly for $\|\X - \Xstar\|_F$. Thus,
\ben
\item[4.] $\|\B - \G\|_F \le 0.4 \delta_\t \|\Bstar\|_F \le  0.4 \delta_\t \sqrt{r} \sigmax $

\item[5.] $\|\X - \Xstar\|_F \le 1.4 \sqrt{r} \delta_\t \sigmax $

\een
To bound the singular values of $\B$, we show next that $\B$ is close to $\DD$ and bound the singular values of $\DD$. To do this, we use the fact that every $\U^\sg$ is close to $\U^\sone$ to show that $\G$ is close to $\DD$. Since we have already bounded $\B -\G$ this implies a bound on $\B - \DD$.

Recall from the theorem statement that $\tb_\t$ is the upper bound on $\max_{g' \neq g} \|\U^\sg - \U^\sgp\|_F$.
We have
\begin{align*}
\|\G - \DD\|_F^2 \le \max_g \|\U^\sg - \U^\sone\|_F^2 \sum_g \|\Xstar{}^\sg\|_F^2 \le \tb_\t^2 r \sigmax^2.
\end{align*}
Thus, using the above and the bound on $\B - \G$,
\[
\|\B - \DD\|_F \le (0.4 \delta_\t  + \tb_\t) \sqrt{r}\sigmax.
\]
Next, we have
\[
\sigmamin(\DD) \ge \sigmamin(\U^\sone{}^\top \Ustar )\sigmamin(\Bstar)=\sigmamin(\U^\sone{}^\top \Ustar ) \sigmin,
\]
\begin{align*}
\sigmamin^2(\U^\sone{}^\top \Ustar)
 & = \lambda_{\min}( \Ustar^\top \U^\sone \U^\sone{}^\top\Ustar )  \\
& = \lambda_{\min}(\I - \Ustar^\top \Ustar + \Ustar^\top \U^\sone \U^\sone{}^\top\Ustar) ) \\
& = \lambda_{\min}(\I - \Ustar^\top (\I -  \U^\sone \U^\sone{}^\top ) \Ustar )  \\
& = 1 - \lambda_{\max}(\Ustar^\top (\I -  \U^\sone \U^\sone{}^\top ) \Ustar )  \\
& = 1 - \lambda_{\max}(\Ustar^\top (\I -  \U^\sone \U^\sone{}^\top )^2 \Ustar )  \\
& = 1 - ||(\I -  \U^\sone \U^\sone{}^\top ) \Ustar||^2 
\ge 1 - \delta_\t^2.
\end{align*}
We used $\Ustar^\top \Ustar=\I$ and $(\I -  \U^\sone \U^\sone{}^\top ) = (\I -  \U^\sone \U^\sone{}^\top )^2$.
Also,
\[
\sigmamin(\B) \ge \sigmamin(\DD) - \|\B - \DD\| \ge \sigmamin(\DD) - \|\B - \DD\|_F.
\]
%
Combining the above three bounds, if $\delta_\t < 0.02 / \sqrt{r} \kappa^2$, then
\[
\sigmamin(\B) \ge \sqrt{1- \delta_\t^2}\sigmin -  (0.4 \delta_\t  + \tb_\t) \sqrt{r}\sigmax > 0.9\sigmin,
\]
if $\tb_\t \le 0.1 \delta_\t$ and $\delta_\t \le 0.1/\sqrt{r}\kappa$.
Finally, since $\|\DD\| \le \|\Bstar\| = \sigmax$, under the above bounds on $\delta_\t, \tb_\t$,
\[
\sigma_{\max}(\B) \le  \sigmax + (0.4 \delta_\t  + \tb_\t) \sqrt{r}\sigmax \le 1.1 \sigmax.
\]
We have thus proved Theorem \ref{min_step_claim}.
\end{proof}
To prove the GD step, define
\begin{align*}
\nabla f_g & : = \nabla f_g(\U^\sg, \B_\ssg)  : = \sum_{t \in \S_g} \nabla_\U f_t(\U^\sg, \b_t ) \\
\hatgradU^\sg & := \text{AvgCons}_g( \nabla f_{g'}(\U^\sgp, \B_\ssgp), g' \in [L] ) \\
\gradU & := \sum_{g'} \nabla f_{g'}(\U^\sgp, \B_\ssgp)
\end{align*}
$\nabla f_g$ is the input to the AvgCons algorithm at node $g$ and $\hatgradU^\sg$ is an estimate of $\gradU$ obtained by running $T_\con$ iterations of the consensus algorithm AvgCons at node $g$.

Define the error terms
\begin{align}
\conserr^\sg &  :=  \gradU -\hatgradU^\sg  \nn \\
\err & :=  \E[\gradU]- \gradU \nn \\
\Uerr^\sg & :=   \sum_{g'} ( \U^\sg -\U^\sgp) \B_\ssgp \B_\ssgp{}^\top
\label{def_errterms}
\end{align}
and define
$\P:= \I - \Ustar \Ustar^\top.$
Now, we present the GD claim in Theorem~\ref{gd_step_claim}.
\begin{theorem}[GD step and consensus for $\U$] \label{gd_step_claim}
Assume that $\eta = c_\eta / n \sigmax^2 $ with $c_\eta \le 0.5$.
 Assume that $\SE_2(\U^\sg,\Ustar) \le \delta_\t$, $\|\U^\sg - \U^\sgp\| \le \tb_\t$,  $\|\conserr^\sg\|_F \le n \eps_\con  \sigmin^2$.

If $\delta_\t \le 0.02 / \sqrt{r} \kappa^2$,  $\tb_\t < 0.1 \delta_\t/\kappa^2$, $\eps_\con < 0.1 \delta_\t / \kappa^2$,
and if $nT \ge C \kappa^4 \mu^2 dr $ and $n \gtrsim \max(\log d, \log T, r)$  at iteration $k$,  then, whp,     
\[ 
\SE_2(\U^\sg_+, \Ustar) \le \delta_{\t+1}:=   \delta_\t (1 - 0.6 \frac{c_\eta}{\kappa^2})\mbox{~and}
\]
\[
\max_{g'} \|\U^\sg_+ - \U^\sgp_+\|  \le \tb_{\t+1}:= 1.7 (\tb_\t + (2/\kappa^2)  c_\eta \eps_\con)
\]
\end{theorem}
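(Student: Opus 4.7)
I would analyze the unnormalized update $\Utilde^\sg_+ = \U^\sg - \eta \hatgradU^\sg$ and then transfer the conclusions to $\U^\sg_+$ through the QR step. Using the definitions in \eqref{def_errterms}, write $\hatgradU^\sg = \gradU - \conserr^\sg$ and $\gradU = \E[\gradU] - \err$. Because of the sample-splitting in Algorithm~\ref{altgdmin_dec}, the $\A_k$'s appearing in the $t$-th gradient are independent of the current iterates $\U^\sg$ and $\B_\ssg$, so the expectation is taken with these treated as constants. Using $\y_k = \A_k \xstar_k$ and $\E[\A_k^\top \A_k] = m \I$ gives $\E[\gradU] = m \sum_{g'}(\U^\sgp \B_\ssgp - \Xstar_\ssgp) \B_\ssgp^\top$. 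Adding and subtracting $m \U^\sg \B \B^\top$ rewrites this cleanly as $\E[\gradU] = m (\U^\sg \B - \Xstar)\B^\top - m \Uerr^\sg$, i.e., the centralized expected gradient at node $g$ plus a disagreement correction.

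\textbf{Subspace-distance contraction.}
Applying $\P = \I - \Ustar \Ustar^\top$ annihilates $\Xstar \B^\top$ and produces the key recursion
\[
\P \Utilde^\sg_+ = \P \U^\sg (\I - \eta m \B \B^\top) + \eta m \P \Uerr^\sg + \eta \P \err + \eta \P \conserr^\sg.
\]
The hypotheses $\delta_t \le 0.02/(\sqrt{r}\kappa^2)$ and $\tb_t \le 0.1 \delta_t$ trigger Theorem~\ref{min_step_claim}, yielding $0.9 \sigmin \le \sigmamin(\B) \le \sigmamax(\B) \le 1.1 \sigmax$; with $\eta = c_\eta/(m \sigmax^2)$ and $c_\eta \le 0.5$ this gives the contraction $\|\I - \eta m \B \B^\top\| \le 1 - 0.81 c_\eta/\kappa^2$. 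I would then bound each error term: (i) $\|\P \Uerr^\sg\|$ via $\max_{g'}\|\U^\sg - \U^\sgp\| \le \tb_t$ combined with $\|\B_\ssgp\| \le 1.1 \sigmax$; (ii) $\|\P \err\| \le c_1 m \sigmin^2 \delta_t$, with probability at least $1 - n^{-10}$, via a matrix-Bernstein concentration of the independent summands $(m\I - \A_k^\top \A_k)(\x_k - \xstar_k)\b_k^\top$ (after substituting the ideal values $\g_k = \U^\sg{}^\top \xstar_k$ for the coupled $\b_k$'s and using $mq \gtrsim \kappa^4 \mu^2 nr$ along with the incoherence bound on $\bstar_k$); (iii) $\|\P \conserr^\sg\| \le m \eps_\con \sigmin^2$ directly from the hypothesis. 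The assumed smallness of $\tb_t$ and $\eps_\con$ absorbs these corrections into the contraction to give $\|\P \Utilde^\sg_+\| \le (1 - 0.7 c_\eta/\kappa^2)\delta_t$. A companion lower bound $\sigmamin(\Utilde^\sg_+) \ge 1 - O(c_\eta \sqrt{r}\delta_t/\kappa)$ follows from $\sigmamin(\U^\sg) = 1$ by a Weyl argument applied to $\|\hatgradU^\sg\|$; combining this with the QR identity $\U^\sg_+ = \Utilde^\sg_+(\R^\sg_+)^{-1}$ yields $\SE_2(\U^\sg_+, \Ustar) \le \delta_t(1 - 0.6 c_\eta/\kappa^2) = \delta_{t+1}$.

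\textbf{Consensus on $\U^\sg_+$.}
Subtracting the unnormalized updates at nodes $g$ and $g'$ and noting that the ``true'' sum $\gradU$ is node-independent, so that $\hatgradU^\sg - \hatgradU^\sgp = \conserr^\sgp - \conserr^\sg$, gives
\[
\Utilde^\sg_+ - \Utilde^\sgp_+ = (\U^\sg - \U^\sgp) + \eta (\conserr^\sgp - \conserr^\sg),
\]
so $\|\Utilde^\sg_+ - \Utilde^\sgp_+\| \le \tb_t + 2 \eta m \eps_\con \sigmin^2 = \tb_t + 2 c_\eta \eps_\con/\kappa^2$. To convert this into a bound on the $Q$-factors, I would invoke the perturbed-QR result, Proposition~\ref{pqr_res}: the lower bound $\sigmamin(\Utilde^\sg_+), \sigmamin(\Utilde^\sgp_+) \ge 1 - O(\text{small})$ from the previous paragraph lets me take the perturbation constant to be at most $1.7$, giving $\|\U^\sg_+ - \U^\sgp_+\| \le 1.7(\tb_t + 3 \kappa^2 c_\eta \eps_\con) = \tb_{t+1}$ after folding in the numerical constants from the QR-perturbation and from $\|\conserr^\sg\|$.

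\textbf{Main obstacle.}
The most delicate step is the probabilistic bound $\|\P \err\| \le c_1 m \sigmin^2 \delta_t$: the summands are not independent as written because $\b_k = (\A_k \U^\sg)^\dag \y_k$ depends non-linearly on $\A_k$. Decoupling requires replacing $\b_k$ by $\g_k = \U^\sg{}^\top \xstar_k$ and controlling the residual via Theorem~\ref{min_step_claim}; the resulting sum of independent sub-exponential rank-one matrices must then be handled with a matrix-Bernstein inequality combined with an $\epsilon$-net over the range of $\P$. Tracking constants tightly enough to land at the sharp rate $(1 - 0.6 c_\eta/\kappa^2)\delta_t$---so that the downstream iteration count is the optimal $T = O(\kappa^2 \log(1/\epsfin))$---is the main quantitative difficulty. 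A secondary subtlety is verifying that the QR-perturbation constant stays as small as $1.7$, which in turn requires the gradient-norm bound that keeps $\Utilde^\sg_+$ close to an orthonormal matrix.
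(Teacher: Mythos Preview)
Your proposal is essentially correct and matches the paper's approach: decompose $\P\Utilde_+^\sg$ into the contraction term $\P\U^\sg(\I-\eta m\B\B^\top)$ plus the three error terms $\Uerr$, $\err$, $\conserr$; invoke Theorem~\ref{min_step_claim} for the singular-value bounds on $\B$; and pass through QR via Proposition~\ref{pqr_res} for both the $\SE_2$ bound and the node-consensus bound.

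Two small points where you diverge from the paper. First, your ``main obstacle'' is a non-issue, and you already said why: because of the sample-splitting (lines 9--13 of Algorithm~\ref{altgdmin_dec}), the $\A_k$'s appearing in $\nabla f_g$ come from the set indexed by $T+t$ and are independent of the $\b_k$'s, which were computed from the set indexed by $t$. Hence $\b_k$ and $\x_k-\xstar_k$ are genuinely constants when you take the expectation and apply concentration, and no replacement of $\b_k$ by $\g_k$ is needed. The paper's Lemma~\ref{grad_bnd_new2} (Appendix~\ref{grad_bnd_new2_proof}) does exactly this: it treats $\b_k$ and $\x_k-\xstar_k$ as fixed and directly bounds the independent sub-exponential summands $\w^\top\a_\ik\,\a_\ik^\top(\x_k-\xstar_k)\,\b_k^\top\z$. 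Second, the paper uses scalar sub-exponential Bernstein plus an $\eps$-net over $(\w,\z)$ rather than matrix Bernstein; this is the more natural tool here since the summands are products of Gaussians (sub-exponential, not sub-Gaussian), and it gives the clean $mq\gtrsim\kappa^4\mu^2 nr$ requirement without any truncation step. Your matrix-Bernstein route would also work but needs a version that handles unbounded sub-exponential summands.
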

\begin{proof}
Combining Lemma~\ref{gd_lemma} and Lemma~\ref{bt_lemma}, the proof of Theorem~\ref{gd_step_claim} follows. The proofs of the lemmas are presented in Section~\ref{sec:supp_lem}.
\end{proof}
%
\subsection{Proving Lemma \ref{conserr_bnd_lemma}: Bounding $\conserr$}
We first present the following supporting lemma.
\begin{lemma}\label{avgcons_prop}
If
\[
\scalemath{0.9}{T_\con \ge \frac{3}{2} \Tconexp \log (L \sqrt{L} /\eps_\con ) = C \dfrac{\log (L /\eps_\con )}{\log(1/\gamma(\W))},}
\]
then
\[
\|\conserr^\sg\|_F \le \eps_\con \max_g \|\inperr^\sg\|_F
\]
Here $\inperr^\sg$ is the error between input of node $g$ to the consensus algorithm and the desired sum to be computed and $\conserr^\sg$ is the error between the final output at node $g$ and $T_\con$ iterations and the desired sum.
\end{lemma}
\begin{proof}
    Recall that $\conserr^\sg:= \hatgradU^\sg - \gradU $ and $\hatgradU^\sg$ is the output of our AvgCons algorithm applied to each entry of $\nabla f_g$. Define
\[
\inperr^\sg  := \nabla f_g - \gradU   =  \sum_{g' \neq g}\nabla f_{g'}  \nn \\
\]
with $\nabla f_g$ being the partial gradient sum at node $g$. This is the input to the AvgCons algorithm.

We bound $\conserr$ in terms of $\inperr$ by using Proposition \ref{avgcons_prop_scalar} for scalar consensus.
%
We apply this result to each of the $dr$ scalar entries of $\nabla f_g$. Thus, for the $(j,j')$-th entry, $z_{true} = (\gradU)_{j,j'}$, $z_{\inp}^\sg = (\nabla f_g)_{j,j'}$,
and $z_{out}^\sg = (\hatgradU^\sg)_{j,j'}$. Assume the graph of the network is connected.
%
%
Using this result, if  $T_\con \ge \Tconexp \log L/\epsconscalar $
then
\[
\max_{g} |\conserr_{j,j'}^\sg| \le \epsconscalar \max_g |\inperr_{j,j'}^\sg|
\]
Using $\sum_{j,j'} \max_g |\inperr_{j,j'}^\sg|^2 \le \sum_{j,j'} \sum_g |\inperr_{j,j'}^\sg|^2 = \sum_g \|\inperr^\sg\|_F^2 \le L \max_g \|\inperr^\sg\|_F^2 $,
\[
\max_g \|\conserr^\sg\|_F \le \epsconscalar \cdot \sqrt{L} \max_g \|\inperr^\sg\|_F
\]
By setting $\epsconscalar = \eps_\con / \sqrt{L}$, we complete the proof.
\end{proof}

%

The next lemma bounds consensus error. We prove it by obtaining a bound on $\inperr^\sg$. To do this, we first bound $\|\E[\inperr^\sg]\|$ and then bound $\|\inperr^\sg - \E[\inperr^\sg]\|$ as done in the proof of Lemma \ref{grad_bnd_new2}: use sub-exponential Bernstein inequality followed by a standard epsilon-net argument.
To bound $\|\E[\inperr^\sg]\|_F$, we also need the following claim.
\begin{claim} \label{norm_cols_removed}
If $\M'$ is $\M$ with some columns removed, then $\|\M'\| \le \|\M\|$.
\end{claim}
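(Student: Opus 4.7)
The plan is to prove Claim \ref{norm_cols_removed} by exhibiting $\M'$ as $\M$ post-multiplied by a column-selection matrix, and then using submultiplicativity of the operator norm. Concretely, if $\M$ is $d \times q$ and we keep the columns indexed by $\mathcal{I} \subseteq [q]$ (with $|\mathcal{I}| = q'$), we can write $\M' = \M \bm{P}$, where $\bm{P}$ is the $q \times q'$ matrix whose columns are $\{\e_k : k \in \mathcal{I}\}$ (canonical basis vectors). Since the columns of $\bm{P}$ are a subset of the canonical basis, $\bm{P}^\top \bm{P} = \I_{q'}$, so $\|\bm{P}\| = 1$.

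Then submultiplicativity of the spectral norm gives
\[
\|\M'\| = \|\M \bm{P}\| \le \|\M\| \cdot \|\bm{P}\| = \|\M\|,
\]
which is the desired inequality.

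Alternatively, one can argue directly from the variational characterization: for any unit vector $\v \in \mathbb{R}^{q'}$, let $\tilde{\v} \in \mathbb{R}^q$ be the zero-padded extension of $\v$ (entries of $\v$ placed at the positions in $\mathcal{I}$, zeros elsewhere). Then $\|\tilde{\v}\| = \|\v\| = 1$ and $\M' \v = \M \tilde{\v}$, so $\|\M' \v\| = \|\M \tilde{\v}\| \le \|\M\|$. Taking the supremum over unit $\v$ yields $\|\M'\| \le \|\M\|$.

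There is no real obstacle here; the only subtlety is noticing that column-removal is post-multiplication by a matrix with orthonormal columns, which trivially has unit operator norm. I would present the first (one-line) argument for brevity.
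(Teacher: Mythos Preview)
Your proof is correct; in fact the paper states this claim without proof, treating it as an elementary fact. Either of your two arguments (column-selection matrix with orthonormal columns plus submultiplicativity, or the direct zero-padding variational argument) is a valid and standard justification.
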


\begin{lemma} \label{conserr_bnd_lemma}
Assume that,  for all $g$, $\SE_2(\U^\sg,\Ustar) \le \delta_\t$ and $\max_{g \neq g'} \|\U^\sg - \U^\sgp\| \le \tb_\t$. Also, $nT \ge  C \kappa^4 \mu^2 d r$.
If $T_\con =  \Tconexp \log (L /\eps_\con )$, then, whp, $\|\conserr^\sg\|_F \le \eps_\con n \sigmin^2$.
\end{lemma}
\begin{proof} For a matrix $\Z$, let $\Z_{\setminus g}$ denotes a submatrix of $\Z$ obtained after eliminating the columns $t \in \S_g$.
Using Claim \ref{norm_cols_removed} and bounds on $\|\B\|$ and  $\|\Xstar - \X\|_F$ from  Theorem \ref{min_step_claim}, if $\delta_\t < \frac{c}{\sqrt{r} \kappa^2}$
\begin{align}
\| \E[\inperr^\sg]\|_F   & =  \|\sum_{g'\neq g}\sum_{t \in \S_{g'}}n (\x_t-{\xstar_t})\b_t{}^\top \|_F  \nn\\
&= n\|(\X_{\setminus g} -\Xstar_{\setminus g} )\B_{\setminus g}{}^\top \|_F  \nn\\
&\le n\|(\X_{\setminus g} -\Xstar_{\setminus g}\|_F \cdot \| \B_{\setminus g}{}^\top \|  \nn
\end{align}
\begin{align}
&\le n\|\X - \Xstar\|_F \cdot \| \B\|    \le 1.1 n\delta_\t \sqrt{r}  \sigmax^2. \nn
\end{align}
w.p. $1- \exp(\log T+r-cn)$. The last step used Theorem \ref{min_step_claim}.

Proceeding exactly as in the proof of Lemma \ref{grad_bnd_new2}, we can show that, if $\delta_\t \le c/\sqrt{r} \kappa^2$,  $\tb_\t \le 0.1 \delta_\t/\kappa^2$, and $n$ satisfies the bounds stated in that lemma, then,
\[
	 \max_g \|\inperr^\sg - \E[\inperr^\sg] \| \le  1.1 \eps_1 \delta_\t n \sigmin^2
\]
w.p. at least  $1-   \exp ( C(d+r) -c\eps_1^2nT/ \kappa^4 \mu^2 r )- \exp(\log T+r-cn)$.

Thus, with the above probability, $\|\inperr^\sg - \E[\inperr^\sg] \|_F \le  \sqrt{r} \cdot 1.1 \eps_1 \delta_\t  n \sigmin^2$.
Setting $\eps_1 =0.1$, we conclude that
w.p. at least  $1-  L \exp ( C(d+r) -c \frac{nT}{ \kappa^4 \mu^2 r} )- \exp(\log T+r-cn)$,
\[
\max_g  \|\inperr^\sg\|_F \le (1.1 + 0.11) n (\delta_\t \sqrt{r}) \sigmax^2 \le 0.1 n \sigmin^2
\]
if $\delta_\t$ and $\tb_\t$ are bounded as stated above.
Combining this with Lemma \ref{avgcons_prop}, we have proved the lemma. 
 \end{proof}

\subsection{Proof of Theorem \ref{gd_thm}}
\begin{proof}[Proof of Theorem \ref{gd_thm}]
 Simplifying the recursion of Thm.\ref{gd_step_claim}, 
\[
\tb_{\t+1} = 1.7^{\t+1} \tb_0 + 3.3 \t 1.7^{\t} \cdot c_\eta  \eps_\con/\kappa^2, \text{ and }
\]
\[
\delta_{\t+1} = (1  - 0.6 \frac{c_\eta}{\kappa^2})^{\t+1} \delta_0
\]

Since $\delta_\t$ decreases with $\t$ and $\tb_\t$ increases with $\t$, thus, the bounds of Theorem \ref{gd_step_claim} hold if $\delta_0 \le c / \sqrt{r} \kappa^2$, $\tb_{T_\gd} \le 0.1 \delta_{T_\gd}/\kappa^2 $,  and $\eps_\con < 0.1 \delta_{T_\gd} / \kappa^2 $.
Using the expressions above, in order to ensure $\delta_{T_\gd} = \epsfin$, we need
\ben

\item $T_\gd \ge C  \frac{\kappa^2}{c_\eta} \log(1/\epsfin)$,

\item  $\delta_0 \le c / \sqrt{r} \kappa^2$,
\item $\eps_\con < 0.1  \epsfin/\kappa^2$,
\item
$
\tb_{T_\gd} = 1.7^{T_\gd} 1.7 \tb_0 + 3.3 \cdot  T_\gd \cdot 1.7^{T_\gd} \cdot c_\eta  \eps_\con/\kappa^2 \le 0.1 \epsfin / \kappa^2.
$
This holds if
\ben
\item[a)]  $\tb_0 \le c^{T_\gd} \epsfin / \kappa^2$, with $c=1/1.8$, and 
\item[b)] $c_\eta \eps_\con \le c^{T_\gd} \epsfin / (T_\gd).$ 
\een
\een
Setting $c_\eta = 0.4$ (or any constant $\le 0.5$), we thus need
(1) $T_\gd = C \kappa^2 \log(1/\epsfin)$ and
(2) $\eps_\con  = c^{T_\gd} \epsfin /  (T_\gd)$.

By Lemma \ref{conserr_bnd_lemma}, if $T_\con =  \Tconexp \log (L /\eps_\con )$, then $\|\conserr^\sg\|_F \le \eps_\con  n \sigmin^2$.
Thus, to get $\eps_\con = c^{T_\gd} \epsfin /  (T_\gd)$, we need to set
\[
T_\con 
=  C \Tconexp (\log L +  T_\gd + \log T_\gd + \log( 1/\epsfin))
\]

By using this expression and the above and Theorem \ref{gd_step_claim}, we have proved Theorem \ref{gd_thm}.
\end{proof}
\begin{brem}
We get the $t 1.7^\t \eps_\con $ type factor in the expression for $\tb_{\t+1}$ given above because of the projected GD step. If there was no projection (simple GD), we would get a factor of only $\t \eps_\con $. This would eliminate the need for $T_\con$ to depend on $T_\gd$, it would instead only depend on $\log T_\gd$.
%
\end{brem}
\subsection{Supporting Lemmas for proving Theorem~\ref{gd_step_claim}}\label{sec:supp_lem}
Assume that, for all $g$.
\bi
\item $\|\conserr^\sg\|_F \le n \eps_\con \sigmin^2$ for all $g$

\item $\SE_2(\U^\sg, \Ustar) \le \delta_\t$
\item $\|\U^\sg - \U^\sgp\|_F \le \tb_\t$.

\ei

The result in this section uses Theorem \ref{min_step_claim} from the previous section. We first present and prove the supporting lemmas.
\begin{lemma}[Improved version of gradient deviation lemma from \cite{lrpr_gdmin}]
\label{grad_bnd_new2}
Assume that,  for all $g$, $\SE_2(\U^\sg,\Ustar) \le \delta_\t$ and $\max_{g \neq g'} \|\U^\sg - \U^\sgp\| \le \tb_\t$.
The following hold:
\ben
\item $\E[\gradU] = m (\X - \Xstar) \B^\top$
\item $\|\E[\gradU]\| \le 1.1 n\delta_\t \sqrt{r}  \sigmax^2$
\item
If $\delta_\t < \frac{c}{\sqrt{r} \kappa}$, and $\tb_\t < 0.1 \delta_\t$
then,
\\
w.p. at least $1-  \exp (C (d+r) -c\frac{\eps_1^2 n T }{\kappa^4 \mu^2 r} )- \exp(\log T+r-cn)$
 \[
\|\gradU - \E[\gradU]\| \le  \eps_1 \delta_\t n \sigmin^2.
\]
\een
\end{lemma}
\begin{proof}
See  Appendix \ref{grad_bnd_new2_proof}  in \cite{TAC_arxiv} (arxiv version of the paper).
Proof follows similar to that in the centralized case. The key difference is that, in the $\gradU$ expression, different matrices $\U^\sg$ are used at the different nodes. 
The centralized case bound itself is a strengthening of what was proved in \cite{lrpr_gdmin} (Lemma 3.5, part 1). 
%
Differences from \cite{lrpr_gdmin} (Lemma 3.5, part 1):  (i) We use $\SE_2$ instead of $\SE_F$ as our subspace distance measure, and hence we use Theorem \ref{min_step_claim} instead of the corresponding result used in  \cite{lrpr_gdmin}. (ii) Because of this, there is one key change in how we bound the probability while applying the sub-exponential Bernstein inequality. These two changes in fact allow us to get a bound with a better sample complexity than the one proved in \cite{lrpr_gdmin}.
 \end{proof}
\begin{lemma}\label{gd_lemma}
If $\delta_\t < c / \sqrt{r} \kappa^2$,  $\tb_\t < 0.1 \delta_\t/\kappa^2$,  $\eps_\con < 0.1 \delta_\t/\kappa^2$,  if $\eta = c_\eta /n \sigmax^2$ with $c_\eta < 0.5$, and if $nT \gtrsim  \kappa^4 \mu^2 dr$ and $n \gtrsim \max(\log d, \log T, n)$,
then, whp, 
\[
\SE_2(\U^\sg_+, \Ustar) \le \delta_{\t+1}:= (1 - 0.6 \frac{c_\eta}{\kappa^2}) \delta_\t
\]
\end{lemma}
\begin{proof}
Recall that the GD step  of the algorithm is
\begin{align}
\Utilde^\sg_+ &= \U^\sg - \eta \hatgradU^\sg  \nn \\
\Utilde^\sg_+ &\qreq  \U^\sg_+ \R^\sg_+.
\label{GDstep}
\end{align}
Since $ \U_+= \Utilde_+ (\R_+)^{-1}$ and since $\|(\R_+)^{-1}\| =1/ \sigmamin(\R_+) = 1/ \sigmamin(\Utilde_+)$, thus,
\begin{eqnarray}
\SE_2(\U_+^\sg, \Ustar) & = \| \P \U_+^\sg\| \le \dfrac{\|\P \Utilde_+^\sg\|}{\sigmamin(\Utilde_+^\sg)}.
\label{SDeq}
\end{eqnarray}

Consider the numerator.
Adding/subtracting $\gradU$ and then also adding/subtracting $\E[\gradU]$, we get
\[
\Utilde^\sg_+ = \U^\sg - \eta \E[\gradU] + \eta \err + \eta \conserr^\sg.
\]
Using the first claim of Lemma \ref{grad_bnd_new2}, and projecting orthogonal to $\Ustar$,
\begin{align*}
& \P \Utilde^\sg_+ \\
& = \P \U^\sg - \eta n \P \X \B^\top + \eta  \P (\err + \conserr^\sg)
\end{align*}
since $\P \Xstar = \bm{0}$.
Now
\[
 \X \B^\top =  \sum_{g'} \U^\sgp \B_\ssgp \B_\ssgp{}^\top.
 \]
Adding/subtracting $ \U^\sg \sum_{g'} \B_\ssgp \B_\ssgp{}^\top$ gives
\[
  \X \B^\top
 =  \U^\sg \sum_{g'} \B_\ssgp \B_\ssgp{}^\top -  \Uerr^\sg
\]
Using above and the fact that $\sum_{g'} \B_\ssgp \B_\ssgp{}^\top = \B \B^\top$,
\begin{align}
 \P \Utilde^\sg_+ = & \P \U^\sg (\I - \eta n \B \B^\top) +  \nn \\
&\hspace*{-5 mm}   \eta  \P(n\Uerr^\sg +   \err +  \conserr^\sg)
\label{PUtilde}
\end{align}

By our assumption in the beginning of this section (also the assumption made in the Theorem),
\begin{align*}
\|\conserr^\sg\| & \le \eps_\con n \sigmin^2, \\
 \|\Uerr^\sg\| & \le \tb_\t \|\B\|^2 \le  \tb_\t    \sigmax^2
\end{align*}
where the last inequality used Theorem \ref{min_step_claim}.

By Lemma \ref{grad_bnd_new2} with $\eps_1 = 0.1$, if $\delta_\t \le 0.1/\sqrt{r} \kappa$, $\tb_\t \le 0.1 \delta_\t$, and $nT \ge C \kappa^4 \mu^2 dr $ and $n \gtrsim \max(\log d, \log T, r)$, then,
\[
\|\err\| \le 0.1 n \delta_\t\sigmin^2.
\]

Consider the first term of Eq.~\eqref{PUtilde}. Using Theorem \ref{min_step_claim}, 
\[
\lambda_{\min}(\I -  \eta n \B \B^\top) = 1 - \eta n \|\B\|^2 \ge 1 - 1.2 \eta n \sigmax^2.
\]
Thus, if $\eta < 0.5/ n \sigmax^2 $, then the above matrix is p.s.d. This along with Theorem \ref{min_step_claim} then implies that
\[
\|\I -  \eta n \B \B^\top\| = \lambda_{\max}(\I -  \eta n \B \B^\top) \le 1 - 0.9 \eta n  \sigmin^2
\]

Thus, if $\eta = c_\eta /n \sigmax^2$ with $c_\eta < 0.5$, $ \delta_\t < 0.01 / \sqrt{r} \kappa^2$, $\tb_\t  \kappa^2 < 0.1 \delta_\t$,  $\eps_\con  < 0.1 \delta_\t$, and if $nT \gtrsim \kappa^4 \mu^2 dr$ and $n \gtrsim \max(\log d, \log T, r)$, then, $ \|\P \Utilde^\sg_+\|$
\begin{align*}
 & \le \|\P \U^\sg\| (1 - 0.9\frac{c_\eta}{\kappa^2}) + \eta n (\tb_\t  \kappa^2  +  0.1 \delta_\t  + \eps_\con )\sigmin^2 \\
& \le \delta_\t (1 - 0.9\frac{c_\eta}{\kappa^2}) + \frac{ c_\eta}{\kappa^2} (\tb_\t  \kappa^2  +  0.1 \delta_\t  + \eps_\con  ) \\
& \le \delta_\t (1 - 0.9\frac{ c_\eta}{\kappa^2}) + \frac{ c_\eta}{\kappa^2} 0.3 \delta_\t = \delta_\t (1 - 0.6\frac{c_\eta}{\kappa^2}).
\end{align*}
The above used  $\|\P \U^\sg\| = \delta_\t$.
%
Next, we obtain a lower bound on $\sigmamin(\tilde\U^\sg_+)$. If $ \delta_\t < 0.01 / \sqrt{r} \kappa^2$, $\tb_\t   < 0.1 \delta_\t$ (needed for using the bound on $\err$ from Lemma \ref{grad_bnd_new2}), $\eps_\con \kappa^2 < 0.1 \delta_\t$, and the lower bounds on $m$ from above hold, then
\begin{align}
 \sigmamin(\Utilde^\sg_+) \nn & \scalemath{0.93}{\ge  {\sigmamin(\U^\sg) - \eta (\|\E[\mathrm{gradU}]\| + \|\err \| + \|\conserr^\sg\|)}} \nn \\
& \ge {1 - c_\eta \delta_\t \sqrt{r} \kappa^2 (1.4  + \frac{0.1}{\kappa^2 \sqrt{r}} + \frac{\eps_\con }{\sqrt{r} \delta_\t }  )   } \nn  \\
& \ge {1 -  1.7 c_\eta \delta_\t \sqrt{r}}\label{eq:kappar}
\end{align}
The above used the bound on $\|\E[\gradU]\|$ from Lemma \ref{grad_bnd_new2} and  
$\kappa^2\sqrt{r} >1$ and $\eps_\con < 0.1 \delta_\t$.

Using the last two bounds above and Eq.~\eqref{SDeq},
\begin{align}
& \SE_2(\U^\sg_+, \Ustar)
\le \frac{\delta_\t (1 - 0.6\frac{c_\eta}{\kappa^2})}{1 -  1.7 c_\eta \delta_\t \sqrt{r}} \nn \\
& \le \delta_\t (1 - 0.6\frac{c_\eta}{\kappa^2}) (1 + 3.4 c_\eta \delta_\t \sqrt{r} ) \nn  \\
& \le \delta_\t (1 - \frac{c_\eta}{\kappa^2} (0.6 - 3.4  \delta_\t \sqrt{r} \kappa^2) )  \le \delta_\t (1 - 0.6\frac{c_\eta}{\kappa^2}). \nn
\end{align}
We used $(1-x)^{-1} < (1+2x)$ if $|x| < 1/2$, and $ \delta_\t < 0.01 / \sqrt{r} \kappa^2$ (last row).
Thus, we have proved Lemma~\ref{gd_lemma}.
\end{proof}
We use the following result \cite{stewart1977perturbation} [Theorem~3.1], \cite{sun1995perturbation} [Corollary 4.2].
\begin{prop}[Perturbed QR factorization, Corollary 4.2 of \cite{sun1995perturbation}]\label{pqr_res}
Let $\tilde\Z_1 \qreq \Z_1 \R_1$ and $\tilde\Z_2 \qreq \Z_2 \R_2$.
Then,
\[
\|\Z_2 - \Z_1\|_F  \le \sqrt{2} \frac{\|\tilde\Z_2 - \tilde\Z_1\|_F}{\sigmamin(\tilde\Z_1)}
\]
if the RHS is less than $\sqrt{2} \cdot 4/\sqrt{10}$.
\end{prop}

\begin{lemma} \label{bt_lemma}
Assume everything from Lemma \ref{gd_lemma}. Then,
\[
\|\U_+^\sg - \U_+^\sgp\| \le \tb_{\t+1}:=  1.7 (\tb_\t + (2 /\kappa^2)  c_\eta \eps_\con)
\]
\end{lemma}
\begin{proof}
Using Eq.~\eqref{GDstep},  adding/subtracting $\gradU$, and recalling that $\tb_\t$ is the upper bound on $\|\U^\sg- \U^\sgp\|_F$,
\begin{align*}
& \|\tilde\U^\sg - \tilde\U^\sgp\|_F \\
& \le \|\U^\sg- \U^\sgp\|_F + \eta \|\conserr^\sg - \conserr^\sgp\|_F \\
& \le \tb_\t + \eta \cdot 2 \cdot \| \conserr^\sg \|_F
\end{align*}
Using Proposition \ref{pqr_res} and Eq.~\eqref{eq:kappar}, under the assumptions made for it,
\begin{align*}
& \|\U_+^\sg - \U_+^\sgp\|_F
  \le 1.5\frac{\|\Utilde_+^\sg - \Utilde_+^\sgp\|_F}{\sigmamin(\Utilde_+^\sg)}  \\
& \le   1.5 \frac{(\tb_\t + \eta \cdot 2 \cdot \|\conserr^\sg\|_F)}{1 -  1.7 c_\eta \delta_\t \sqrt{r}}  \le 1.5 \frac{(\tb_\t + (2/\kappa^2)   c_\eta   \eps_\con)}{1 - 0.1 c_\eta}.
\end{align*}
Using $c_\eta \le 0.5$ in the denominator, this gives 
\[
\|\U_+^\sg - \U_+^\sgp\|_F   \le 1.1 \cdot 1.5  (\tb_\t + (2/ \kappa^2)  c_\eta  \eps_\con) :=\tb_{\t+1}
\]

We have thus proved the Lemma~\ref{bt_lemma}.
\end{proof}
%
\section{Proving Theorem \ref{init_step_PM_claim}: Initialization}
\label{init_PM_thm_proof}  

{\cblue The overall idea is as follows. We obtain bounds for node 1. We show that all other nodes' subspace basis estimates are within an $\eps_\con$ distance of node 1, i.e. $\|\Upm_\tau^\sg - \Upm_\tau^\sone\|_F \le \eps_\con$ at each PM iteration $\tau$, including the final one. This implies a similar bound on  $\SE(\Upm_\tau^\sg, \Upm_\tau^\sone)$. This follows since
$\SE(\Upm_1,\Upm_2) \le 2\|\Upm_1 - \Upm_2\|_F$.

Let $\Upm_\true$ be the matrix of top $r$ singular vectors of $\X_0$.
By the triangle inequality,
\begin{align}
 \SE_2(\Upm^\sone_{T_{pm}}, \Ustar) & \le  \SE_2(\Ustar, \Upm_\true) + \SE_2(\Upm^\sone_{T_{pm}}, \Upm_\true)
\label{se_1}
\end{align}
Our goal is to bound the two terms in Eq.~\eqref{se_1}.}
\subsection{Proving Lemma~\ref{U0Ustar_bnd}: Bounding First Term in Eq.~\eqref{se_1}}\label{results_old}

The following is a corollary of  (i) the lemmas proved in \cite{lrpr_gdmin} and (ii) scalar consensus guarantee Proposition \ref{avgcons_prop_scalar} applied to show that each $\alpha^\sg$ is a close approximation of $\alpha$ (truncation threshold). We provide a proof in Appendix \ref{altdmin_results_init}.
\begin{corollary}\label{init_bnd_cor}
Let $\E[.]$ be $\E[. |\alpha^\sg, \forall  g \in [L]]$.
The following holds
\ben
\item $\E[\Xhat_0] = \Xstar \D = \Ustar \bSigma \Vstar \D$,
where
\[
\D :=diagonal(\beta_t(\alpha^\sg), \ k \in \S_g, \ g \in [L]), \text{ and }
\]
\[
\beta_{t}( s ) := \E[\zeta^2 \indic_{\{\|\xstar_{t}\|^2\zeta^2 \leq  s\}}].
\]
with $\zeta$  being a scalar standard Gaussian r.v.. Thus, $\E[\Xhat_0]$ is a rank $r$ matrix and
$
\E[\Xhat_0] \svdeq (\Ustar \Q) \check{\bSigma} \check{\Vstar},
$
where $\Q$ is an $r \times r$ unitary matrix, $\check{\bSigma}$ is an $r \times r$ diagonal matrix of its singular values and $\check{\Vstar}$ is an $r \times T$ matrix with orthonormal rows.

\item  Fix $0 < \eps_0 < 1$. Then, w.p. at least $1-\exp\left[(d+T)- c \frac{\eps_0^2 nT}{\kappa^4 \mu^2 r} \right] -   \exp(- c nT \epsilon_1^2 / \kappa^2 \mu^2)$,
\[
		\|\Xhat_{0} -\E[\Xhat_{0}]\| \le \eps_0 \sigmax
\]
\item Also, with the above probability,
\[
0.92 \le \min_t \beta_t(\alpha^\sg) \le \max_t \beta_t(\alpha^\sg) \le 1
\]
and so $\sigmamin(\D) \ge 0.92$ and $||\D||\le 1$.
\item With the above probability,
$
\sigma_r(\E[\X_0]) \ge 0.92 \sigmin.$
\een
\end{corollary}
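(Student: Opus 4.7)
\textbf{Proof proposal for Corollary~\ref{init_bnd_cor}.}
The plan is to reduce each of the four conclusions to the centralized versions proved in \cite{lrpr_gdmin}, with the single new ingredient being a consensus guarantee for the truncation thresholds $\alpha^\sg$. For part~1, fix any $g$ and any $k\in\S_g$. Because we use a fresh sample split to form the thresholds (step using $\y_k^{(00)}$) versus the initialization matrix (step using $\y_k^{(0)}$), conditioning on $\{\alpha^\sg\}_{g\in[L]}$ leaves the $\A_k^{(0)}$'s still i.i.d.\ standard Gaussian. Thus
\[
\E[\xhat_k\mid \alpha^\sg] \;=\; \tfrac{1}{m}\E\!\left[\A_k^\top(\A_k\xstar_k)\circ \indic_{\{|\A_k\xstar_k|\le\sqrt{\alpha^\sg}\}}\right],
\]
and by the standard Gaussian-truncation identity (rotating coordinates so that the first coordinate aligns with $\xstar_k/\|\xstar_k\|$ and using independence of the remaining components), this reduces to $\beta_k(\alpha^\sg)\xstar_k$ with $\beta_k(s)=\E[\zeta^2\indic_{\|\xstar_k\|^2\zeta^2\le s}]$. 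Assembling this column-wise gives $\E[\Xhat_0]=\Xstar\D$ and the stated reduced SVD by absorbing the $r\times r$ rotation of the left singular vectors of $\bSigma\Vstar\D$ into $\Q$.

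For parts~3--4, I would first control $\alpha^\sg$. Let $\alpha_{\rm full}:=\tilde C \|\Y\|_F^2/(mq)$ denote the value that would be produced by an idealized central sum. By Proposition~\ref{avgcons_prop_scalar} applied to the scalar $\alpha^\sg_\inp$, choosing $T_\con \gtrsim \Tconexp\log(L/\eps_1)$ gives $|\alpha^\sg-\alpha_{\rm full}|\le \eps_1\,\alpha_{\rm full}$ for all $g$. Standard $\chi^2$ concentration applied to $\|\Y\|_F^2 = \sum_{\ik}\langle \a_\ik,\xstar_k\rangle^2$ (sum of $mq$ sub-exponentials with total variance of order $mq\cdot \|\Xstar\|_F^2/q$) shows $\alpha_{\rm full}\in [(1-\eps_1)\tilde C\|\Xstar\|_F^2/q,\,(1+\eps_1)\tilde C\|\Xstar\|_F^2/q]$ with the stated probability. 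Combined with $\|\Xstar\|_F^2\ge r\sigmin^2$ and incoherence $\|\xstar_k\|^2\le \mu^2 r\sigmax^2/q$, the threshold-to-signal ratio satisfies $\alpha^\sg/\|\xstar_k\|^2\gtrsim \kappa^2$. Since $\beta_k(s)=\E[\zeta^2\indic_{\zeta^2\le s/\|\xstar_k\|^2}]=1-\E[\zeta^2\indic_{\zeta^2>s/\|\xstar_k\|^2}]$ and the Gaussian tail integral $\E[\zeta^2\indic_{\zeta^2>c\kappa^2}]$ is exponentially small in $\kappa^2$, the bound $0.92\le\beta_k(\alpha^\sg)\le 1$ follows. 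For part~4, since $\Ustar$ has orthonormal columns and $\Vstar$ has orthonormal rows,
\[
\sigma_r(\E[\Xhat_0])^2=\lambda_{\min}(\bSigma\Vstar\D^2\Vstar{}^\top\bSigma)\ge \sigmin^2\,\lambda_{\min}(\Vstar\D^2\Vstar{}^\top)\ge \sigmin^2\,\sigmamin(\D)^2\ge (0.92\,\sigmin)^2,
\]
using $\D^2\succeq \sigmamin(\D)^2 \I$ and $\Vstar\Vstar{}^\top=\I_r$.

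For part~2, I would condition on the high-probability event that $\alpha^\sg$ is close to $\alpha_{\rm full}$ for every $g$ and then invoke, node-wise, the operator-norm deviation bound for the truncated spectral initializer proved in \cite{lrpr_gdmin}. The centralized proof there uses sub-exponential Bernstein on $\frac{1}{m}\sum_{i,k}\a_\ik\y_\ik\e_k{}^\top\indic_{y_\ik^2\le\alpha}-\E[\cdot]$ together with an $\eps$-net over pairs $(\uu,\vv)$. In our setting, only the threshold changes between nodes (each node handles its own block of columns with its own $\alpha^\sg$), so the same Bernstein/net argument applies block-wise, and the blocks concatenate into a single $n\times q$ matrix whose fluctuations are controlled by the same quantity $\eps_0\sigmax$ with the stated sample complexity. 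The main subtlety (and the main place some care is required) is to avoid the circularity that $\alpha^\sg$ and the summands are not independent a priori; this is exactly what sample splitting between $\y_k^{(00)}$ and $\y_k^{(0)}$ resolves, so conditioning on $\alpha^\sg$ makes the summands i.i.d.\ truncated sub-Gaussians with a data-independent threshold, and the bound follows verbatim from \cite{lrpr_gdmin}.

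The main obstacle I anticipate is a clean bookkeeping of the failure probabilities: one has to take a union over (i) the consensus error for $\alpha^\sg$, (ii) the $\chi^2$ concentration for $\alpha_{\rm full}$, and (iii) the Bernstein plus $\eps$-net bound for $\Xhat_0-\E[\Xhat_0]$ conditional on (i)--(ii), while keeping the total probability at least $1-1/n$. The calculations themselves are routine once the conditional structure above is in place.
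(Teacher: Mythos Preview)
Your proposal is correct and follows essentially the same route as the paper: reduce each item to the centralized lemmas of \cite{lrpr_gdmin} after (i) using sample splitting so that conditioning on $\{\alpha^\sg\}$ leaves the $\A_k^{(0)}$ i.i.d.\ Gaussian, and (ii) using scalar consensus (Proposition~\ref{avgcons_prop_scalar}) plus sub-exponential concentration to confine each $\alpha^\sg$ to a deterministic window around $\tilde C\|\Xstar\|_F^2/q$, which is exactly the paper's Lemma~\ref{alpha_bnds} and Fact~\ref{betak_bnd}. One wording caution for part~2: the paper does \emph{not} bound the deviation block-wise and then concatenate (that would cost an extraneous $\sqrt{L}$ factor in the operator norm); rather it runs the single Bernstein plus $\eps$-net argument on the full $n\times q$ matrix, merely noting that different columns carry different (but now deterministic, conditioned) thresholds, so the sub-Gaussian norms of the summands are bounded uniformly just as in the centralized case. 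If your phrase ``applies block-wise'' was shorthand for this, you are fine; if you meant a genuine per-block bound followed by concatenation, tighten that step.
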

Using Corollary \ref{init_bnd_cor} and Wedin's $\sin \theta$ theorem \cite{spectral_init_review} [Theorem 2.9], we can show the following.
\begin{lemma}
\label{U0Ustar_bnd}
Pick a $\tilde\delta_0 < 0.1$. If $nT \ge C  \kappa^2 \mu^2 (d + T)r / \tilde\delta_0^2$ and the event $\ev$ holds, then w.p. at least $1 - \exp(-c (d+T))$,
\[
\SE_2(\Ustar, \Upm_\true)  \le \tilde\delta_0.
\]
\end{lemma}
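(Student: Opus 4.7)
The plan is to view $\Xhat_0$ as a perturbation of the rank-$r$ matrix $\E[\Xhat_0]$ and invoke Wedin's $\sin\theta$ theorem. The key observation, handed to us by Corollary~\ref{init_bnd_cor}(1), is that $\E[\Xhat_0] = \Ustar \bSigma \Vstar \D$, so its top-$r$ left singular subspace is exactly $\Span(\Ustar)$. Writing the SVD $\E[\Xhat_0] \svdeq (\Ustar\Q) \check{\bSigma} \check{\Vstar}$, the unitary $\Q$ is harmless since $\SE_2$ depends only on column spans: $\SE_2(\Ustar,\Ustar\Q) = 0$.

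First I would set $\M_0 := \E[\Xhat_0]$, $\M := \Xhat_0$, $\H := \M - \M_0$, and let $\Upm_\true$ denote the top-$r$ left singular vectors of $\M$. Since $\M_0$ has rank exactly $r$, we have $\sigma_{r+1}(\M_0) = 0$, and Corollary~\ref{init_bnd_cor}(4) gives $\sigma_r(\M_0) \ge 0.92\,\sigmin$. Wedin's $\sin\theta$ theorem (see \cite[Theorem 2.9]{spectral_init_review}) then yields
\[
\SE_2(\Upm_\true,\Ustar) \;=\; \SE_2(\Upm_\true,\Ustar\Q) \;\le\; \frac{\|\H\|}{\sigma_r(\M_0) - \|\H\|}
\]
provided the denominator is positive. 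Using $\SE_2(\U_1,\U_2)=\SE_2(\U_2,\U_1)$ when both are $n\times r$ basis matrices of subspaces of equal dimension gives the bound on $\SE_2(\Ustar,\Upm_\true)$ claimed in the lemma.

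Next I would control $\|\H\|$ via Corollary~\ref{init_bnd_cor}(2): fix a small $\eps_0$, and assume $mq \gtrsim (n+q)\kappa^4\mu^2 r/\eps_0^2$ so that $\|\H\| \le \eps_0\sigmax$ with probability at least $1-\exp(-c(n+q))$ (on the event $\ev$ that guarantees the concentration statements of Corollary~\ref{init_bnd_cor} hold, in particular the bounds on $\beta_k(\alpha^\sg)$ used to obtain the $0.92\sigmin$ bound). Plugging into Wedin,
\[
\SE_2(\Ustar,\Upm_\true) \;\le\; \frac{\eps_0\,\sigmax}{0.92\,\sigmin - \eps_0\,\sigmax} \;=\; \frac{\eps_0\,\kappa}{0.92 - \eps_0\,\kappa}.
\]
Choosing $\eps_0 = c\,\tilde\delta_0/\kappa$ for a small enough absolute constant $c$ (which is possible since $\tilde\delta_0 < 0.1$) makes the right-hand side at most $\tilde\delta_0$. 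Translating the sample-complexity requirement on $\eps_0$ back to $\tilde\delta_0$ gives $mq \gtrsim \kappa^{\alpha}\mu^2(n+q)r/\tilde\delta_0^2$ for some power $\alpha$, absorbed into the statement's constant $C$.

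The routine parts are the algebra and constant-tracking; the only delicate step is ensuring that the sample-complexity regime (i) guarantees $\|\H\| < 0.92\sigmin$ so that Wedin is applicable with a positive denominator, and (ii) gives the failure probability $\exp(-c(n+q))$ claimed. Both follow once $\eps_0$ is chosen $\Theta(\tilde\delta_0/\kappa)$ and the lower bound on $mq$ is met; the probability bound from Corollary~\ref{init_bnd_cor}(2) dominates, and the contribution from the concentration of $\alpha^\sg$ built into $\ev$ is already absorbed by conditioning on that event. I do not expect a real obstacle here --- the analysis is almost entirely a wrapper around Corollary~\ref{init_bnd_cor} and Wedin --- so the main care is in bookkeeping the dependence on $\kappa$.
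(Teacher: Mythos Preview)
Your proposal is correct and follows essentially the same route as the paper: apply Wedin's $\sin\theta$ theorem to $\Xhat_0$ viewed as a perturbation of the rank-$r$ matrix $\E[\Xhat_0]$, using Corollary~\ref{init_bnd_cor} for both the singular-value gap $\sigma_r(\E[\Xhat_0])\ge 0.92\sigmin$ and the perturbation bound $\|\Xhat_0-\E[\Xhat_0]\|\le \eps_0\sigmax$, then set $\eps_0=\Theta(\tilde\delta_0/\kappa)$. The paper's proof is identical up to constants (it picks $\eps_0=0.2\tilde\delta_0/\kappa$ and carries a $\sqrt{2}$ from its version of Wedin); your one loose phrase is that the resulting power of $\kappa$ is ``absorbed into the statement's constant $C$'' --- it is not, since $\kappa$ is a problem parameter, and indeed the paper's own proof lands at $mq\gtrsim \kappa^4\mu^2(n+q)r/\tilde\delta_0^2$ rather than the $\kappa^2$ in the lemma statement, so your caution about $\kappa$-bookkeeping is warranted.
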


\begin{proof}
We apply Wedin's $\sin \theta$ theorem \cite{spectral_init_review} [Theorem 2.9] to $\X_0$ and $\E[\X_0]$ and use the fact that the span of top $r$ left singular vectors of $\E[\X_0]$ equals that of $\Ustar$ (see Corollary \ref{init_bnd_cor} above).
Applying Wedin and then using $\|\Ustar\|=\|\check{\Vstar}\|=1$, the bound on $\|\Xhat_{0} -\E[\Xhat_{0}]\|$ with $\eps_0 = 0.2 \tilde\delta_0 / \kappa$, and using the lower bound on $\sigma_r(\E[\X_0])$, we get
\begin{align*}
&\SE_2(\Upm_\true,\Ustar)
 \le \sqrt{2} \frac{ \|\X_0 - \E[\X_0]\|}{\sigma_r(\E[\X_0]) - 0 - \|\X_0 - \E[\X_0]\|}  \\
& \le 2 \frac{0.2 \tilde\delta_0 \sigmin}{0.92 \sigmin - 0.2 \tilde\delta_0 \sigmin} 
 \le 2 \frac{0.2 \tilde\delta_0 \sigmin}{0.92 \sigmin - 0.2 \cdot 0.1 \sigmin} \le \delta_0
\end{align*}
if $nT \gtrsim \kappa^4 \mu^2 (d+T) r / \tilde\delta_0^2$. We used $\tilde\delta_0 < 0.1$ to simplify the denominator.
\end{proof}

\subsection{Bounding Second Term in Eq.~\eqref{se_1}}
{\cblue To bound the second term of Eq.~\eqref{se_1}, we use the following noisy PM result from \cite{hardt2014noisy}.}
\begin{prop}[Noisy PM \cite{hardt2014noisy}, Corollary 1.1 with $p=r'$]\label{npm_res}
%
Let $\Z_\true$ denote the matrix of top $r$ singular vectors of a symmetric matrix $\bphi$ and let $\sigma_i$ denote it's $i$-th singular value. Consider the following approach 
\bi
\item For an  $r' \ge r$, let $\Z_0$ be an $d \times r'$ matrix with i.i.d. standard Gaussian entries. 
For $\tau=1$ to $T_{pm}$ do,
\bi
\item $\Z_\tau \leftarrow \mathrm{QR}(\tilde{\Z}_\tau)$ where $\tilde{\Z}_{\tau} \leftarrow \bphi \Z_{\tau-1} + \G_\tau $
\ei
\ei
Assume that for all $\tau$, $5\|\G_{\tau}\| \le \eps_{pm} (\sigma_r - \sigma_{r+1}) \mbox{~and~} 5\|\Z_\true^{\top}\G_{\tau}\| \le  (\sigma_r - \sigma_{r+1}) \dfrac{\sqrt{r'} - \sqrt{r-1}}{\gamma\sqrt{d}}$, for some fixed parameter $\gamma$ and $\eps_{pm} < 1/2$.
For any $T_{pm} > C \dfrac{\sigma_r}{\sigma_r - \sigma_{r+1}}\log (d\gamma/\eps_{pm})$ for a constant $C$,  with probability (w.p.) at least $ 1- \gamma^{-c_1(r'-r+1)}-e^{-c_2 d}$,
\[
\SE_2(\Z_{T_{pm}}, \Z_\true) < \eps_{pm}.
\]
\end{prop}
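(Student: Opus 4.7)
The plan is to analyze the noisy power method by tracking the tangent of the largest principal angle between $\Span(\Z_\tau)$ and $\Span(\Z_\true)$ through a one-step contractive recursion, then iterate. Let $\Z_\perp$ denote an $n \times (n-r)$ matrix whose columns span the orthogonal complement of $\Span(\Z_\true)$, so that $\bphi = \Z_\true \Lam_r \Z_\true^\top + \Z_\perp \Lam_\perp \Z_\perp^\top$ with $\sigma_{\min}(\Lam_r) = \sigma_r$ and $\|\Lam_\perp\| = \sigma_{r+1}$. Define $K_\tau := \Z_\true^\top \Z_\tau$ and $L_\tau := \Z_\perp^\top \Z_\tau$. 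The key quantity is
\[
\tan\theta_\tau := \|L_\tau K_\tau^\dagger\|,
\]
which depends only on $\Span(\Z_\tau)$ (so the QR step is transparent), satisfies $\SE_2(\Z_\tau, \Z_\true) = \sin\theta_\tau \le \tan\theta_\tau$, and is finite whenever $K_\tau$ has row rank $r$.

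For the one-step recursion, applying $\Z_\true^\top$ and $\Z_\perp^\top$ to $\tilde{\Z}_\tau = \bphi \Z_{\tau-1} + \G_\tau$ yields $\Lam_r K_{\tau-1} + \Z_\true^\top \G_\tau$ and $\Lam_\perp L_{\tau-1} + \Z_\perp^\top \G_\tau$ respectively. Using the Neumann identity $(\Lam_r K_{\tau-1} + \Z_\true^\top \G_\tau)^\dagger = K_{\tau-1}^\dagger \Lam_r^{-1}(\I + \Lam_r^{-1} \Z_\true^\top \G_\tau K_{\tau-1}^\dagger)^{-1}$, valid once $\|\Lam_r^{-1}\Z_\true^\top \G_\tau K_{\tau-1}^\dagger\| < 1/2$, and submultiplicativity, I obtain
\[
\tan\theta_\tau \le \frac{\sigma_{r+1}}{\sigma_r}\tan\theta_{\tau-1} + \frac{2\|\G_\tau\|}{\sigma_r \sigma_{\min}(K_{\tau-1})} + \frac{2\|\Z_\true^\top \G_\tau\|}{\sigma_r}\tan\theta_{\tau-1}.
\]
Under the two noise hypotheses this collapses to $\tan\theta_\tau \le \rho\tan\theta_{\tau-1} + O(\eps_{pm})$ with contraction factor $\rho = (\sigma_{r+1} + \eps_{pm}(\sigma_r - \sigma_{r+1}))/\sigma_r < 1$.

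For the initialization, $K_0 = \Z_\true^\top \Z_0$ is $r \times r'$ with i.i.d.\ $\N(0,1)$ entries because $\Z_\true^\top \Z_\true = \I_r$. The Davidson--Szarek lower tail for Gaussian matrices gives $\sigma_{\min}(K_0) \ge (\sqrt{r'} - \sqrt{r-1})/\gamma$ with probability at least $1 - \gamma^{-c_1(r' - r + 1)}$, and standard Gaussian norm concentration gives $\|L_0\| \le \|\Z_0\| \le 2\sqrt{n}$ with probability at least $1 - e^{-c_2 n}$, so $\tan\theta_0 \le 2\gamma\sqrt{n}/(\sqrt{r'} - \sqrt{r-1})$. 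Iterating the recursion $T_{pm}$ times,
\[
\tan\theta_{T_{pm}} \le \rho^{T_{pm}}\tan\theta_0 + \frac{\eps_{pm}}{1-\rho},
\]
and choosing $T_{pm} \ge C \tfrac{\sigma_r}{\sigma_r - \sigma_{r+1}}\log(n\gamma/\eps_{pm})$ forces the geometric term below $\eps_{pm}/2$, giving $\sin\theta_{T_{pm}} \le \tan\theta_{T_{pm}} < \eps_{pm}$.

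The main obstacle is that the one-step bound depends on $\sigma_{\min}(K_{\tau-1})$, which is not directly controlled by $\tan\theta_{\tau-1}$. The resolution is a joint induction on $(\tan\theta_\tau,\, \sigma_{\min}(K_\tau))$: after the QR step, $K_\tau$ and $L_\tau$ are the two blocks of an orthonormal matrix's image, so $\cos^2\theta_\tau + \sin^2\theta_\tau = 1$, and a uniform bound $\tan\theta_\tau \le 2\tan\theta_0$ yields $\sigma_{\min}(K_\tau) \ge 1/\sqrt{1 + 4\tan^2\theta_0}$, closing the loop. The careful calibration of the two hypotheses is what makes both halves of this induction close simultaneously: the $\|\G_\tau\|$ bound scales with $\eps_{pm}(\sigma_r - \sigma_{r+1})$ to control the absolute perturbation term, while the $\|\Z_\true^\top \G_\tau\|$ bound scales inversely with the initial tangent $(\sqrt{r'} - \sqrt{r-1})/(\gamma\sqrt{n})$ so that the multiplicative noise never overwhelms the spectral-gap contraction.
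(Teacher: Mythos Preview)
The paper does not prove this proposition; it is quoted verbatim as Corollary~1.1 of \cite{hardt2014noisy} and used as a black-box tool in the initialization analysis. There is therefore no ``paper's own proof'' to compare against. Your sketch is essentially the argument given in the original Hardt--Price reference: track the tangent of the largest principal angle via $\|L_\tau K_\tau^\dagger\|$, establish a one-step contractive recursion from the spectral gap plus the two noise hypotheses, and control the random initialization through the smallest singular value of the $r\times r'$ Gaussian block $\Z_\true^\top\Z_0$. So your approach is correct and coincides with the cited source.

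One minor caution: your Neumann-series identity for $(\Lam_r K_{\tau-1}+\Z_\true^\top\G_\tau)^\dagger$ is written as though the matrix were square, but $K_{\tau-1}$ is $r\times r'$ with $r'\ge r$. The Hardt--Price argument handles this by working with the right-inverse on the row space (equivalently, by restricting to a well-chosen $r$-dimensional subspace of the $r'$ columns), and the joint induction you describe on $(\tan\theta_\tau,\sigma_{\min}(K_\tau))$ is exactly how they make this rigorous. With that detail filled in, your outline matches the published proof.
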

We apply Proposition \ref{npm_res} with $\gamma = d^{1/c_1}$, $r'=r$, 
\[
\bphi \equiv \X_0 \X_0^\top = \sum_g (\X_0)_\ssg (\X_0)_\ssg{}^\top,
\]
$\Z_\true = \Upm_\true$,
$\tilde\Z_\tau \equiv \tilde\Upm^\sone_\tau$, and $\Z_{\tau} \equiv \Upm_{\tau}^\sone$. Thus
\begin{align}
\G_\tau^\sone =  \tilde\Upm^\sone_\tau - \X_0 \X_0^\top \Upm_{\tau-1}^\sone.
\label{Gtaudef0}
\end{align}
Recall that $ \tilde\Upm_{\tau}^\sone = \mathrm{AvgCons}_1((\X_0)_\ssg  (\X_0)_\ssg{}^\top \Upm_{\tau-1}^\sg, g \in [L])$.

\begin{brem}
With picking $\gamma = d^{1/c_1}$ and $r'=r$, the result holds w.p. at least $1 - 1/d$. If we pick $r'=r+10$, the result will w.p. probability $1-d^{-10}$ and so on.
\end{brem}

Since $\|\Z_\true\|=1$, $\|\Z_\true^{\top}\G_{\tau}\| \le \|\G_\tau\|$. Also, $\sqrt{r} - \sqrt{r-1} \ge c/\sqrt{r}$ \cite{singvalsquare} [Theorem~1.1].
Using this, and the above choices of $\gamma,r'$, a sufficient condition to apply Prop~\ref{npm_res}  is
\[
5\|\G_\tau^\sone\| \le (\sigma_r - \sigma_{r+1})  \min\left( \eps_{pm} , \dfrac{c}{d^{1/c_1} \sqrt{r} } \right)
\]

%
We first lower bound $\sigma_r(\X_0 \X_0^\top) - \sigma_{r+1}(\X_0 \X_0^\top)$ and upper bound $\sigma_r(\X_0 \X_0^\top)$.
Using Corollary \ref{init_bnd_cor} with $\eps_0=0.1/\kappa$, 
if $nT \gtrsim \kappa^4 (d+T) r$, then,  whp,
\[
\|\X_0 - \E[\X_0]\| \le 0.1 \sigmin.
\]
Using this  and Weyl's inequality,
\begin{align}
& \sigma_r - \sigma_{r+1} \ge  0.8 \sigmin^2, \nn \\
& \sigma_r \le \|\X_0\|^2 \le 1.1 \sigmax^2
\label{sigma_bnds}
\end{align}
This follows using the above bound on $\|\X_0 - \E[\X_0]\|$, Weyl's inequality, and the facts that $\sigma_r(\X_0 \X_0^\top) = \sigma_r(\X_0)^2$,
$\sigma_r(\E[\X_0]) \ge 0.92 \sigmin$, $\sigma_{r+1}(\E[\X_0])=0$, and  $\sigma_r(\E[\X_0]) \le \|\E[\X_0]\| \le \sigmax^2$.
Using these, $(\sigma_r - \sigma_{r+1}) \ge (\sigma_r(\E[\X_0]) -  \|\X_0 - \E[\X_0]\|)^2 - \|\X_0 - \E[\X_0]\|^2 \ge (0.92 \sigmin - \eps_0 \sigmin)^2 - \eps_0^2\sigmin^2 \ge 0.92\sigmin (0.82 \sigmin) - 0.01\sigmin^2 \ge 0.6 \sigmin^2$.
%

Next we bound $\|\G_\tau^\sone\|$. Notice that it can be split as
\begin{align}
\G_\tau^\sone = \conserr^\sone + \Uerr^\sone  
\label{Gtaudef}
\end{align}
where
\begin{align*}
\conserr^\sone &: = \tilde\Upm_{\tau}^\sone - \sum_g (\X_0)_\ssg  (\X_0)_\ssg{}^\top \Upm_{\tau-1}^\sg \mathrm{~and~}\\
\Uerr^\sone &: = \sum_g (\X_0)_\ssg  (\X_0)_\ssg{}^\top \Upm_{\tau-1}^\sg -  \X_0 \X_0^\top \Upm_{\tau-1}^\sone\\
& = \sum_g (\X_0)_\ssg  (\X_0)_\ssg{}^\top (\Upm_{\tau-1}^\sg - \Upm_{\tau-1}^\sone).
\end{align*}
%
To bound $\conserr^\sone$, we need to define and bound the error of the input to AvgCons w.r.t. the desired sum, $\inperr^\sg$. Let
\[
\inperr^\sg := (\X_0)_\ssg (\X_0)_\ssg{}^\top \Upm_{\tau-1}^\sg -   \sum_{g'} (\X_0)_\ssgp  (\X_0)_\ssgp{}^\top \Upm_{\tau-1}^\sgp
\]
Using Claim \ref{norm_cols_removed},
\begin{align*}
& \| \inperr^\sg\|_F
 =\| \sum_{g' \neq  g} (\X_0)_\ssgp  (\X_0)_\ssgp{}^\top \Upm_{\tau-1}^\sgp \|_F \\
& \le \sum_{g'} \| (\X_0)_\ssgp  (\X_0)_\ssgp{}^\top \Upm_{\tau-1}^\sgp \|_F\\ 
& \le \max_{g'} \|\Upm_{\tau-1}^\sgp \|_F \sum_{g'} \| (\X_0)_\ssgp  (\X_0)_\ssgp{}^\top\|  \\
& \le \sqrt{r} \cdot  L \|\X_0\|^2  \le 1.1 L \sqrt{r} \sigmax^2
\end{align*}
if $nT \gtrsim \kappa^2 \mu^2 (d+T)r $.  The last inequality used Eq.~\eqref{sigma_bnds}.
Thus, using Lemma \ref{avgcons_prop} (consensus result Frob norm version), if
$T_\con \ge C \Tconexp \log (\frac{L}{\eps_\con})$ and $nT \gtrsim \kappa^2 \mu^2 (d+T)r $, then
\begin{align}
\|\conserr^\sone\|_F \le  1.1 L \sqrt{r} \eps_\con \sigmax^2.
\label{conserr_bnd}
\end{align}
%

Recall from our algorithm that, at each iteration $\tau$, we run a second consensus loop in which node-1 sends $\Upm_{\tau}^\sone$ while all other nodes send an $d \times r$ matrix of zeros. Thus, their true sum equals $\Upm_{\tau}^\sone$. 
Using Lemma \ref{avgcons_prop} (consensus result Frob norm version), we can show that this loop ensures that all nodes' estimates converge to within an $\eps_\con \cdot \|\Upm_{\tau}^\sone\|_F$ Frobenius norm distance of $\Upm_{\tau}^\sone$. 
To be precise, the input error (error between input of node to the consensus algorithm and the desired sum to be computed) for this consensus loop satisfies
\[
\inperr^\sone = \bm{0},  \ \inperr^\sg = \Upm_{\tau}^\sone - \bm{0}, \forall  g \neq 1
\]
Applying Lemma \ref{avgcons_prop} 
if $T_\con \ge C \Tconexp \log (L/\eps_\con )$, then
\[
\max_g \|\Upm_{\tau}^\sg - \Upm_{\tau}^\sone\|_F \le \eps_\con \max_g \|\inperr^\sg\| \le \eps_\con \sqrt{r}
\]
Thus, using the above bound and Eq.~\eqref{sigma_bnds},
\begin{align}
 \|\Uerr^\sone\|_F
& \le  \max_g \|\Upm_{\tau-1}^\sg - \Upm_{\tau-1}^\sone\|_F\sum_g \|(\X_0)_\ssg (\X_0)_\ssg{}^\top\|^2  \nn \\
& \hspace*{-10 mm} = \max_g \|\Upm_{\tau-1}^\sg - \Upm_{\tau-1}^\sone\|_F \cdot L \|\X_0\|^2 \nn \\
& \hspace*{-10 mm} \le   \sqrt{r} \eps_\con \cdot 1.1 L \sigmax^2 = 1.1 L \sqrt{r} \eps_\con \sigmax^2
\label{Uconserr_bnd}
\end{align}


\subsection{Proof of Theorem \ref{init_step_PM_claim}}
\begin{proof}[Proof of Theorem \ref{init_step_PM_claim}]
Applying noisy PM result and using Eqs.~\eqref{Gtaudef}, \eqref{conserr_bnd}, \eqref{Uconserr_bnd},
\begin{align}
\|\G_\tau^\sone\|_F   & \le 2.2 L \sqrt{r} \eps_\con  \sigmax^2
\label{Gtau_bnd1}
\end{align}

By Eq.~\eqref{sigma_bnds}, $\sigma_r(\X_0 \X_0^\top) - \sigma_{r+1}(\X_0 \X_0^\top) > 0.8 \sigmin^2$. Thus, in order to apply the noisy PM result to show that $\SE_2(\Upm^\sone_{T_{pm}}, \Upm_\true) \le \eps_{pm} =\delta_0/2$, we need
\[
\|\G_\tau^\sone\|< \min\left( 0.5 \delta_0, \frac{1}{d^{1/c_1} \sqrt{r}} \right) 0.8 \sigmin^2
\]
Using Eq.~\eqref{Gtau_bnd1}, and $\sigmin^2 = \sigmax^2/\kappa^2$, the above bound holds if
\[
\eps_\con \le \frac{1}{2.2 \kappa^2 \sqrt{r} } \cdot  \min\left( 0.5\delta_0, \frac{\sqrt{r}}{d} \right) 0.8
\]
%
To obtain the second bound of our Initialization theorem, we need $\eps_\con \sqrt{r} \le \rho_0$.
Both these requirements are satisfied if
\[
\eps_\con = 0.8 \min\left( \frac{\delta_0}{2 \kappa^2 L \sqrt{r}}, \frac{1}{\kappa^2 d L} , \frac{\rho_0}{\sqrt{r}}\right)
\]
This means that we need to set 
\[
T_\con = C \frac{( \log L + \log d + \log \kappa + \log (1/\delta_0) + \log(1/\rho_0) )}{{\log(1/\gamma(\W))}}
\]
To get the $T_{pm}$ expression, using Eq.~\eqref{sigma_bnds}, $\sigma_r(\X_0 \X_0^\top) - \sigma_{r+1}(\X_0 \X_0^\top) > 0.8 \sigmin^2$ and $\sigma_r(\X_0 \X_0^\top) \le 1.1 \sigmax^2$. Thus,
$\sigma_r / (\sigma_r - \sigma_{r+1}) < C \kappa^2$.
Thus, if
\[
T_{pm} \ge C \kappa^2 \log(\frac{d}{\delta_0}),
\]
$T_\con$ is set as above, and $nT \gtrsim \kappa^4 \mu^2 (d+T)r$,
then, by Proposition \ref{npm_res} (noisy PM result),
\[
\SE_2(\Upm^\sone_{T_{pm}}, \Upm_\true) <  0.5 \delta_0
\]
By Lemma \ref{U0Ustar_bnd}, $\SE_2(\Upm_\true, \Ustar) < 0.5 \delta_0 $ whp if $nT \gtrsim \kappa^2 \mu^2 (d+T)r/\delta_0^2$.
%
%
Thus, we have proved Theorem \ref{init_step_PM_claim}.
\end{proof}
%
\section{Simulation Experiments} \label{sims}
\newcommand{\prb}{p}
%
\begin{figure*}[h!]
\vspace*{-0.5in}
 \subcaptionbox{\footnotesize $L=20, p=0.5, T_\con=100$\label{fig:1}}{\includegraphics[width=2.3 in, height=3 in]{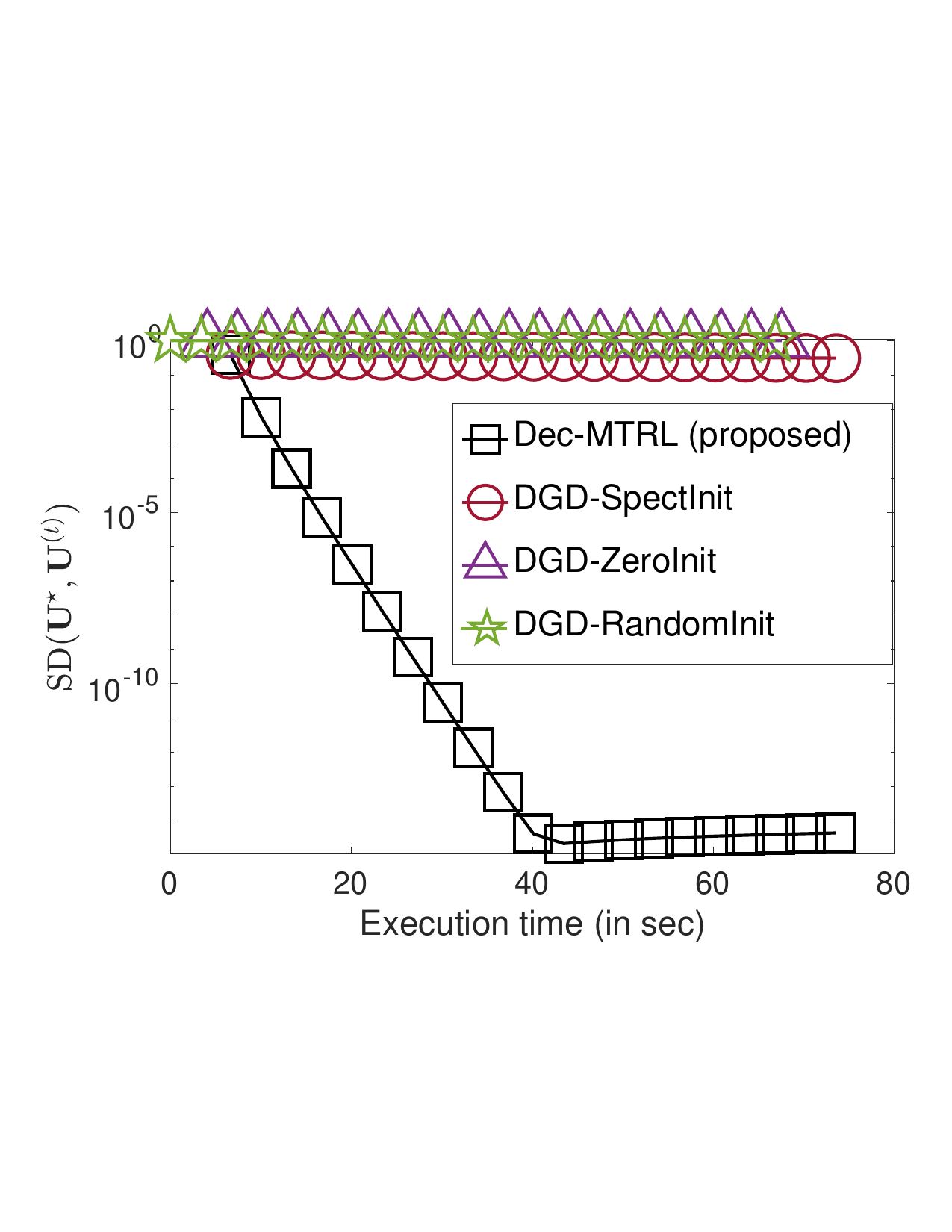}{\vspace*{-0.7 in}}}  \hspace{-1.3 em}%
 \centering
  \subcaptionbox{\footnotesize $T_\con$ varied, $p=0.5$\label{fig:3}}{\includegraphics[width=2.3in, height=3 in]{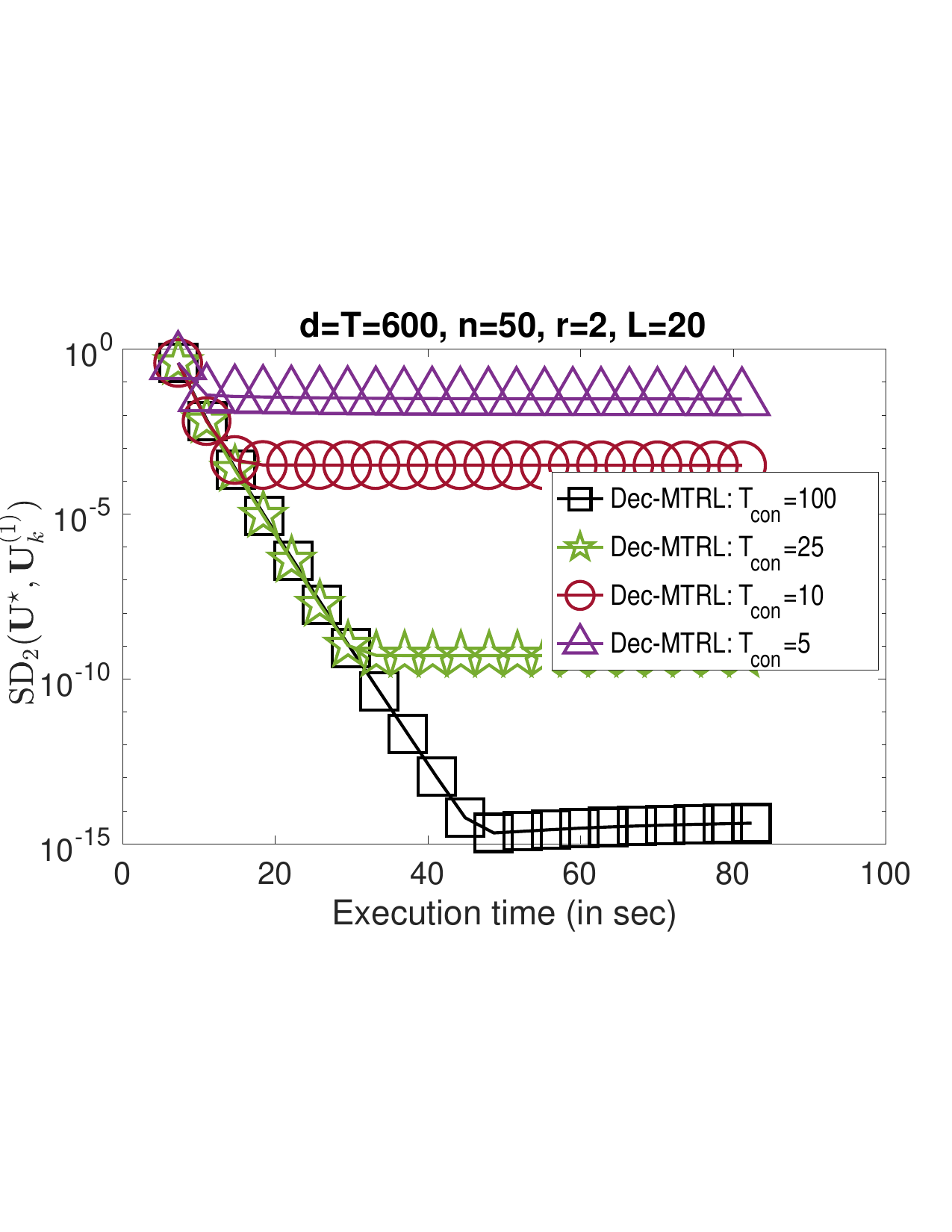}\vspace{-0.65 in}} 
 \subcaptionbox{\footnotesize $p$ varied, $T_\con=30$\label{fig:4}}{\includegraphics[width=2.3 in, height=3 in]{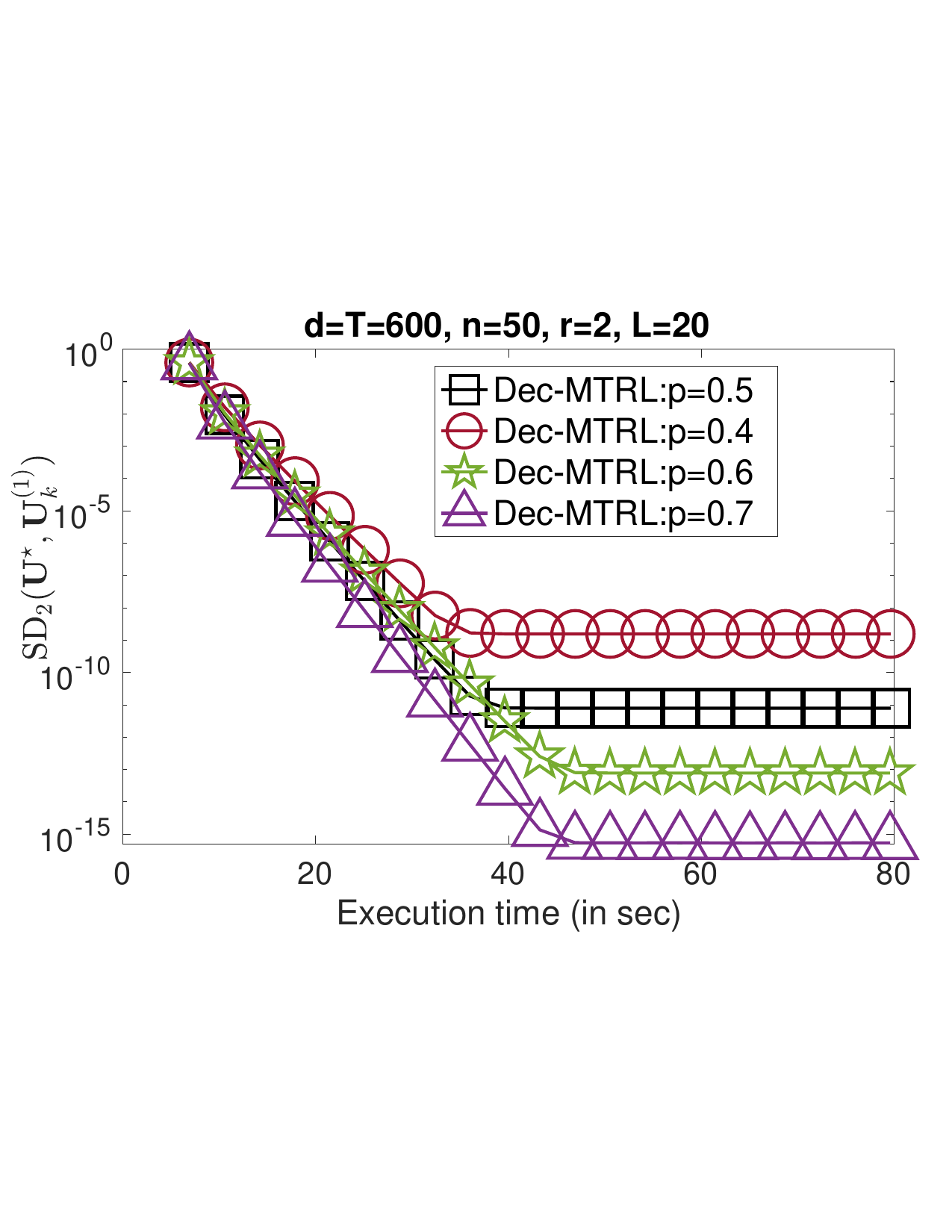} \vspace{-0.65 in}}
\hspace{-1.4em}%
\vspace{-2 mm}
\caption{\footnotesize {\em Error versus execution time in seconds.} In Fig.~\ref{fig:1}, we set $d=600$, $r=2$, $T=400$, $n=50$, $L=20$, and the network was generated as an ER graph with $p=0.5$ probability of nodes being connected.
We simulated until $T_\gd=400$ total GDmin iterations  In Figs.~\ref{fig:3} and~\ref{fig:4} $d=600$, $r=2$, $T=600$, $n=50$, $L=20$, and the network was generated as an ER graph with $p$ probability of nodes being connected.
In Figure~\ref{fig:3}, we compared the performance of our algorithm for $600$ tasks by varying the consensus iterations, $T_\con$.   In Fig.~\ref{fig:4}, we compared the performance of our algorithm for $600$ tasks by varying the edge probability $p$ of the network.  We varied $T_\con$ as $5, 10, 25$, and $100$ and  the probability of edge in the communication network $\calG$ as $0.4, 0.5, 0.6$ and $0.7$.}\label{fig:main3}
\vspace{-5 mm}
\end{figure*}
 \noindent{\em \bf Generating data and network.}  All the experiments were done using MATLAB on Dell precision tower 3420.
We simulated the network as an Erdos Renyi (ER) graph with $L$ vertices and with probability of an edge between any pair of nodes being  $\prb$.
Thus there exists an edge between any two nodes (vertices) $g$ and $j$ with probability $p$ independent of all other node pairs. The network was generated {\em once} outside our Monte Carlo loop. For a particular simulated graph, we used the {\em conncomp} function in MATLAB to verify that the graph is  connected.
%
We note that,  $\Xstar = \Ustar {\B}^{\star}$, where $\Ustar$ is an $d \times r$ orthonormal matrix. We generate the entries of $\Ustar$ by orthonormalizing an i.i.d standard Gaussian matrix. The entries of  ${\B}^{\star} \in \mathbb{R}^{r \times T}$ are generated from an i.i.d Gaussian distribution. The matrices $\A_t$s
 were i.i.d. standard Gaussian. We set the  step size of GD for the $g$-th agent as  $\eta^\sg = 0.4/n {\widehat{\sigma}}^{\star^2}_{\max}{}$, where ${\widehat{\sigma}}^{\star^2}_{\max}{}$ is obtained as the largest diagonal entry of ${\R}^\sg_{T_{pm}}$.
%
%
 In our experiments, we plot the subspace error $\SE_2(\U_\t^\sone,\Ustar)$ at each iteration $\t$ on the y-axis and the execution time taken by the algorithm until GD iteration $k$ on the x-axis. Averaging is over 100 trials.


 \noindent{\em \bf Comparison with existing decentralized algorithms.}  
We compared the performance of our  Dec-MTRL algorithm with existing decentralized algorithms. 
As noted earlier there are no existing algorithms for solving MTRL in a decentralized federated setting. 
A potential benchmark to compare with is by modifying the projected GD step for $\U$ by the decentralized GD approach \cite{nedic2009distributed, yuan2016convergence}: $\U_+^\sg \leftarrow  QR( \frac{1}{\dg} \sum_{g' \in \N_g} \U^\sgp  - \eta \nabla_\U f_g (\U^\sg, \B^\ssg) )$. We note that both of these works consider a convex case, unlike our problem. To initialize, we can either use the zero initialization as in \cite{yuan2016convergence} or a random initialization as in \cite{nedic2009distributed} or use our spectral initialization, Algorithm~\ref{altgdmin_dec_init}, which guarantees a good initialization.
In Figure~\ref{fig:1} we set $d=600, n=50, r=2, L=20$ and $p=0.5$.
  In Figure~\ref{fig:1}, we compared Dec-MTRL with the decentralized algorithms obtained by modifying the GD step using the approaches in \cite{yuan2016convergence}.  In Figure~\ref{fig:1} DGD-RandomInit refers to the setting with random initialization, DGD-ZeroInit refers to the setting with zero initialization), and DGD-SpectInit refers to the setting described initialized via our proposed spectral initialization.
  From the plots, we observe that the  our algorithm converges while DGD approaches fail.
   The key reason why the approach of \cite{nedic2009distributed, yuan2016convergence} and the follow-up works  designed for standard GD  cannot be used for updating $\U$ is because it involves averaging the partial estimates $\U^\sg$,  $g \in [L]$, obtained locally at the different nodes. However, since $\U^\sg$'s are subspace basis matrices, their numerical average will not provide a valid ``subspace mean''. Moreover, the 
cost functions considered in \cite{nedic2009distributed, yuan2016convergence} are convex, while ours is non-convex.


\noindent{\em \bf Comparison by varying $T_\con$ and $p$.} We tested the performance of our Dec-MTRL algorithm for different values of consensus iterations, $T_\con$, and edge probability of the communication network $\calG$. The plots are presented in Figure~\ref{fig:main3}. The plots of this experiment are given in Figure~\ref{fig:3}.  We set the parameters as  $d=T=600, n=50, r=2, L=20$  and $T_\gd=400$. 
 We plot the matrix estimation error (at the end of the iteration) $\SE(\Ustar, \U_\t^{(1)})$ and the execution time-taken (until the end of that iteration) on the y-axis and x-axis, respectively. 
  In Figure~\ref{fig:3}, we set $p=0.5$ and varied the consensus iterations as, $T_\con=5, 10, 25,$ and $ 100$. The experimental results show that subspace error decreases while increasing the $T_\con$ and for $T_\con=100$ above the error value saturates.
In Fig.~\ref{fig:4}, we set $T_{\con}=30$ and we varied the values of the edge probability as, $p=0.4, 0.5,0.6,$ and $0.7$.  
 From Fig.~\ref{fig:4} we observe that as expected the subspace error $\SE(\Ustar, \U_\t^{(1)})$ decreases when the connectivity of the network improves which is expected. 
 
{\cblue\noindent{\em \bf Comparison with centralized algorithms:} We compared the performance of the proposed Dec-MTRL algorithm with that of three other existing algorithms that are widely studied in the LR literature, AltGDmin \cite{lrpr_gdmin}, Altmin \cite{lowrank_altmin}, AltGD \cite{rpca_gd}, and ProjGD \cite{fastmc}.
The plots are presented in Figure~\ref{fig:main2}.
It is worth noting that in this comparison, Dec-MTRL is the only decentralized algorithm, while the rest fall under the category of centralized algorithms. 
We set $d=T=600$, $r=4$, the probability of network connectivity $p=0.5$, and the number of agents is set to $L=20$.  We chose different values of $n$ in our experiments to evaluate the performance of the different algorithms based on the extend of compression the data undergoes.
In Figures~\ref{fig:ac2} and \ref{fig:ac4},  we  varied the value of $n$ as $50, 30$, respectively. From Figures~\ref{fig:ac2} and \ref{fig:ac4} we see that as the rate of compression increases (i.e., $n/d$ decreases), the ProjGD, AltGD algorithms no longer converge. The AltGDmin and Dec-MTRL still converge for these highly compressed cases, thereby validating the sample efficiency of these approaches over the other three. 
Additionally, we notice from the plots that the Altmin algorithm is computationally intensive.
Another observation is that the error decay is not monotonic in the figures. This is because we use the same value of consensus iterations in our experiments.  
 Although the consensus error will be below $\eps_\con$, the exact error value in the different GD iterations will be different.  
 During certain iterations of GD,  while computing the gradient through average consensus, the consensus error may be lower than in previous steps.
 }

\begin{figure*}[h!]
\vspace*{-0.6 in}
\centering
\subcaptionbox{\footnotesize  Error-X vs. Execution time: $n=50$\label{fig:ac2}}{\includegraphics[width=3.0in, height=3.4 in]{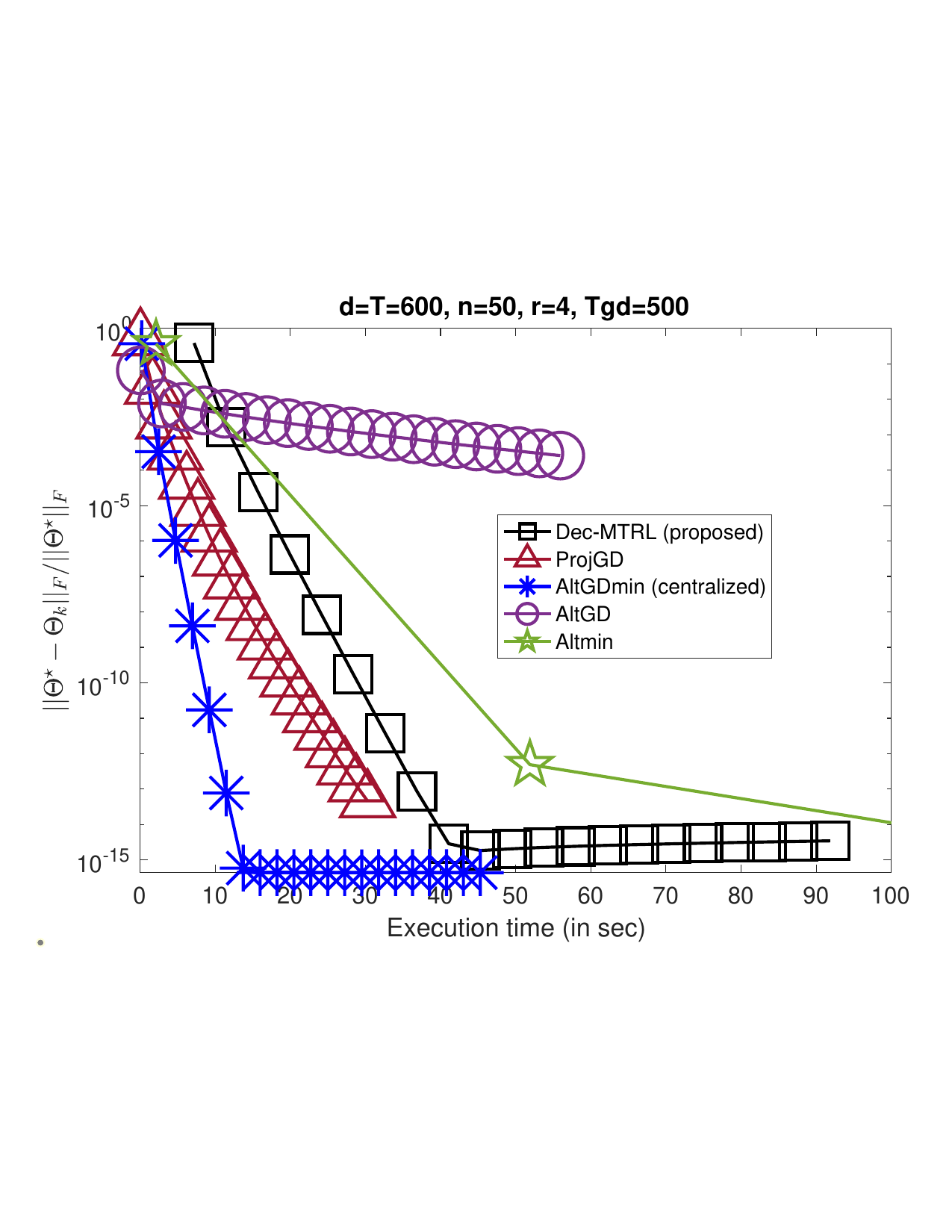}\vspace{-0.85 in}}\hspace{0.3 em}
 \subcaptionbox{\footnotesize  Error-X vs. Execution time: $n=30$\label{fig:ac4}}{\includegraphics[width=3.0in, height=3.4 in]{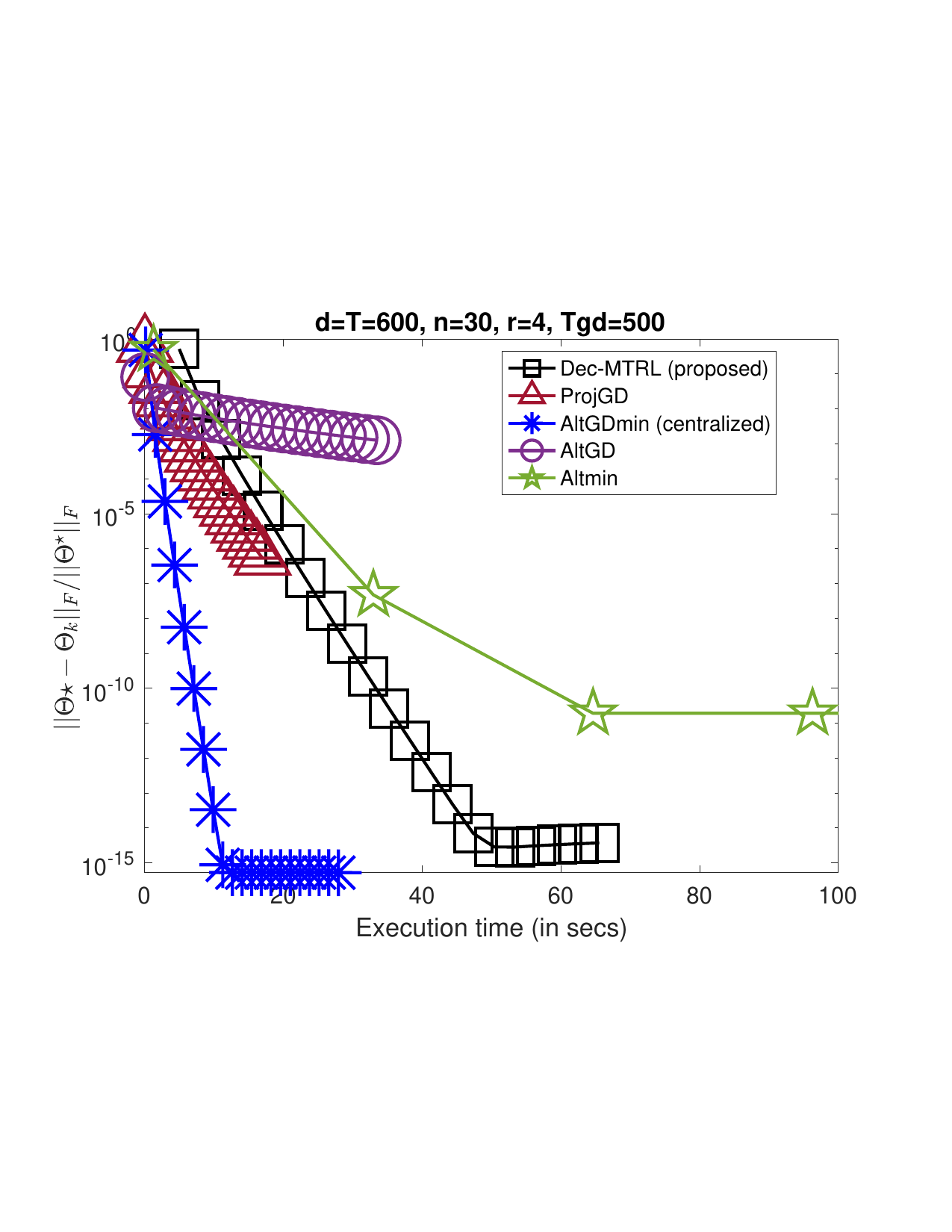}\vspace{-0.85 in}}
\hspace{-1.4em}%
\caption{\footnotesize
{\em Error versus execution time plot with time in seconds.}
In both figures $d=T=600$, $r=4$, $L=20$,  and the network was generated as an ER graph with $p=0.5$ probability of nodes being connected. We compared the performance of our proposed algorithm (Dec-MTRL)  with the existing centralized counterparts from \cite{lrpr_gdmin} and with the LR matrix recovery algorithms,  Altmin \cite{lowrank_altmin}, ProjGD \cite{fastmc}, and AltGD \cite{rpca_gd} algorithms. The network was generated as an ER graph with $p=0.5$ probability.
 In Figures~\ref{fig:ac2}, \ref{fig:ac4}, we present the plots for error vs. execution time  for $n=50, 30$, respectively. 
 For improved visualization, we have set the x-limit to $100$ seconds.
 This choice aligns with the runtime of all algorithms except Altmin, which exceeds $1000$ seconds.}\label{fig:main2}
\vspace{-4mm}
\end{figure*}
 \section{Conclusions} \label{conclude}
In this work, we developed and analyzed an alternating projected gradient descent and minimization algorithm for decentralized multi-task representation learning. 
We presented the convergence guarantees and sample complexity results for the proposed approach.
Open questions for future work include (i) how to obtain a guarantee for which the number of consensus iterations needed in each GD step iteration does not depend on the final desired accuracy level, and (ii) how to improve the initialization guarantee.
%
\bibliographystyle{IEEEtran}
\bibliography{Decentralized-ST, nv-refs, tipnewpfmt_kfcsfullpap, Bandits}
\textbf{Shana Moothedath} (Member, IEEE) received her Ph.D. degree in Electrical Engineering from the Indian Institute of Technology Bombay (IITB) in 2018, and she was a postdoctoral scholar in Electrical and Computer Engineering at the University of Washington, Seattle till 2021. Currently, she is Harpole-Pentair Assistant Professor of Electrical and Computer Engineering at Iowa State University. Her research focuses on distributed decision-making, control and security of dynamical systems, and  signal processing. She received the NSF CAREER Award in 2025, the Best Research Thesis Award at IITB in 2019, and selected as a MIT-EECS Rising Star in 2019.\par
\noindent\textbf{Namrata Vaswani} (Fellow, IEEE) received the B.Tech. degree from Indian
Institute of Technology (IIT Delhi), India, in 1999, and the Ph.D. degree
from the University of Maryland, College Park, in 2004. She is currently a
Professor of Electrical and Computer Engineering and an Anderlik Professor
of engineering at Iowa State University. 
Her research interests include statistical machine learning
and signal processing and their applications in medical imaging and video.
She was a fellow of AAAS in 2023. She was a recipient of the IEEE
Signal Processing Society Best Paper Award (2014), the University of
Maryland ECE Distinguished Alumni Award (2019), and the Iowa State MidCareer Achievement in Research Award (2019). She served as Area Editor
for IEEE Signal Processing Magazine and Associate Editor for IEEE
TRANSACTIONS ON INFORMATION THEORY and IEEE TRANSACTIONS ON
SIGNAL PROCESSING, and has guest-edited a special issue for PROCEEDINGS
OF THE IEEE.
 
\appendices
\renewcommand\thetheorem{\Alph{section}.\arabic{theorem}}

\clearpage
\section{Preliminaries}

\begin{prop}[Theorem~2.8.1, \cite{versh_book}] \label{proposition1}
Let $X_1, \cdots, X_N$ be independent, mean zero, sub-exponential random variables. Then, for every $\t \geqslant 0$, we have
$$
\scalemath{0.9}{\mathbb{P} \Bigl\{ |\sum_{i=1}^N X_i \geqslant \t| \Bigl\} \leqslant 2 \exp \left[-c \min \left( \frac{\t^2}{\sum_{i=1}^N \norm{X_i}_{\psi_1}^2}, \frac{\t}{\max_i \norm{X_i}_{\psi_1}^2} \right) \right], }
$$
where $c > 0$ is an absolute constant. 
\end{prop}

\section{Results for initialization taken from \cite{lrpr_gdmin}}
\label{altdmin_results_init}
The following linear algebra results are used in the proofs below and also in the main paper. 
\begin{claim}
\label{linalg}
If $\X = \U \M$ with $\U$ being $d \times r$ and orthonormal, and $\M$ being $r \times T$,
$
\sigma_r(\X)=  \sigmamin(\M).
$
If $\A = \B \C$ and $\A$ is tall (or square) and  both $\B, \C$ are tall (or square), then
\[
\sigmamin(\A)=   \min_{\z \neq 0} \frac{\|\A \z\|}{\|\z\|} \ge  \sigmamin(\B) \sigmamin(\C)
\]
\end{claim}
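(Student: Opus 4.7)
The plan is to dispatch the two parts of Claim \ref{linalg} directly from the variational characterization of singular values, since both are standard facts that require only a couple of lines each.

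For part (1), I would note that since $\U$ has orthonormal columns, $\U^\top \U = \I_r$, hence
\[
\X^\top \X \;=\; \M^\top \U^\top \U \M \;=\; \M^\top \M.
\]
The squared singular values of $\X$ are the eigenvalues of $\X^\top\X$, and likewise for $\M$. So the two matrices have identical nonzero singular values, and in particular $\sigma_r(\X) = \sigma_r(\M) = \sigmamin(\M)$ (the last equality because $\M$ is $r \times q$, so its smallest singular value is exactly $\sigma_r(\M)$).

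For part (2), I would start from the standard fact that when $\A$ is tall or square, $\sigmamin(\A) = \min_{\z \neq 0} \|\A\z\|/\|\z\|$ — this follows from the SVD of $\A$ together with the assumption that $\A$ has full column rank (guaranteed by $\B, \C$ being tall/square so that $\B\C$ is injective on its column space). Then for any nonzero $\z$, setting $\w := \C\z$,
\[
\|\A\z\| \;=\; \|\B\w\| \;\ge\; \sigmamin(\B)\,\|\w\| \;=\; \sigmamin(\B)\,\|\C\z\| \;\ge\; \sigmamin(\B)\,\sigmamin(\C)\,\|\z\|,
\]
where the two inequalities are the same variational lower bound applied to $\B$ (tall/square) and to $\C$ (tall/square). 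Dividing by $\|\z\|$ and minimizing over $\z$ gives the desired $\sigmamin(\A) \ge \sigmamin(\B)\sigmamin(\C)$.

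There is really no obstacle here — both facts are textbook. The only subtlety worth a sentence is to explicitly record the tall/square hypothesis used in the variational formula $\sigmamin(\cdot) = \min_{\z\neq 0}\|\cdot\,\z\|/\|\z\|$, since for a wide matrix one would instead have to restrict $\z$ to the row space (and the product of two smallest singular values is then not a valid lower bound in general). Given these are the only two facts being claimed and they are used as black boxes elsewhere in the paper (e.g., in bounding $\sigmamin(\E[\X_0]\E[\X_0]^\top \Upm_{\tau-1})$ in Appendix \ref{init_onecons_proof}), no further structure is needed.
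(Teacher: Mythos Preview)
Your proposal is correct. The paper does not actually supply a proof of Claim \ref{linalg}; it simply records these two facts as standard linear algebra results to be used elsewhere (in Lemma \ref{EX0} and in bounding $\sigmamin(\E[\X_0]\E[\X_0]^\top \Upm_{\tau-1})$). Your arguments via $\X^\top\X = \M^\top\M$ for part (1) and chaining the variational lower bound for part (2) are exactly the intended textbook justifications, and your remark about the tall/square hypothesis being needed for $\sigmamin(\cdot)=\min_{\z\neq 0}\|\cdot\,\z\|/\|\z\|$ is the right point to flag.
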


\begin{lemma}[\cite{lrpr_gdmin}] \label{EX0}
Conditioned on $\alpha^\sg$s, we have the following conclusions.
\ben
\item $\E[\Xhat_0] = \Xstar \D$ 
where $\D :=diagonal(\beta_t(\alpha^\sg), \ t \in \S_g, \ g \in [L])$
with $\beta_t(.)$ is as defined earlier in Corollary \ref{init_bnd_cor}. Recall from notation that $\Xstar \svdeq \Ustar \bSigma \Vstar$. 
\item Thus, $\E[\Xhat_0]$ is a rank $r$ matrix and
\[
\E[\Xhat_0] \svdeq (\Ustar \Q) \check{\bSigma} \check{\Vstar}
\]
where $\Q$ is an $r \times r$ unitary matrix, $\check{\bSigma}$ is an $r \times r$ diagonal matrix of its singular values and $\check{\Vstar}$ is an $r \times T$ matrix with orthonormal rows. This means that 
the span of top $r$ left singular vectors of $\E[\Xhat_0]$ equals the column span of $\Ustar$. 
\item Also, 
$\sigma_r(\E[\X_0])= \sigmamin(\bSigma \Vstar \D) =\sigmamin(\D \Vstar^\top \bSigma) \ge \sigmamin(\D) \sigmamin( \Vstar^\top \bSigma) \ge \sigmamin(\D) \sigmamin( \Vstar^\top) \sigmamin(\bSigma) = \min_j |D_{jj}| \cdot 1 \cdot \sigmin
$
\een 

%
\end{lemma}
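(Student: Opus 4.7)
The plan is to establish all three parts by direct computation conditioned on the random thresholds $\alpha^\sg$, with the only nontrivial ingredient being Gaussian rotational invariance (for part 1); the remaining parts are pure linear algebra manipulations on the SVD of $\Xstar$.

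For part 1, I would work column by column. The $k$-th column of $\X_0$ (for $k \in \S_g$) equals $\frac{1}{m}\A_k^\top \y_{k,\trnc}(\alpha^\sg) = \frac{1}{m}\sum_{i=1}^m \a_{ik}\, \y_{ik}\, \indic_{\{\y_{ik}^2 \le \alpha^\sg\}}$. Since the rows $\a_{ik}$ are i.i.d.\ standard Gaussian and $\y_{ik} = \langle \a_{ik}, \xstar_k\rangle$, conditioning on $\alpha^\sg$ reduces the expectation of each column to $\E[\a \a^\top \xstar_k \indic_{\{(\a^\top \xstar_k)^2 \le \alpha^\sg\}}]$ with $\a \sim \n(0,\I_n)$. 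I would then exploit rotational invariance and decompose $\a = \zeta\, \xstar_k/\|\xstar_k\| + \a_\perp$, where $\zeta := \a^\top \xstar_k / \|\xstar_k\| \sim \n(0,1)$ and $\a_\perp$ is the projection onto the orthogonal complement of $\xstar_k$; these are independent and $\E[\a_\perp]=0$. Substituting gives $\a\a^\top \xstar_k = \zeta^2 \xstar_k + \zeta\|\xstar_k\|\a_\perp$; because the truncation indicator depends only on $\zeta$, the cross term vanishes by independence and zero mean, leaving $\xstar_k \cdot \E[\zeta^2 \indic_{\{\zeta^2 \|\xstar_k\|^2 \le \alpha^\sg\}}] = \beta_k(\alpha^\sg)\, \xstar_k$. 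Stacking these columns across all $k \in \S_g$ and $g \in [L]$ yields $\E[\X_0] = \Xstar \D$.

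For part 2, I would substitute the SVD $\Xstar = \Ustar \bSigma \Vstar$ to write $\E[\X_0] = \Ustar (\bSigma \Vstar \D)$, and then take an SVD of the inner $r \times q$ factor: $\bSigma \Vstar \D \svdeq \Q \check{\bSigma} \check{\Vstar}$ with $\Q$ unitary of size $r \times r$ and $\check\Vstar$ having orthonormal rows. Since $\Ustar \Q$ still has orthonormal columns, $\E[\X_0] = (\Ustar \Q) \check\bSigma \check\Vstar$ is a valid reduced SVD, so the column span of the top-$r$ left singular vectors of $\E[\X_0]$ coincides with $\Span(\Ustar)$. For part 3, singular values are invariant under left multiplication by an isometry and under transpose, so $\sigma_r(\E[\X_0]) = \sigmamin(\bSigma \Vstar \D) = \sigmamin(\D \Vstar^\top \bSigma)$; applying Claim \ref{linalg} twice (once to peel off $\D$, once to peel off $\Vstar^\top$) gives $\sigmamin(\D \Vstar^\top \bSigma) \ge \sigmamin(\D)\, \sigmamin(\Vstar^\top)\, \sigmamin(\bSigma) = \min_j|D_{jj}| \cdot 1 \cdot \sigmin$, using that $\Vstar^\top$ has orthonormal columns.

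There is no real obstacle here; the only subtleties to watch are (i) the conditioning on $\alpha^\sg$ makes $\D$ deterministic inside the expectation so the Gaussian decomposition in part 1 stays clean, and (ii) in the factorization it is $\D$ (not $\D^{-1}$ or anything else) that sits on the right of $\bSigma \Vstar$, which is what allows the lower bound on $\sigma_r(\E[\X_0])$ to degrade only by the factor $\min_j|D_{jj}|$ rather than by something depending on the conditioning of $\Vstar \D$.
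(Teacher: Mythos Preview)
Your proposal is correct and follows essentially the same route as the paper (which in fact cites this lemma from \cite{lrpr_gdmin} and only spells out part 3 inline, via exactly the chain of inequalities you give using Claim \ref{linalg}). The one point worth making explicit is that sample splitting is what guarantees the $\A_k$'s used to form $\X_0$ are independent of the thresholds $\alpha^\sg$, so that conditioning on $\alpha^\sg$ leaves the rows $\a_{ik}$ i.i.d.\ standard Gaussian; you allude to this in your remark (i), but it is the precise mechanism that makes the conditional expectation in part 1 ``clean.''
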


The following lemma is an easy consequence of the scalar consensus result and the sub-exponential Bernstein inequality. %
\begin{lemma}[$\alpha^\sg$ consensus]
\label{alpha_bnds}
Let  $T_\con = C \Tconexp \log(L/\eps_\con)$. Recall that $\alpha = 9 \kappa^2 \mu^2 \sum_g \sum_{t \in \S_g} \|\y_t\|^2  / (nT)$ and $\alpha^\sg$ is its consensus estimate after $T_\con$ iterations.
The following hold:
\ben
\item w.p. at least $1 - \exp(- c nT \epsilon_1^2 / \kappa^2 \mu^2)$,
\[
|\alpha - 9 \kappa^2 \mu^2\frac{\|\Xstar\|_F^2}{T} | \le \eps_1 9 \kappa^2 \mu^2\frac{\|\Xstar\|_F^2}{T}
\]

\item  $\max_{g \in [L]}|{\alpha}^\sg - \alpha|   \le  \eps_\con \alpha$

\item Thus, setting $\eps_\con = 0.01$ and $\eps_1 = 0.01$,
\\ w.p. at least $1 - \exp( - c nT  / \kappa^2 \mu^2)$, for all $g \in [L]$,
{\small
\[
\ev^\sg: =  \left\{  8.98 \kappa^2 \mu^2 \frac{\|\Xstar\|_F^2}{T} \le \alpha^\sg \le 9.02 \kappa^2 \mu^2 \frac{\|\Xstar\|_F^2}{T}  \right\}\mbox{~holds.}
\]
} 
\een
%
\end{lemma}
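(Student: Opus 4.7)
The proof has three independent pieces corresponding to the three parts of the statement, and the overall plan is to combine a sub-exponential Bernstein concentration with the scalar-consensus guarantee (Proposition~\ref{avgcons_prop_scalar}).

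For part (1), the plan is to view $9 \kappa^2 \mu^2 \sum_{k,i} \y_\ik^2 / (mq)$ as an empirical average and apply the sub-exponential Bernstein inequality. Since $\A_k$ has iid standard Gaussian entries (Assumption~\ref{iidAk}), each $\y_\ik = \langle \a_\ik, \xstar_k\rangle \sim \mathcal{N}(0,\|\xstar_k\|^2)$, so $\y_\ik^2$ is a sub-exponential random variable with $\E[\y_\ik^2] = \|\xstar_k\|^2$ and sub-exponential norm $\lesssim \|\xstar_k\|^2$. Using the incoherence bound $\|\xstar_k\|^2 \le \mu^2 r \sigmax^2 / q$ (Assumption~\ref{right_incoh2}), the sub-exponential norm of each summand is at most $K \lesssim \mu^2 r \sigmax^2/q$. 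Summing the $mq$ independent terms, applying Bernstein with deviation $t = \eps_1 m \|\Xstar\|_F^2$ and using $\|\Xstar\|_F^2 \ge r\sigmin^2 = r\sigmax^2/\kappa^2$ to simplify the denominator yields the claimed probability bound. The only delicate bookkeeping here is checking that the $t^2/(mqK^2)$ term dominates the $t/K$ term in Bernstein (which it does once $mq$ is at least polynomial in $\kappa,\mu,r$), and that the constants absorb into the generic $c$ in the exponent.

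Part (2) is a direct application of Proposition~\ref{avgcons_prop_scalar}. The scalar input at node $g$ is $\alpha^\sg_\inp = 9 \kappa^2 \mu^2 \sum_{k \in \S_g} \|\y_k\|^2/(mq)$ and the true sum is $\alpha = \sum_g \alpha^\sg_\inp$. By the proposition, with $T_\con \ge (1/\log(1/\gamma(\W)))\log(L/\eps_\con)$, the output satisfies $\max_g |\alpha^\sg - \alpha| \le \eps_\con \max_g |\alpha^\sg_\inp - \alpha|$. Since each $\alpha^\sg_\inp$ is nonnegative and at most $\alpha$, the crude bound $\max_g |\alpha^\sg_\inp - \alpha| \le \alpha$ finishes this part. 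This step requires connectedness of $\calG$ (Assumption~\ref{connectedcalG}) so that $\gamma(\W)<1$.

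For part (3), the plan is to combine (1) and (2) by the triangle inequality. With $\E\alpha := 9\kappa^2 \mu^2 \|\Xstar\|_F^2/q$, parts (1) and (2) give $|\alpha - \E\alpha| \le \eps_1 \E\alpha$ and $|\alpha^\sg - \alpha| \le \eps_\con \alpha \le \eps_\con(1+\eps_1)\E\alpha$, so $|\alpha^\sg - \E\alpha| \le (\eps_1 + \eps_\con + \eps_1\eps_\con)\E\alpha$ for all $g$. Choosing the two accuracy parameters as small numerical constants yields the two-sided interval around $9\kappa^2\mu^2\|\Xstar\|_F^2/q$. The failure probability is that of part (1) alone since part (2) is deterministic given the consensus iterations. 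I do not expect any real obstacle in this proof; the main care needed is just to verify the sub-exponential norm bound via the incoherence assumption so that the exponent in the Bernstein tail scales as $mq$ rather than only $m$, which is what gives the usable concentration in the regime $mq \gtrsim (n+q)r$.
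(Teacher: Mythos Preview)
Your proposal is correct and follows essentially the same route as the paper: sub-exponential Bernstein for part (1) (the paper just cites this as Fact 3.7 of \cite{lrpr_gdmin}), the scalar consensus guarantee with the nonnegativity bound $|\alpha_\inp^\sg - \alpha|\le\alpha$ for part (2), and the triangle inequality for part (3). One small bookkeeping point: to land on the stated exponent $c\,mq\,\eps_1^2/(\kappa^2\mu^2)$ rather than $c\,mq\,\eps_1^2/(\kappa^4\mu^4)$, use the non-uniform norms $K_\ik\lesssim\|\xstar_k\|^2$ and bound $\sum_\ik K_\ik^2\lesssim m\max_k\|\xstar_k\|^2\cdot\|\Xstar\|_F^2$ instead of the cruder $mq\,(\mu^2 r\sigmax^2/q)^2$.
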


\begin{proof}
The first claim is the same as Fact 3.7 of \cite{lrpr_gdmin}. It is a direct consequence of the sub-exponential Bernstein inequality \cite{versh_book}.
The second claim follows by applying Lemma \ref{avgcons_prop} as follows.
Let $C_1 = 9 \kappa^2 \mu^2 / nT$.
The desired sum is $C_1 \sum_g \sum_{t \in \S_g} \|\y_t\|^2$, and the input at node $g$ is $C_1 \sum_{t \in \S_g} \|\y_t\|^2$. Thus the input error at node $g$, $\inperr^\sg = C_1 \sum_{g' \neq g} \sum_{t \in \S_{g'}} \|\y_t\|^2 \le C_1 \sum_{g'} \sum_{t \in \S_{g'}} \|\y_t\|^2 = \alpha$. The inequality follows since all summands are non-negative. Applying Lemma $\ref{avgcons_prop}$, then $\conserr^\sg = |\alpha - \alpha^\sg| \le \eps_\con \max_g |\inperr^\sg| \le \eps_\con \alpha$.
The third claim is an easy consequence of the first two.
\end{proof}

\begin{fact}[\cite{lrpr_gdmin}]
\label{betak_bnd}
		$
 \min_t  \E\left[\zeta^2 \indic_{ \left\{ |\zeta| \leq   \frac{ \sqrt{8.98} \kappa \mu \|\Xstar\|_F }{ \sqrt{q}\|\xstar_{t}\| } \right\} } \right] \geq 0.9
	$
	\end{fact}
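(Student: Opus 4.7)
}

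The plan is to reduce this to a one-dimensional Gaussian tail computation after using the incoherence assumption to lower-bound the truncation threshold by a universal constant. Define the threshold
\[
\tau_k := \frac{\sqrt{8.98}\,\kappa\mu\,\|\Xstar\|_F}{\sqrt{q}\,\|\xstar_k\|}.
\]
The goal is to show $\E[\zeta^2 \indic_{\{|\zeta|\le \tau_k\}}] \ge 0.9$ for every $k$, where $\zeta$ is a standard Gaussian. Since $\E[\zeta^2]=1$, this is equivalent to $\E[\zeta^2 \indic_{\{|\zeta|>\tau_k\}}] \le 0.1$, which will follow from Gaussian concentration once $\tau_k$ is shown to be at least a modest constant.

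First I would bound $\tau_k$ from below by a numerical constant independent of $k$, $n$, $q$, $r$, $\mu$, $\kappa$. By Assumption~\ref{right_incoh2}, $\|\xstar_k\| \le \mu \sqrt{r}\,\sigmax/\sqrt{q}$. On the other hand, writing $\Xstar$ in its reduced SVD gives $\|\Xstar\|_F^2 = \sum_{i=1}^r (\sigma_i^\star)^2 \ge r \,\sigmin^2$, hence $\|\Xstar\|_F \ge \sqrt{r}\,\sigmin$. Plugging both bounds in,
\[
\tau_k \ge \frac{\sqrt{8.98}\,\kappa\mu \cdot \sqrt{r}\,\sigmin}{\sqrt{q} \cdot \mu \sqrt{r}\,\sigmax /\sqrt{q}}
= \sqrt{8.98}\,\frac{\kappa \sigmin}{\sigmax} = \sqrt{8.98},
\]
since $\kappa := \sigmax/\sigmin$. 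Thus uniformly in $k$, $\tau_k \ge \sqrt{8.98} \approx 2.997$.

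Next, I would carry out the Gaussian tail computation. For any $t>0$, integration by parts gives $\int_t^\infty x^2 e^{-x^2/2}\,dx = t\, e^{-t^2/2} + \int_t^\infty e^{-x^2/2}\,dx$, so
\[
\E\bigl[\zeta^2 \indic_{\{|\zeta|>t\}}\bigr] = \frac{2 t\, e^{-t^2/2}}{\sqrt{2\pi}} + 2\,Q(t),
\]
where $Q(t):=\Pr(\zeta>t)$. Evaluating at $t = \sqrt{8.98}$: $e^{-t^2/2}=e^{-4.49}\approx 0.0112$, $2 t e^{-t^2/2}/\sqrt{2\pi} \le 0.027$, and the Mills-ratio bound $Q(t)\le e^{-t^2/2}/(t\sqrt{2\pi})$ gives $2 Q(t) \le 0.003$. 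Summing, $\E[\zeta^2 \indic_{\{|\zeta|>t\}}] \le 0.03 \le 0.1$, so $\E[\zeta^2 \indic_{\{|\zeta|\le \tau_k\}}] \ge 1 - 0.03 \ge 0.9$, uniformly in $k$, which is exactly the claim.

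There is no real obstacle here: the only content is a clean arithmetic check that the incoherence assumption together with $\|\Xstar\|_F \ge \sqrt{r}\,\sigmin$ produces exactly the cancellation of $\mu,\kappa,r,q$ so that $\tau_k$ is a universal constant of order $3$, after which a standard $N(0,1)$ second-moment tail bound delivers the $0.9$ with room to spare.
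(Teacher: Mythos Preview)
Your argument is correct. The paper does not actually supply a proof of this fact; it simply quotes it from \cite{lrpr_gdmin}. Your route---using Assumption~\ref{right_incoh2} together with $\|\Xstar\|_F \ge \sqrt{r}\,\sigmin$ to force the truncation threshold to be at least $\sqrt{8.98}$, and then a direct Gaussian second-moment tail bound---is exactly the intended computation behind the cited fact, and the numerics check out with margin to spare.
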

Combining the above two results, w.p. at least $1 - \exp( - c nT  / \kappa^2 \mu^2)$,
\begin{align}
\min_g \min_{t \in \S_g} \beta_t (\alpha^\sg) \ge  \min_t  \E\left[\zeta^2 \indic_{ \left\{ |\zeta| \leq   \frac{ \sqrt{8.98} \kappa \mu  \|\Xstar\|_F }{ \sqrt{T}\|\xstar_{t}\| } \right\} } \right]  \ge 0.9
\label{minbetak}
\end{align}

The following lemma can be proved exactly as done in \cite{lrpr_gdmin}. It uses the last item of Lemma \ref{alpha_bnds} in its proof. The only difference in the proof now is that the thresholds $\alpha^\sg$ are different for subsets of columns. But, since we condition on $\alpha^\sg$s with $\alpha^\sg \in \ev^\sg$, nothing changes.
\begin{lemma}[\cite{lrpr_gdmin}]  \label{init_bnd}
Fix $0 < \eps_1 < 1$. Then,  w.p. at least $1-\exp ( (d+T)-c\epsilon_1^2 nT/\mu^2\kappa^2 )s$, conditioned on $\alpha^\sg$ s, for  $\alpha^\sg \in \ev^\sg$,
		\[
		\|\Xhat_{0} -\E[\Xhat_{0}]\| \leq 1.1 \eps_1 \|\Xstar\|_F
		\]
\end{lemma}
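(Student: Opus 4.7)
\textbf{Proof plan for Lemma~\ref{init_bnd}.} The strategy is a standard variational / epsilon-net + sub-exponential Bernstein argument, essentially copying the centralized proof in \cite{lrpr_gdmin}; the only new bookkeeping is that different columns now use different (but close) thresholds $\alpha^\sg$. Throughout, I would condition on $\{\alpha^\sg\}_{g\in[L]}$ lying in the good event $\bigcap_g \ev^\sg$; by Lemma~\ref{alpha_bnds} this happens w.p.\ at least $1-\exp(-cmq/\kappa^2\mu^2)$, and on this event each $\alpha^\sg$ satisfies $\alpha^\sg \asymp \kappa^2\mu^2\|\Xstar\|_F^2/q$. Under this conditioning the $\a_\ik$'s are still i.i.d.\ standard Gaussian and independent of the thresholds.

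\textbf{Step 1: variational form.} Using the notation of Sec.~\ref{algo_dev},
\[
\Xhat_0 - \E[\Xhat_0] \;=\; \frac{1}{m}\sum_{g\in[L]}\sum_{k\in\S_g}\sum_{i=1}^m \Big(\a_\ik\, \y_\ik\,\indic_{\{\y_\ik^2 \le \alpha^\sg\}} - \E[\cdot]\Big)\,\e_k^\top.
\]
For fixed unit vectors $\w\in\R^n,\z\in\R^q$, define
\[
Z_{g,k,i}(\w,\z) := \tfrac{1}{m}\big( (\w^\top \a_\ik)\, \y_\ik \indic_{\{\y_\ik^2\le\alpha^\sg\}} - \E[\cdot]\big)\, z_k,
\]
and let $S(\w,\z)=\sum_{g,k,i} Z_{g,k,i}(\w,\z)= \w^\top(\Xhat_0-\E[\Xhat_0])\z$. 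The $Z_{g,k,i}$'s are mutually independent (across $g,k,i$) because the $\A_k$'s are independent and $\alpha^\sg$ depends only on measurements not involving $\A^{(0)}_k$ (sample-splitting, line~4 of Algorithm~\ref{altgdmin_dec}, ensures the $\A_k$'s used in $\Xhat_0$ are independent of those used to form $\alpha^\sg$).

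\textbf{Step 2: sub-exponential bound on each summand.} On the good event $\ev^\sg$, $|\y_\ik\indic_{\{\y_\ik^2\le\alpha^\sg\}}|\le \sqrt{\alpha^\sg} \lesssim \kappa\mu\|\Xstar\|_F/\sqrt q$. Since $\w^\top\a_\ik$ is standard Gaussian, the product $(\w^\top\a_\ik)\y_\ik\indic_{\{\cdot\}}$ is sub-exponential with $\|\cdot\|_{\psi_1} \lesssim \kappa\mu\|\Xstar\|_F/\sqrt q$. Hence $\|Z_{g,k,i}\|_{\psi_1}\lesssim |z_k|\kappa\mu\|\Xstar\|_F/(m\sqrt q)$. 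Summing variances over $i,k,g$ using $\sum_k z_k^2=1$ gives a total sub-exponential variance parameter of order $\kappa^2\mu^2\|\Xstar\|_F^2/(mq)$, and a maximum coefficient of order $\kappa\mu\|\Xstar\|_F/(m\sqrt q)$.

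\textbf{Step 3: Bernstein + net argument.} Applying the sub-exponential Bernstein inequality to $S(\w,\z)$ with deviation $t=\eps_1\|\Xstar\|_F$ yields, for each fixed $(\w,\z)$,
\[
\Pr\!\left[|S(\w,\z)| > \eps_1 \|\Xstar\|_F\right] \;\le\; \exp\!\big(-c\, \eps_1^2 mq / (\kappa^2\mu^2)\big),
\]
provided $\eps_1\le 1$ (so the Gaussian tail dominates the exponential tail). Now take $(1/4)$-nets $\mathcal{N}_n,\mathcal{N}_q$ of the unit spheres in $\R^n,\R^q$; $|\mathcal{N}_n||\mathcal{N}_q|\le 9^{n+q}$. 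A union bound and the standard net-to-operator-norm conversion $\|\M\|\le 2\max_{\w\in\mathcal{N}_n,\z\in\mathcal{N}_q}|\w^\top\M\z|$ give
\[
\|\Xhat_0-\E[\Xhat_0]\|\le 2\cdot 0.55\,\eps_1\|\Xstar\|_F = 1.1\,\eps_1\|\Xstar\|_F
\]
w.p.\ at least $1-\exp\big((n+q)\log 9 - c\eps_1^2 mq/(\kappa^2\mu^2)\big)=1-\exp\big((n+q)-c'\eps_1^2 mq/(\kappa^2\mu^2)\big)$, after redefining constants.

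\textbf{Main obstacle / subtleties.} Two small items need care: (i) Ensuring true independence of the $\A_k$'s used in $\Xhat_0$ from the thresholds $\alpha^\sg$ — this is handled by the sample-splitting step in Algorithm~\ref{altgdmin_dec}, so on the good event the conditioning $\{\alpha^\sg\in\ev^\sg\}$ does not affect the distribution of the relevant $\a_\ik$'s. (ii) The presence of distinct thresholds $\alpha^\sg$ for different column groups, which is the only departure from \cite{lrpr_gdmin}; but because $\ev^\sg$ pins every $\alpha^\sg$ to a uniform $\Theta(\kappa^2\mu^2\|\Xstar\|_F^2/q)$ range, the sub-exponential parameters in Step 2 are uniform over $g$, so the proof of \cite{lrpr_gdmin} goes through verbatim.
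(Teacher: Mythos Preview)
Your proposal is correct and matches the paper's approach. The paper does not give a self-contained proof here; it simply states that the argument is identical to that in \cite{lrpr_gdmin} (sub-exponential Bernstein on the truncated summands followed by an $\eps$-net over unit $\w\in\Re^n$, $\z\in\Re^q$), with the sole modification being the node-dependent thresholds $\alpha^\sg$, which is neutralized by conditioning on $\alpha^\sg\in\ev^\sg$---exactly the two points you flag in your ``subtleties'' paragraph.
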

By setting $\eps_1 = \eps_0/ 1.1 \sqrt{r} \kappa$ in the above lemma, and using Lemma \ref{alpha_bnds} (last claim) to remove the conditioning on $\alpha^\sg$s, we can conclude that
\\ $\|\Xhat_{0} -\E[\Xhat_{0}]\| \leq \eps_0 \sigmin$ w.p. at least
\\ $1 - \exp ( (d+T)-c\epsilon_1^2nT/\mu^2\kappa^2 ) - \exp(\log L - c nT  / \kappa^2 \mu^2)$.

This fact, Lemma \ref{EX0}, and  Eq.~\eqref{minbetak} constitute  Corollary \ref{init_bnd_cor} in the main paper.

\section{Proof of Lemma \ref{grad_bnd_new2}}\label{grad_bnd_new2_proof}
\begin{proof}
Assume everything given  in Theorem~\ref{min_step_claim} holds.
 Using bounds on $\|\B\|$ and  $\|\Xstar - \X\|_F$ from  Theorem~\ref{min_step_claim}, if $\delta_\t < \frac{c}{\sqrt{r} \kappa}$,
\begin{align}
\| \E[\gradU]\|   & =  \| \sum_t n (\x_t-{\xstar_t})\b_t{}^\top \|  \nn\\
&\le n\|\X - \Xstar\| \cdot \| \B\|  \nn\\
&  \le n\| \X -\Xstar\|_F \cdot \| \B\|  \le 1.1 n\delta_\t \sqrt{r}  \sigmax^2. \nn
\end{align}

Next, we bound
$\| \gradU-\E[\gradU]\|
 = \max_{\|\w\|=1, \|\z\|=1}  \w^\top ( \sum_t \sum_i \a_\ik \a_\ik^{\top}(\x_t-\xstar_t) \b_t^\top  - \E[\cdot]) \z
 $
We bound the above for fixed unit norm $\w,\z$ using sub-exponential Bernstein inequality (Theorem 2.8.1 of \cite{versh_book}) presented in Proposition~\ref{proposition1}, followed by a standard epsilon-net argument.
Consider the following for fixed unit norm $\w,\z$
\[
\sum_t \sum_i (\w^\top \a_\ik)  (\b_t^\top \z) \a_\ik^{\top}(\x_t-\xstar_t)- \E[\cdot])
\]
Observe that the summands are independent, zero mean, sub-exponential r.v.s with sub-exponential norm $K_\ik \le C \| \w \| |\z^\top \b_t| \| \x_t-\xstar_t\| =C |\z^\top \b_t| \| \x_t-\xstar_t\| $.
We apply the sub-exponential Bernstein inequality, Proposition~\ref{proposition1}, with $k = \eps_1 \delta_\t n  \sigmin^2$. We have
\begin{align*}
\frac{k^2}{\sum_\ik K_\ik^2}  
& \ge \frac{\eps_1^2\delta_\t^2 n^2  \sigmin^4}{ n  \sum_t \|\x_t -\xstar_t\|^2 (\z^\top \b_t)^2   }\\
& \ge \frac{\eps_1^2\delta_\t^2 n^2  \sigmin^4}{ n \max_t \|\x_t-\xstar_t\|^2 \sum_{t} (\z^\top \b_t)^2 }\\
& = \frac{\eps_1^2\delta_\t^2 n^2  \sigmin^4}{n \max_t \|\x_t-\xstar_t\|^2 \|\z^\top \B\|^2 }\\
& \ge \frac{\eps_1^2\delta_\t^2 n^2  \sigmin^4 T }{ 1.4^2 n \mu^2 \delta_\t^2 r \sigmax^2 \|\B\|^2 }\\
& \ge \frac{\eps_1^2\delta_\t^2 n^2  \sigmin^4 T }{ 1.4^2 n \mu^2 \delta_\t^2 r \sigmax^2 1.1 \sigmax^2 } =  c \frac{\eps_1^2 n  T }{ \kappa^4 \mu^2  r }\\
\frac{\t}{\max_\ik K_\ik} & \ge  \frac{\eps_1\delta_\t n  \sigmin^2 }{\max_t \| \b_t\| \max_t \|\x_t - \xstar_t\| }
\ge c \frac{\eps_1 n  T }{\kappa^2 \mu^2  r}
\end{align*}
In the above, we used
(i) $\sum_t (\z^\top \b_t)^2 = \|\z^\top \B\|^2 \le \|\B\|^2$  since $\z$ is unit norm,
(ii) Theorem \ref{min_step_claim} to bound $ \|\B\| \le 1.1 \sigmax$, and
(iii) Theorem \ref{min_step_claim} followed by Assumption \ref{right_incoh2} (right incoherence) to bound $\|\x_t-\xstar_t\| \le \delta_\t \cdot \mu \sigmax \sqrt{r/T}$ and $|\z^\top \b_t| \le \|\b_t\| \le 1.1 \|\bstar_t\| \le 1.1 \mu \sigmax \sqrt{r/T}$.

For $\eps_1 < 1$, the first term above is smaller (since $1/\kappa^4 \le 1/\kappa^2$), i.e., $\min(\frac{\t^2}{\sum_\ik K_\ik^2},\frac{\t}{\max_\ik K_\ik} ) =  c\frac{\eps_1^2 n T }{\kappa^4 \mu^2 r}.$
 Thus, by sub-exponential Bernstein,  w.p. at least  $1-\exp (- c\frac{\eps_1^2 n T }{\kappa^4 \mu^2 r} )$ 
\[
	 \w^\top(\gradU - \E[\gradU] ) \z \le  \eps_1 \delta_\t n \sigmin^2
\]
Using a standard epsilon-net argument to bound the maximum of the above over all unit norm $\w,\z$, we can conclude that
\[
	 \|\gradU - \E[\gradU] \| \le  1.1 \eps_1 \delta_\t n \sigmin^2
\]
w.p. at least  $1-  \exp ( C(d+r) -c\frac{\eps_1^2 n T }{\kappa^4 \mu^2 r} )$.
The factor of $ \exp(d+r)$ is due to the epsilon-net over $\w$ which is an $d$-length unit norm vector (and thus the epsilon net covering it is of size $(1+ 2/\eps_{net})^d = C^d$ with $\eps_{net}=c$) and the epsilon-net over $\z$ (and thus the epsilon net covering it is of size $C^r$). Union bound over both thus gives a factor of $C^{dr}$.  Since Theorem~\ref{min_step_claim} holds w.p. $1-\exp(\log T+r-cn)$, by union bound all the above claims hold w.p.  $1-\exp ( C(d+r) -c\frac{\eps_1^2 n T }{\kappa^4 \mu^2 r} )-\exp(\log T+r-cn)$.

By replacing $\eps_1$ by $\eps_1/1.1$, our bound becomes simpler (and $1/1.1^2$ gets incorporated into the factor $c$ in the exponent). 
We have thus proved Lemma \ref{grad_bnd_new2}.
 \end{proof}
\end{document}